\documentclass[accepted=2019-03-29]{quantumarticle}
\usepackage{hyperref}

\usepackage{amsfonts}
\usepackage{amsthm}
\usepackage{amsmath}

\newtheorem{theorem}{Theorem}

\newtheorem{lemma}[theorem]{Lemma}
\newtheorem{definition}[theorem]{Definition}
\newtheorem{proposition}[theorem]{Proposition}

\newenvironment{proofSketch}{%
  \proof}{\endproof}

\usepackage{amsmath}
\DeclareMathOperator{\sign}{sign}

\usepackage{booktabs}
\usepackage{afterpage}

\usepackage[utf8]{inputenc}
\usepackage[T1]{fontenc}
\usepackage[x11names]{xcolor}
\usepackage{array, colortbl}

\newcommand{\lam}[2]{\lambda_{#2}^{#1}}
\newcommand{\lamp}[2]{\tilde{\lambda}_{#2}^{#1}}

\usepackage{graphicx}

\newcommand{\ket}[1]{|#1\rangle}
\newcommand{\bra}[1]{\langle #1 |}

\newcommand{\zmatrix}{symmetric Z-matrix}
\newcommand{\sPi}{\mathsf{\Pi}}
\newcommand{\lowrank}{rank-1}
\newcommand{\highrank}{rank>1}
\newcommand{\basegraph}{weighted interaction graph}
\newcommand{\quotientGraph}{ \lowrank\ quotient graph}

\usepackage{wrapfig}

\setcounter{bottomnumber}{2}
\begin{document}

\title{Two-local qubit Hamiltonians: when are they stoquastic?}

\author{Joel Klassen}
\affiliation{QuTech, Delft University of Technology, P.O. Box 5046, 2600 GA Delft, The Netherlands}
\author{Barbara M. Terhal}
\affiliation{QuTech, Delft University of Technology, P.O. Box 5046, 2600 GA Delft, The Netherlands}
\affiliation{Institute for Theoretical Nanoelectronics,
Forschungszentrum Juelich, D-52425 Juelich, Germany}

\maketitle

\begin{abstract}
We examine the problem of determining if a 2-local Hamiltonian is stoquastic by local basis changes. We analyze this problem for two-qubit Hamiltonians, presenting some basic tools and giving a concrete example where using unitaries beyond Clifford rotations is required in order to decide stoquasticity. We report on simple results for $n$-qubit Hamiltonians with identical 2-local terms on bipartite graphs. Our most significant result is that we give an efficient algorithm to determine whether an arbitrary $n$-qubit XYZ Heisenberg Hamiltonian is stoquastic by local basis changes. 
\end{abstract}

\section{Introduction and Motivation}

The notion of stoquastic Hamiltonians was introduced in \cite{Bravyi2007} in the context of quantum complexity theory. The definition of stoquastic Hamiltonians aims to capture Hamiltonians which do not suffer from the sign problem: the definition ensures that the ground state of the Hamiltonian has nonnegative amplitudes in some local basis while the matrix $\exp(-H/kT)$ is an entrywise nonnegative matrix for any temperature $T$ in this basis. Stoquastic Hamiltonians are quite ubiquitous. Any (mechanical) Hamiltonian defined with conjugate variables $[\hat{x}_i,\hat{p}_j]=i\hbar \delta_{ij}$ of the form
\begin{equation}
H=\frac{1}{2} {\vec{p}}^{\mathsf{T}} C^{-1} \vec{p} +U(\vec{x}).
\label{eq:mechH}
\end{equation}
with invertible (mass) matrix $C> 0$, is stoquastic, since the kinetic term is off-diagonal in the position $\vec{x}$-basis. The paradigm of circuit-QED produces such Hamiltonians by the canonical quantization of an electric circuit represented by its Lagrangian density. 

For the subclass of stoquastic Hamiltonians the problem of determining the lowest eigenvalue might be easier than for general Hamiltonians. In complexity terms the stoquastic lowest-eigenvalue problem is StoqMA-complete while it is QMA-complete for general Hamiltonians. 

This difference in complexity goes hand-in-hand with the existence of heuristic computational methods for determining the lowest eigenvalue of stoquastic Hamiltonians using Monte Carlo techniques \cite{CM:projection, Sorella2000}. 
These methods can be viewed as stochastic realizations of the power method, namely a stochastic efficient implementation of the repeated application of the nonnegative matrix $\exp(-H)$. However, there is no proof that these heuristic methods are as efficient as quantum phase estimation for estimating ground-state energies (some results are obtained in \cite{Bravyi:guiding}).

Beyond the lowest eigenvalue problem, one can also consider the complexity of evaluating the partition function $Z={\rm Tr} \left[ \exp(-H/kT)\right]$ of a stoquastic Hamiltonian. The nonnegativity of $\exp(-H/kT)$ directly ensures that one can rewrite $Z$ as the partition function of a classical model in one dimension higher, which can then be sampled using classical stochastic techniques. Again, rigorous results on this well-known path-integral quantum Monte Carlo method (see e.g. \cite{wessel:lecture}) are sparse: \cite{thesis:crosson} shows how to estimate the partition function with $kT=\Omega(1/\log n)$ for 1D stoquastic Hamiltonians using a rigorous analysis of the path-integral Monte Carlo method.
In \cite{Bravyi2016} the authors gave a fully-polynomial randomized approximating scheme for estimating the partition function of stoquastic XY models (in a transverse magnetic field). 

Stoquastic Hamiltonians play an important role in adiabatic computation \cite{AL:adiabatic}. It has been proven that there is no intrinsic quantum speed-up in stoquastic adiabatic computation which uses frustration-free stoquastic Hamiltonians: a classical polynomial-time algorithm for simulating such adiabatic computation was described in \cite{Bravyi2008}. The quantum power of the well-known quantum annealing method which uses the transverse field Ising model is still an intriguing open question, given the road-block cases that have been thrown up for the path-integral Quantum Monte Carlo method in \cite{Hastings} as well as the diffusion Monte Carlo method in \cite{bringewatt:diff}.
However, given that stoquastic adiabatic computation such as quantum annealing is amenable to heuristic Monte Carlo methods, research has also moved into the direction of finding ways of engineering non-stoquastic Hamiltonians (see e.g. \cite{kafri:nonstoq}, \cite{hormozi:anneal}, \cite{samach:talk}).

If it is true that questions pertaining to stoquastic Hamiltonians are easier to answer than those pertaining to general Hamiltonians, then it is clearly important to be to able to recognize {\em which Hamiltonians are stoquastic}.

Work aimed at recognizing where and when a sign problem does not exist, is not new, but few systematic approaches to this problem exist. Most of the work has been focused on understanding the sign problem for specific classes of Hamiltonians of physical interest, for example Fermi-Hubbard models \cite{IKS:sign}. An example of a recent exploration relating stoquasticity to time-reversal for fermionic systems is \cite{WZ:sign} (see \cite{LY:review} for a recent review). \cite{BF:MPrule} has attempted to generalize the so-called Peierls-Marshall sign rules for isotropic spin-1/2 Heisenberg models to more general XYZ Hamiltonians, falling short of formulating any algorithmic approach. More recently, \cite{Marvian2018} found the first negative algorithmic results. They showed that for 3-local qubit Hamiltonians determining whether single-qubit Clifford rotations can indeed "cure the sign problem" is NP-complete (and similarly for orthogonal rotations).

In this paper we report on our first results in this direction (see e.g. \cite{terhal:talk}): they are summarized in Section \ref{sec:summary}. Our main result is an efficient algorithm to determine whether an XYZ Hamiltonian is stoquastic. 

The definition of stoquasticity, see Definition 2, is primarily motivated by computational efficiency.
A local change of basis can be efficiently executed in that the description of the $k$-local Hamiltonian in this new basis can be easily obtained. It is however clear that the subclass of single-qubit unitary transformations are only the simplest example of a mapping or encoding of one Hamiltonian $H$ to another Hamiltonian $H_{\rm sim}$ as has been formalized in \cite{CMP:universal}. 
Such a more general mapping could allow: constant-depth circuits; ancilla qubits; perturbative gadgets and approximations such that only $\lambda(H) \approx \lambda(H_{\rm sim})$ or $Z_H \approx c Z_{H_{\rm sim}}$ for some known constant $c$. The goal of such an encoding is to map the original Hamiltonian $H$ onto a simulator Hamiltonian $H_{\rm sim}$ where $H_{\rm sim}$ is stoquastic and its low-energy dynamics effectively capture those of $H$. The upshot is that the encoded Hamiltonian $H_{\rm sim}$ is directly computationally useful. If for a class of Hamiltonians $H$, simulator Hamiltonians $H_{\rm sim}$ can be found, then the ground-state energy problem or the adiabatic computation of $H$ is amenable to quantum Monte Carlo techniques and its complexity is in StoqMA. Additionally, the definition of encoding employed in \cite{CMP:universal} ensures that the partition function of the simulator Hamiltonian can be used to estimate that of the original Hamiltonian. An example is the mapping of any $k$-local Hamiltonian onto a real $k+1$-local Hamiltonian \cite{CMP:universal}. 

Surprisingly, there are no known no-go's for the use of perturbative gadgets to map seemingly non-stoquastic Hamiltonians onto stoquastic simulator Hamiltonians. Even though mapping all Hamiltonians onto stoquastic Hamiltonians seems to be excluded from a computational complexity perspective, it is possible that there are Hamiltonians which are not stoquastic by a local basis change but for which a stoquastic simulator Hamiltonian exists. 

\subsection{Definitions}

To make contact with some of the mathematics literature, we will use some standard terminology:

\begin{definition}[Symmetric $Z$-matrix]
A matrix is a symmetric $Z$-matrix if all of its matrix elements are real, it is symmetric, and its off-diagonal elements are non-positive.
\label{def:Z}
\end{definition}

Note that for a symmetric $Z$-matrix $H$ the matrix exponential $\exp(-H/kT)$ is entrywise non-negative for all $kT\geq 0$ \footnote{It is worthwhile noting that the entrywise non-negativity of the matrix $\exp(-H/kT)$ for a given temperature $T$ does not imply that $H$ is a symmetric $Z$-matrix. It is simple to generate numerical $4 \times 4$ cases where $G=A^T A$ is a non-negative matrix by taking $A$ as a non-negative matrix, but $H=-\log G$ is not a symmetric $Z$-matrix. Thus when $H$ is a symmetric $Z$-matrix, the non-negativity of $\exp(-H/kT)$ is guaranteed for all values of $kT$, while in other cases $\exp(-H/kT)$ might be non-negative only for sufficiently small values of $kT$.}.  

We reproduce the definition of a Hamiltonian being termwise stoquastic from \cite{Bravyi2007} \footnote{\cite{Bravyi2007} points out that for 2-local Hamiltonians one can prove that termwise stoquastic is the same as stoquastic. This is however not true for $k> 2$-local Hamiltonians generally.}. We will refer to it as stoquastic for simplicity.

\begin{definition} [Stoquastic \cite{Bravyi2007}]
A k-local Hamiltonian H is stoquastic if there exists a local basis $\mathcal{B}$ and a decomposition $H = \sum_i D_i$ such that each $D_i$ is a symmetric Z-matrix in the basis  $\mathcal{B}$ and acts non-trivially on at most k qudits.
\label{def:stoq}
\end{definition}

Throughout the text the terms local basis, local change of basis, local unitary, or local Clifford, are all 1-local in the sense that they preserve the 1-local tensor product structure of the Hilbert space.

Given the discussion in the introduction and following \cite{Marvian2018}, we also introduce the following definition (loosely stated):

\begin{definition}
A family of Hamiltonians $\{H\}$ on $n$ qudits is said to be \textbf{computationally stoquastic} if there is a polynomial-time algorithm for (approximately) mapping $\{H\}$ onto a family of stoquastic simulator Hamiltonians $\{H_{\rm sim}\}$. The description of $H_{\rm sim}$ in terms of its entries should be efficiently given. 
\end{definition}

Another aspect of stoquasticity is the complexity of determining whether $H$ is (computationally) stoquastic. The question is only clearly formulated when we restrict ourselves to specific mappings. For example, one can ask: is there an efficient algorithm for determining whether a 2-local Hamiltonian on $n$ qubits is stoquastic in a basis obtained by single-qubit Clifford rotations. In this light, it was shown in \cite{Marvian2018} that for 3-local qubit Hamiltonians determining whether single-qubit Clifford rotations can indeed "cure the sign problem" is NP-complete (similarly for orthogonal rotations). For two-local Hamiltonians the question how hard it is to decide whether a Hamiltonian is stoquastic by local basis changes is entirely open. 
The next section reviews our main results on this matter.

\subsection{Summary of Results}
\label{sec:summary}

In Section \ref{twoQubit} we consider, as a warm-up, the two-qubit Hamiltonian. We introduce some concepts which we will make use of in the multi-qubit setting, and we analyze under what conditions a two-qubit Hamiltonian can be transformed into a $Z$-matrix by a local basis change. This turns out to be surprisingly complex. In Theorem \ref{theo:real} we show that one can determine whether a 2-qubit Hamiltonian is real under local basis changes by inspecting a subset of non-local invariants. This thus gives a simple necessary condition to check if one wants to decide whether a two-qubit Hamiltonian is stoquastic.
 In Section \ref{cliffordInsufficient} we establish that basis changes beyond single-qubit Clifford rotations are, perhaps unsurprisingly, strictly stronger than Clifford rotations. For the general two-qubit case we show that a two-qubit Hamiltonian $H$ is stoquastic iff one can find a solution to a set of degree-4 polynomials in two variables (the polynomials are quadratic in each variable), with details in Appendix \ref{inequalityAppendix}.

In Section \ref{polyAlgoXYZ} we move to the problem of determining if an $n$-qubit 2-local Hamiltonian is stoquastic. The challenge in this case is twofold. Recalling Definition \ref{def:stoq} one must find an appropriate {\em decomposition} into 2-qubit terms $D_i$. Additionally one must find a consistent assignment of rotations to bring the two-qubit terms into $Z$-matrix form. The interplay of these two tasks seems to make the problem quite difficult. As such, we do not try to solve the most general case. Instead we consider tractable cases of this problem. The first case we consider is deciding if a multi-qubit 2-local Hamiltonian admits a decomposition into 2-qubit terms each of which are $Z$-matrices. This case is tractable because we are ignoring any local unitary transformations. A more general case of this problem is also discussed in \cite{Marvian2018}, but here we specifically discuss the two-local case. Next we consider a family of natural Hamiltonians, which have identical two-local terms on a bipartite interaction graph. We prove that in this case the problem of finding a set of one-local unitary rotations in concert with finding an appropriate decomposition is tractable, due to the special structure of the Hamiltonian.

The final case we consider is a subclass of Hamiltonians, called XYZ Heisenberg Hamiltonians, with an arbitrary interaction graph. Here we present an efficient algorithm to decide if a given Hamiltonian  of this type is stoquastic, and to construct the unitary rotation which transforms the Hamiltonian into a $Z$-matrix. This constitutes the most significant result of this paper. For such Hamiltonians, finding a decomposition is trivial, because all 1-local terms are zero. This leaves us only needing to consider how to apply one-local unitary rotations. The Hamiltonian of the XYZ Heisenberg model for $n$ qubits is $H=\sum_{u,v} H_{uv}$ for $u,v=1,\ldots,n$ where
\begin{equation}
 H_{uv} = a_{XX}^{uv} X_u X_v+ a_{YY}^{uv} Y_u Y_v+a_{ZZ}^{uv} Z_u Z_v.
\end{equation}
Here the coefficients $a^{uv}_{PP}$, which can be different for each $uv$, are specified with some $k$ bits.  The theorem that we prove in Section \ref{sec:proof-XYZ} is:

\begin{theorem}
There is an efficient algorithm that runs in time $O(n^3)$ to decide whether an $n$-qubit ${\rm XYZ}$ Heisenberg model Hamiltonian is stoquastic. The algorithm finds the local basis change such that the resulting $H$ is a symmetric $Z$-matrix or decides that it does not exist.
\label{thm:eff-stoq}
\end{theorem}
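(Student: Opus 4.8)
The plan is to turn the question into a combinatorial problem about single-qubit rotations and then solve that problem efficiently.

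\emph{Reduction to a per-edge condition.} First I would observe that, since every $H_{uv}$ is traceless in each tensor factor, the Pauli terms $X_uX_v,Y_uY_v,Z_uZ_v$ flip either zero or exactly two of the qubits $u,v$ and none of the others. Consequently a given off-diagonal matrix element $\langle x|H|y\rangle$ (with $x,y$ computational basis strings differing on a set $S$ of qubits) receives a contribution \emph{only} from the edge $(u,v)$ with $S=\{u,v\}$, and that contribution is precisely the corresponding off-diagonal element of $H_{uv}$. Hence $H$ is a symmetric $Z$-matrix iff each $H_{uv}$ is a symmetric $Z$-matrix, and the same holds after conjugating by a local unitary $\bigotimes_w U_w$, edge by edge. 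Together with Definition \ref{def:stoq} this shows: $H$ is stoquastic iff there exist single-qubit unitaries $U_w$ --- equivalently rotations $R_w\in SO(3)$ acting on the Pauli vector $(X_w,Y_w,Z_w)$ --- such that the rotated term on every edge is a symmetric $Z$-matrix.

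\emph{Single-edge characterization.} Encode $H_{uv}$ by the (already diagonal) matrix of Pauli coefficients $D_{uv}=\mathrm{diag}(a^{uv}_{XX},a^{uv}_{YY},a^{uv}_{ZZ})$; after the rotations the edge term has coefficient matrix $M=R_u D_{uv} R_v^{\mathsf T}$. Writing out the $4\times4$ matrix of $\sum_{ij}M_{ij}\,\sigma^i_u\sigma^j_v$ shows it is a symmetric $Z$-matrix iff $M$ is \emph{diagonal}, $M=\mathrm{diag}(m_x,m_y,m_z)$, with $m_x\le-|m_y|$ (the coefficient $m_z$ of $Z_uZ_v$ sits on the diagonal and is free; in particular a surviving $X_uX_v$ needs a nonpositive coefficient and a surviving $Y_uY_v$ is never allowed). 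The requirement that $R_uD_{uv}R_v^{\mathsf T}$ be diagonal is an SVD-alignment condition: $R_u$ must conjugate $D_{uv}D_{uv}^{\mathsf T}=\mathrm{diag}((a^{uv}_{XX})^2,(a^{uv}_{YY})^2,(a^{uv}_{ZZ})^2)$ to a diagonal matrix (and likewise $R_v$), so when the three magnitudes $|a^{uv}_{PP}|$ are pairwise distinct both $R_u$ and $R_v$ are forced to be signed permutation matrices implementing the \emph{same} underlying permutation, whereas a coincidence of two (resp.\ three) magnitudes leaves an extra circle of freedom (resp.\ all of $SO(3)$) inside the degenerate eigenspace.

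\emph{The rigid case.} On a connected component whose edges all have three pairwise-distinct coefficient magnitudes, the previous step forces every $R_w$ to be a signed permutation and all of them to share a single underlying permutation $\pi$. There are only three inequivalent choices for $\pi$ (it selects which Pauli type is routed to the diagonal $ZZ$ slot); for each, and for each of the two assignments of the remaining two types to the $XX$- and $YY$-slots, the inequality $m_x\le-|m_y|$ splits into a magnitude condition --- namely that the chosen ``$XX$'' coefficient dominates the chosen ``$YY$'' coefficient in absolute value on \emph{every} edge, which one checks directly --- and a sign condition $\epsilon^u_P\epsilon^v_P=-\sign(a^{uv}_{PP})$ for all edges $(u,v)$, where $P$ is the designated dominating axis and $\epsilon^w_P\in\{\pm1\}$ records the sign that $R_w$ puts on that axis. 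The sign condition is a $\mathbb{Z}_2$-gauge (graph-balance) problem: pick the sign at a root of each component, propagate along a spanning tree, and accept iff every non-tree edge is consistent; the untouched signs on the other two axes are then chosen to satisfy $\det R_w=1$. Reading off the accepted $\pi$ and signs yields the required local basis change.

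\emph{Degenerate components and the main obstacle.} The technical heart of the proof is handling edges with coinciding coefficient magnitudes, where the rotation freedom exceeds the Clifford group. The structural claim to establish is that this extra freedom is always either pinned down by an incident rigid edge --- in which case the component again reduces to a signed-permutation analysis, but with a correspondingly relaxed per-edge inequality that must be re-derived --- or else an entire component is partially isotropic (two magnitudes equal throughout, e.g.\ $a^{uv}_{XX}=a^{uv}_{YY}$ on all edges) or fully isotropic (the Heisenberg case $a^{uv}_{XX}=a^{uv}_{YY}=a^{uv}_{ZZ}$ on all edges). Each of these geometries needs its own short argument: in the fully isotropic case $M=R_uR_v^{\mathsf T}$ up to a scalar, so the constraint becomes that each $R_uR_v^{\mathsf T}$ lie in a small prescribed set of signed permutations, which is again a sign-consistency check along the graph (here a frustrated loop such as the antiferromagnetic triangle is correctly rejected); the partially isotropic case is intermediate and requires tracking the residual planar rotations along a spanning tree. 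The care needed is in enumerating these cases, verifying that no further freedom is ever gained by leaving the ``diagonal coefficient matrix'' family, and in the bookkeeping of the rotations; once this is done, every branch of the algorithm is a spanning-tree/consistency computation on the interaction graph (which has up to $O(n^2)$ edges) together with elementary linear algebra, and a careful implementation runs in $O(n^3)$.
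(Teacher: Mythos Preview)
Your per-edge reduction and single-edge characterization are correct and match the paper. Your rigid case is also fine \emph{when every edge in the component has three pairwise-distinct magnitudes}. The gap is in what you call the ``technical heart'': the degenerate edges. Two concrete issues.

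First, your structural dichotomy (``pinned by an incident rigid edge, or the whole component is partially/fully isotropic'') does not cover the dominant situation. A rank-$1$ edge (two coefficients zero) is degenerate in your sense, and such edges typically \emph{connect} otherwise rigid clusters. On a rank-$1$ edge the two endpoint permutations need not agree; they only need to send the single nonzero axis to the same slot (and not to the $YY$-slot). So the problem is not ``pick one $\pi$ per connected component and try six'' but ``pick a permutation for each rigid cluster, subject to pairwise compatibility constraints imposed by the rank-$1$ bridges''. Naively this is $6^{k}$ choices for $k$ clusters. The paper handles this by quotienting out the rank-$>\!1$ connected components, labelling the surviving rank-$1$ edges by their nonzero axis, and showing that the resulting compatibility problem reduces (after a further ``heterogeneous'' quotient) to an exact-satisfaction $\mathbb{Z}_2$/Ising instance with at most two solutions. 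Your sketch has no analogue of this reduction, and without it the degenerate case is not polynomial.

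Second, your remark that one must ``verify that no further freedom is ever gained by leaving the diagonal coefficient matrix family'' is exactly the key lemma, and it is not a bookkeeping detail. The paper proves (Lemmas~\ref{cliffDiag} and~\ref{cliffordsSuffice}) that whenever a family $\{O_u\}\subset SO(3)$ keeps every $\beta^{uv}$ diagonal, there is a family of \emph{signed permutations} achieving the identical output; the proof goes through a nontrivial consistent map $O\mapsto\sPi(O)$ (choosing the lexicographically first nonzero monomial in $\det O$). This lemma is what lets the paper dispense with your ``residual planar rotations'' entirely and work over the finite Clifford group from the outset. It also replaces your magnitude-degeneracy case split by the coarser and more useful rank-$1$ versus rank-$>\!1$ split: once one is restricted to signed permutations, any rank-$>\!1$ edge (even with equal magnitudes) already forces the endpoint permutations to coincide, which is what makes the quotient-graph construction work.

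In short: your framework is right, but the proof is not complete without (i) the Cliffords-suffice lemma and (ii) the quotient-graph/Ising reduction for the permutation-compatibility constraints across rank-$1$ edges. The paper supplies both.
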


This result relies critically on Lemmas \ref{cliffordsSuffice} and \ref{cliffDiag} (proven in Appendix \ref{permsandrefs}) which show that when an XYZ Heisenberg model is stoquastic, it can always be transformed into a $Z$-matrix by single-qubit Clifford rotations. Clifford rotations are significant here because single-qubit Clifford rotations are a finite group, and thus dramatically simplify the analysis.

Going beyond the XYZ Heisenberg Hamiltonian, one can ask about the complexity of deciding stoquasticity by local Clifford rotations. Even though this is not the most general class of local basis changes, see for example Equation \eqref{counter} in Section \ref{twoQubit}, our algorithm for the XYZ Hamiltonian suggests progress could most readily be made in this direction. We are able to prove that determining whether a Hamiltonian is real by local Clifford rotations is NP-complete:

\begin{theorem} \label{thrm:NPcomplete}
Deciding if a 2-local $n$-qubit Hamiltonian is real under single-qubit Clifford operations is NP-complete: there exists a subclass of 2-local $n$-qubit Hamiltonians for which deciding if they are real under single-qubit Clifford operations is equivalent to solving the restricted exact covering by 3-sets problem (RXC3), which is known to be NP-complete.
\end{theorem}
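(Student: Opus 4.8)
The plan is to prove the two directions of NP-completeness separately. Membership in NP is routine: a certificate is a choice of one single-qubit Clifford $C_u$ per qubit (finitely many, hence $O(1)$ bits each), and since conjugation by a single-qubit Clifford sends each single-qubit Pauli to $\pm$ a Pauli, each of the $O(n^2)$ terms $a^{uv}_{PQ}P_u Q_v$ of $H$ is mapped to $\pm a^{uv}_{PQ}P'_u Q'_v$ in constant time; one then checks in polynomial time that every resulting Pauli string containing an odd number of $Y$'s has vanishing coefficient, which is exactly the condition that $H$ is a real matrix in the new basis. The content of the theorem is therefore the NP-hardness reduction from RXC3.

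First I would set up a local dictionary between Clifford choices and reality. Conjugation by $C_u$ permutes the Pauli axes $\{X,Y,Z\}$ up to signs, so let $h(u)\in\{X,Y,Z\}$ be the unique axis with $C_u\,h(u)\,C_u^\dagger=\pm Y$; since the single-qubit Clifford group surjects onto the symmetric group of the three axes, every function $h\colon[n]\to\{X,Y,Z\}$ arises from some choice of $\{C_u\}$. A Hermitian operator is real exactly when every Pauli string with an odd number of $Y$'s has vanishing coefficient, so a single term $a\,P_u Q_v$ with $a\neq 0$ and $P,Q\in\{X,Y,Z\}$ contributes a non-real Pauli string in the new basis precisely when exactly one of $P=h(u)$ and $Q=h(v)$ holds. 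Moreover distinct terms can never conjugate to the same Pauli string — on a fixed pair of qubits the image Pauli determines the original, and terms on different pairs have disjoint support — so no cancellation can occur. Hence such an $H$ is real under single-qubit Cliffords iff there is an $h$ with ``$P=h(u)\iff Q=h(v)$'' for every term of $H$.

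Given an RXC3 instance with $3$-element sets $S_1,\dots,S_\ell$ over a universe of $3m$ elements in which every element lies in exactly three of the sets, I would introduce one qubit $g_e$ per element $e$, fix for each $e$ a bijection $j\mapsto c(e,j)$ between the three sets through $e$ and the three colours $\{X,Y,Z\}$, and for every set $S_j$ and every pair $e,e'\in S_j$ include the term $c(e,j)_{g_e}\,c(e',j)_{g_{e'}}$ with coefficient $1$. This $H$ is $2$-local, has $3\ell$ terms on $3m$ qubits, and is produced in polynomial time. Reading ``$h(g_e)=c(e,j)$'' as ``$e$ selects $S_j$'', the dictionary says $H$ is real under single-qubit Cliffords iff there is a colouring $h$ of the $g_e$'s such that for each $S_j$ the three indicators ``$e$ selects $S_j$'' agree — the triangle of biconditionals on $S_j$ forces all-or-none, so each $S_j$ is unambiguously ``in'' or ``out''. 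Since $h(g_e)$ takes exactly one value and all three colours are used for $e$, each element selects exactly one of its sets, and that set must then be ``in'' while its other two sets are ``out''; since two ``in'' sets cannot both be selected by a common element, the ``in'' sets form an exact cover. Conversely, an exact cover yields such an $h$ by setting $h(g_e)=c(e,j(e))$ where $S_{j(e)}$ is the unique chosen set containing $e$. Thus the reality-by-Clifford question on this subclass is equivalent to RXC3, and since RXC3 is NP-complete and the general problem is in NP, the general problem is NP-complete.

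The step I expect to require the most care is the dictionary, specifically verifying that ``$H$ real under single-qubit Cliffords'' is genuinely equivalent to simultaneous satisfiability of all the biconditional constraints: this needs the facts that every $h$ is Clifford-realisable and that no two conjugated terms can collide and spuriously cancel an imaginary contribution. The all-or-none behaviour extracted from the triangle of biconditionals on a $3$-set, and the bookkeeping that turns a satisfying $h$ into an exact cover and back, are then straightforward — the ``exactly three sets per element'' hypothesis of RXC3 being exactly what makes the colour bijection legitimate and rules out an element selecting nothing.
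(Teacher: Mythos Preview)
Your proposal is correct and follows essentially the same reduction as the paper: one qubit per element, and for each $3$-set a triangle of two-body Pauli terms whose Pauli label on qubit $e$ is fixed by a bijection between the three sets containing $e$ and $\{X,Y,Z\}$ (the paper phrases this bijection as drawing from a ``pool'' $P_i=(X_i,Y_i,Z_i)$). Your explicit dictionary $h(u)=$ ``the axis sent to $\pm Y$'' and the no-cancellation check make the argument a bit tighter than the paper's version, but the construction and both directions of the equivalence are the same.
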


This is significant because making a Hamiltonian real is a necessary condition for transforming it into a $Z$-matrix. Note that this result does not imply that deciding if a Hamiltonian can be transformed into a $Z$-matrix by single-qubit Clifford rotations or even general local basis changes for 2-local Hamiltonians is NP-complete. 
We give the proof of Theorem \ref{thrm:NPcomplete} in Appendix \ref{realnessIsNPComplete} since it is not central to the paper.

\section{Warm-Up: Two Qubit Hamiltonians} \label{twoQubit}
In this section we establish some basic facts about two-qubit Hamiltonians which we will make use of throughout the paper. We begin by establishing under what conditions a two-qubit Hamiltonian is a Z-matrix. We then present a convenient representation of two-qubit Hamiltonians which we will make heavy use of throughout the paper. We briefly discuss the nature of Clifford transformations in this representation, and give an example where Clifford transformations are not sufficient to decide if a Hamiltonian is stoquastic. Next we present a set of invariants which completely indicate if a Hamiltonian can be made real under local unitary rotations, a necessary condition for a Hamiltonian to be stoquastic. Finally we explain how the problem of deciding if a two-qubit Hamiltonian is stoquastic can be reduced to deciding if a set of degree-4 polynomial inequalities admit a solution, which suggests that even in the two-qubit case deciding if a Hamiltonian is stoquastic is a non-trivial task. 

Consider a general two qubit Hamiltonian
\begin{equation}
H=\sum_{ij=I, X, Y, Z} a_{ij}\, \sigma_i \otimes \sigma_j   \;, \; a_{ij} \in \mathbb{R},
\end{equation}
where we may take $a_{II}=0$ without loss of generality.
It is simple to show that 
\begin{proposition}\label{stoquasticCondition}
$H$ is a \zmatrix{} if and only if $a_{IY}=a_{YI}=a_{XY}=a_{YX}=a_{ZY}=a_{YZ}=0$ (the matrix is real) and $a_{XX} \leq - \vert a_{YY} \vert$ and $a_{IX} \leq -\vert a_{ZX} \vert$, $a_{XI} \leq -\vert a_{XZ} \vert$ (the matrix has non-positive off-diagonal elements).
\end{proposition}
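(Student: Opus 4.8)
The plan is to write $H$ explicitly as a $4\times 4$ matrix in the computational basis $\{\ket{00},\ket{01},\ket{10},\ket{11}\}$ and simply read off the two conditions. First I would separate the realness question from the sign question. A Pauli product $\sigma_i\otimes\sigma_j$ is a real symmetric matrix precisely when it contains an even number of $Y$ factors, since $Y$ is the only Pauli with non-real entries and it satisfies $Y^*=-Y$, $Y^T=-Y$; when it contains exactly one $Y$ it is purely imaginary and antisymmetric. Hence imposing $H=H^*$ forces the six coefficients of the one-$Y$ terms to vanish, $a_{IY}=a_{YI}=a_{XY}=a_{YX}=a_{ZY}=a_{YZ}=0$, and conversely if these vanish then $H$ is real symmetric. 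This settles the ``real'' part of the claim and reduces the rest of the analysis to the terms built from an even number of $Y$'s, namely $\{I\otimes X,\, I\otimes Z,\, X\otimes I,\, Z\otimes I,\, X\otimes X,\, X\otimes Z,\, Z\otimes X,\, Z\otimes Z,\, Y\otimes Y\}$ (together with the term $I\otimes I$, whose coefficient is zero).

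Next I would observe that the terms built only from $I$ and $Z$, i.e. $I\otimes Z$, $Z\otimes I$, $Z\otimes Z$, are diagonal and hence irrelevant to the $Z$-matrix condition, which constrains only off-diagonal entries. The $4\times 4$ matrix has exactly six independent off-diagonal entries, and a one-line computation of $I\otimes X$, $X\otimes I$, $X\otimes X$, $X\otimes Z$, $Z\otimes X$, $Y\otimes Y$ shows that these six remaining terms populate exactly those entries: on the two blocks where only the second qubit is flipped the entries are $a_{IX}\pm a_{ZX}$ (the sign coming from the $Z$-eigenvalue of the first qubit); on the two blocks where only the first qubit is flipped they are $a_{XI}\pm a_{XZ}$; and on the two blocks where both qubits are flipped they are $a_{XX}\pm a_{YY}$. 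Requiring each of these six quantities to be non-positive is equivalent to $a_{IX}\le -|a_{ZX}|$, $a_{XI}\le -|a_{XZ}|$, and $a_{XX}\le -|a_{YY}|$, which together with the six vanishing conditions is exactly the asserted characterization.

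Since the whole argument is a direct matrix computation, there is no genuine obstacle; the only point demanding care is the bookkeeping — verifying that the six ``even-$Y$, non-diagonal'' Pauli products account for each of the twelve off-diagonal matrix elements exactly once, and in particular correctly combining the $X\otimes X$ and $Y\otimes Y$ contributions, which enter with a relative sign between the $\ket{00}\leftrightarrow\ket{11}$ sector and the $\ket{01}\leftrightarrow\ket{10}$ sector.
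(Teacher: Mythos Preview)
Your proposal is correct and follows essentially the same approach as the paper, which only offers a brief sketch (``visually, the positive off-diagonal contributions of \ldots''). Your write-up is in fact more careful and complete: you make explicit the split into the realness condition (odd-$Y$ terms vanish) and the sign condition, identify precisely which Pauli products hit which of the six independent off-diagonal entries, and derive the three absolute-value inequalities from the six sign constraints.
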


\begin{proofSketch}
Visually, the positive off-diagonal contributions of XX and YY terms need to be removed leading to $a_{XX} \leq - \vert a_{YY} \vert$. Similarly, the positive off-diagonal contributions of IX and ZX terms need to be removed, leading to $a_{IX} \leq -\vert a_{ZX} \vert$ (and similarly $a_{XI} \leq -\vert a_{XZ} \vert$).
\end{proofSketch}
The set of traceless symmetric $Z$-matrices forms a polyhedral cone $\mathcal{C}_2$, that is, a cone supported by a finite number of hyperplanes. In the case of $4 \times 4$ matrices, each matrix is determined by $9$ parameters (three of them real and 6 of them nonnegative). Viewed as a polyhedral cone, $\mathcal{C}_2$ has 12 extremal vectors  \cite{BH:ising}
\begin{alignat*}{2}
&{\bf o}_1=-X \otimes X -Y \otimes Y \;& 
&{\bf o}_2=-X \otimes X + Y \otimes Y \\
 &{\bf o}_3=-\ket{0}\bra{0} \otimes X &
&{\bf o}_4=-\ket{1}\bra{1} \otimes X \\
 &{\bf o}_5=-X \otimes \ket{0}\bra{0} & &{\bf o}_6=-X \otimes \ket{1}\bra{1}\\
&{\bf d}_{1}^{\pm}=\pm Z \otimes I && {\bf d}_2^{\pm}=\pm I \otimes Z \\
 &{\bf d}_3^{\pm}=\pm Z \otimes Z. 
\end{alignat*}  Appendix \ref{AppendixNecessaryAndSufficientConditions} discusses some of this structure and its generalization to two qudits.

\subsection{A Convenient Representation of 2-qubit Hamiltonians} \label{convRep}
Two-qubit Hamiltonians are conveniently parametrized by a $3 \times 3$ real matrix and two three-dimensional real vectors:
\begin{align}
\beta =\left( \begin{matrix}
a_{XX} & a_{XY} & a_{XZ} \\
a_{YX} & a_{YY} & a_{YZ} \\
a_{ZX} & a_{ZY} & a_{ZZ}
\end{matrix}\right) && S = \left( \begin{matrix} a_{XI} \\ a_{YI} \\ a_{ZI} \end{matrix} \right) && P = \left( \begin{matrix} a_{IX} \\ a_{IY} \\ a_{IZ} \end{matrix} \right).
\end{align}
A basis transformation $H \rightarrow (U_1 \otimes U_2) H (U_1 \otimes U_2)^{\dagger} $ corresponds to a pair of $SO(3)$ rotations $O_1$ and $O_2$ acting as
\begin{align}
\beta \rightarrow O_1 \beta O_2^T && S \rightarrow O_1S && P \rightarrow O_2P .
\end{align}
This correspondence between $SU(2)$ and $SO(3)$ is well known in the context of one qubit, and holds also in the two qubit case, 
\begin{align*}
H'&= (U_1 \otimes U_2) H (U_1 \otimes U_2)^{\dagger} \\
H'&= \sum_{i\in {XYZ}}\alpha_{iI}' \sigma_i \otimes I +\sum_{j\in {XYZ}}\alpha_{Ij}' I \otimes \sigma_j   \\&+ \sum_{ij \in {X,Y,Z}} \alpha_{ij}' \sigma_i  \otimes \sigma_j  
\end{align*}
$$ \alpha_{iI}'= \frac{1}{4} {\rm Tr}[H' \sigma_i \otimes I]=\sum_{m}  [O_1]_{im} \alpha_{mI} $$
$$ \alpha_{Ij}'= \frac{1}{4} {\rm Tr}[H' I \otimes \sigma_j]=\sum_{n}  \alpha_{In} [O_2]_{jn}$$
$$ \alpha_{ij}'= \frac{1}{4} {\rm Tr}[H' \sigma_i \otimes \sigma_j]=\sum_{mn}  [O_1]_{im} \alpha_{mn} [O_2]_{jn}$$

$$[O_x]_{ij} = \frac{1}{2}{\rm Tr}[\sigma_i U_x \sigma_j U_x^{\dagger}].$$


This implies that for 2-local qubit Hamiltonians  it suffices to consider pairs of local $SO(3)$ rotations and their effect on $\beta, P, S$. It is particularly useful to consider $\beta$ in diagonal form. Every real square matrix admits a singular value decomposition $\beta = O_L \Sigma O_R^{T}$ with $O_L,O_R \in O(3), \; \Sigma \geq 0$. If we do not enforce that $\Sigma$ be positive-semidefinite, then $O_L, O_R \in SO(3)$. Hence there always exists local unitary rotations acting on $H$ which put the $\beta$ matrix into diagonal form.

\subsection{Clifford Rotations and Beyond}
\label{cliffordInsufficient}
The single-qubit Clifford rotations play a special role as local basis changes. The action of single-qubit Clifford rotations on the Pauli-matrices form a discrete subgroup of SO(3) representing the symmetries of the cube \cite{GS:clifford}. The Clifford rotations realize any permutation of the Paulis (the group $S_3$) as well as sign-flips $\sigma_i \rightarrow \pm \sigma_i$ (with determinant 1). As we shall see later in the paper, there are some cases where it is sufficient to consider Clifford rotations alone to decide if a Hamiltonian is stoquastic. It is important to establish that this is not always the case to get some intuition about the problem. Here we will present a two-qubit Hamiltonian for which it does not suffice to consider only the single-qubit Clifford rotations. Consider the two-qubit Hamiltonian
\begin{equation}
H=a_{Z}(Z_1+Z_2)-a_{X} (X_1+X_2) +a_{XX} X_1 X_2 + Z_1 Z_2.
\label{counter}
\end{equation}
with $1 \geq a_{XX} \geq 0$ and $1>a_{Z}>0$, $a_{X} > 0$. It is easy to argue that no single-qubit Clifford basis change can make this Hamiltonian into a symmetric $Z$-matrix. Since $H$ has to be real, no permutations to $Y$-components are allowed. A sign-flip which makes the XX term negative must also flip the sign of one of the single-qubit X terms. Interchanging X and Z for both qubits leads to the same problem for the ZZ  and Z terms. Applying a sign flip on the Z of one of the qubits, followed by interchanging X and Z on that same qubit leads to the requirement  that $-a_{Z} \leq -1$ and $-a_X \leq -\vert a_{XX} \vert$. This is not satisfiable as long as $0<a_{Z} <1$.

However, we could create off-diagonal XZ and ZX terms by single-qubit non-Clifford rotations. For example, one could apply $U_1 = e^{i \pi Y/8 }e^{i \pi X/2 }$ on the first qubit and $U_2 = e^{-i \pi Y/8 }e^{i \pi X/2 }$ on the second qubit, so that $U_1 X U_1^{\dagger} = - U_2 Z U_2^{\dagger}=(Z+X)/\sqrt{2}$, $U_1 Z U_1^{\dagger} = U_2 X U_2^{\dagger}=(Z-X)/\sqrt{2}$ and $U_1 Y U_1^{\dagger} = U_2 Y U_2^{\dagger} = -Y$.  In this new basis $H$ is a symmetric $Z$-matrix when \begin{equation} 
a_{X} \geq a_{Z}, \;\; 2(a_{Z}-a_{X})^2 \geq (a_{XX}+1)^2.
\end{equation}
For $a_{XX} \leq \sqrt{2}-1$, and sufficiently large $|a_{Z}-a_{X}|$, these inequalities can certainly be satisfied.

\begin{figure}[h!]
\includegraphics[width=.95 \linewidth]{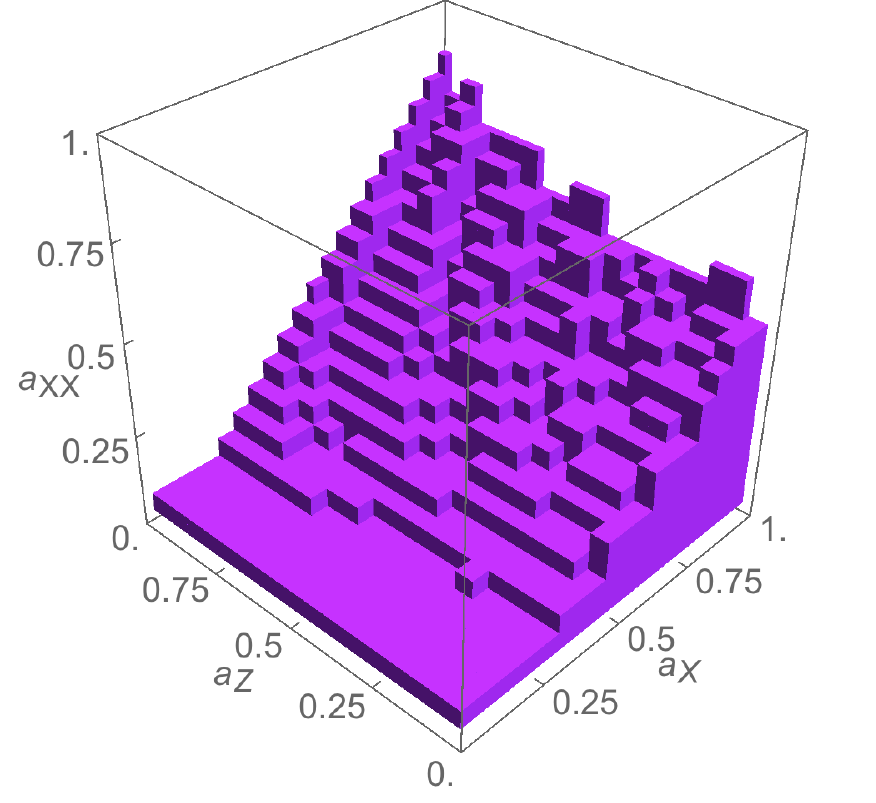}
\caption{A coarse-grained visualization of the approximate region where the Hamiltonian in Equation \eqref{counter} is stoquastic. The purple cubes correspond to points in parameter space where a solution is found to the system of equations \eqref{ineq-set1}, \eqref{ineq-set2} and \eqref{ineq-set3} when sampling a uniform lattice of points. The region obscured by the blocks is filled.}
\label{fig:3Dplot}
\end{figure}

In Figure \ref{fig:3Dplot} we show in what region of the parameter space $(a_{X}, a_{Z}, a_{XX})$ $H$ is stoquastic using the general set of inequalities given in Appendix \ref{inequalityAppendix}.

\subsection{Deciding Realness With Makhlin Invariants}\label{makhlinReal}

To characterize the set of stoquastic 2-qubit Hamiltonians one needs to consider the orbit of $\mathcal{C}_2$ under the local rotations. For a traceless Hermitian matrix acting on two qubits, there is a complete set of 18 polynomial invariants $\{I_l\}_{l=1}^{18}$ specifying the Hermitian matrix up to local unitary rotations, which we will refer to as the Makhlin invariants \cite{Makhlin2000}. Any two traceless Hamiltonians which have the same Makhlin invariants must be equivalent up to local unitary rotations, and vice versa. Since the invariants are straightforward to compute, if we are interested in some property only up to local unitary rotations, such as being stoquastic, then if we can find a way of characterizing that property in terms Makhlin invariants, it may be easier to test for that property. 
Note that the fact that there are 18 polynomial invariants can be consistent with the existence of only $d_{\rm inv}=15-6=9$ independent non-local parameters in a two-qubit traceless Hermitian matrix \cite{Linden1999,Grassl:inv}. One can see these 18 Makhlin invariants simply as a convenient set of invariants.

Stoquastic two-qubit Hamiltonians thus form a volume in the $d_{\rm inv}=9$-dimensional non-local parameter space: the question is how the invariants are constrained in this stoquastic subspace. 

We have not been able to express the stoquasticity of a two-qubit Hamiltonian in terms of inequalities involving the Makhlin or other invariants. However, we are able to capture the condition that $H$ can be made real by local rotations in terms of some of the Makhlin invariants needing to be zero, see Theorem \ref{theo:real}. This is useful because realness is a necessary condition for a Hamiltonian to be stoquastic\footnote{It is well known that any $k$-local Hamiltonian can be mapped onto a real $k+1$-local Hamiltonian, see e.g. \cite{CMP:universal}. This is not directly useful since deciding whether local basis changes exist which make a 3-local Hamiltonian have non-positive off-diagonal elements will be harder than the same problem for 2-local Hamiltonians.}.

The convenience of the Makhlin invariants is that they can be expressed in terms of inner products and triple products constructed from $\beta$, $S$ and $P$ which are invariant under $SO(3)$ rotations. Of particular interest are the triple product Makhlin invariants $I_{10}$, $I_{11}$ and $I_{15}-I_{18}$. These take the following form (using the notation that $(\vec{a},\vec{b},\vec{c})$ stands for the scalar triple product $\vec{a} \cdot (\vec{b} \times \vec{c})$):
\begin{equation}\label{invariants}
\begin{array}{cc}
\hline 
\hline
\rule{0pt}{2.5ex}   
I_{10}=&(S, \beta \beta^T S, [\beta \beta^T]^2 S)\\
I_{11}=&(P,  \beta^T \beta P, [\beta^T \beta]^2 P)\\
I_{15}=&(S, \beta \beta^T S, \beta P)\\
I_{16}=&(\beta^T S, P, \beta^T \beta P)\\
I_{17}=&(\beta^T S,  \beta^T \beta \beta^T S, P)\\
I_{18}=&(S, \beta P, \beta \beta^T \beta P) \\
 
\hline 
\hline
\end{array}
\end{equation}

One can prove that
\begin{theorem}
A two-qubit Hamiltonian $H$ is real under local unitary rotations if and only if all of the triple product invariants given in Table \ref{invariants} are equal to zero.
\label{theo:real}
\end{theorem}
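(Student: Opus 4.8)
The plan is to translate realness into the $SO(3)\times SO(3)$ language of Section~\ref{convRep}. By the realness half of Proposition~\ref{stoquasticCondition}, $H$ is real in the standard basis exactly when $S$ and $P$ have vanishing $Y$-component and $\beta$ has a vanishing $Y$-row and $Y$-column apart from the $(Y,Y)$ entry. Hence $H$ can be made real by local rotations iff there exist unit vectors $\hat n_1,\hat n_2$ such that $(\hat n_1,\hat n_2)$ is a pair of left/right singular vectors of $\beta$ (that is, $\beta\hat n_2=\mu\hat n_1$ and $\beta^T\hat n_1=\mu\hat n_2$ for some $\mu\in\mathbb R$) and, in addition, $\hat n_1\cdot S=0$ and $\hat n_2\cdot P=0$: from such $\hat n_1,\hat n_2$ one builds $SO(3)$ matrices $O_1,O_2$ that rotate them onto the $Y$-axis, and conversely any rotations making $H$ real produce such a pair. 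I will also use the easily checked fact that all six entries of Table~\ref{invariants} are $SO(3)\times SO(3)$-invariant — each is a scalar triple product of three vectors that all transform by the \emph{same} $O_1$, or all by the \emph{same} $O_2$ — so both sides of the claimed equivalence are invariant and I may compute in whatever basis is convenient.

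The forward implication is then immediate. If $H$ is already real, then the $\{X,Z\}$-plane is invariant under each of $\beta$, $\beta^T$, $\beta\beta^T$, $\beta^T\beta$, $\beta\beta^T\beta$ and $\beta^T\beta\beta^T$, while $S$ and $P$ lie in that plane; reading off the formulas in Table~\ref{invariants}, every one of $I_{10},I_{11},I_{15},I_{16},I_{17},I_{18}$ is the triple product of three vectors all lying in this common two-dimensional subspace, hence is zero.

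For the converse I would first use the singular value decomposition (with signed singular values, so that both rotations stay in $SO(3)$, as noted in Section~\ref{convRep}) to bring $\beta$ to diagonal form $\mathrm{diag}(\lambda_1,\lambda_2,\lambda_3)$, and write $S=(s_1,s_2,s_3)$, $P=(p_1,p_2,p_3)$. The argument then splits on the multiplicity pattern of the $|\lambda_i|$. In the generic case, all $|\lambda_i|$ distinct, $I_{10}$ and $I_{11}$ become $s_1s_2s_3$ and $p_1p_2p_3$ times a nonzero Vandermonde determinant in the $\lambda_i^2$, so their vanishing forces $S$ to be orthogonal to some axis $e_a$ and $P$ to some axis $e_b$; one then evaluates $I_{15},I_{16},I_{17},I_{18}$ as explicit monomials (each of the shape $(\text{some }\lambda_i\text{ or }\lambda_i\lambda_j)\cdot(\lambda_k^2-\lambda_l^2)\cdot s\,s\,p$) and checks that, together, their vanishing forces $a=b$ — possibly after observing that $S$ or $P$ in fact lies along a single coordinate axis; here $I_{17}$ and $I_{18}$ are needed precisely to treat the subcases in which some $\lambda_i=0$ kills the corresponding factor of $I_{15}$ or $I_{16}$. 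Once $S\perp e_c$ and $P\perp e_c$ for a common index $c$, the pair $\hat n_1=\hat n_2=e_c$ is a singular pair of the diagonal $\beta$, so $H$ is real. The degenerate cases — $\beta$ with a repeated singular value, or with a kernel of dimension at least one — must be handled separately: there $I_{10}$ and $I_{11}$ vanish automatically, but $\beta$ acts as a scalar on the degenerate eigenspace, so there is extra freedom to rotate within that eigenspace, and a short two-dimensional computation shows that either the remaining invariants again force $S$ and $P$ to share a zero coordinate, or (when the kernel is large enough) $H$ is automatically real and all six invariants automatically vanish.

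The only genuine work is in the converse: getting the case split on the singular-value multiplicities right, and verifying that in each branch the correct subset of $\{I_{15},\dots,I_{18}\}$ is enough to ``align'' the zero coordinates of $S$ and $P$. I expect the degenerate branches to be the most error-prone, since there the invariants carry less information and one must instead exploit the enlarged stabilizer of $\beta$ to conclude realness.
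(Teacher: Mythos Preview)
Your outline is correct, and the core idea is the same as the paper's: realness under local rotations is equivalent to the existence of a left/right singular pair of $\beta$ orthogonal to $S$ and $P$ respectively (this is exactly the paper's Proposition~\ref{realH1}), and the forward implication is the same two-dimensional-subspace observation you give. For the converse the paper organises the case analysis slightly differently from your diagonalise-and-compute plan. Rather than putting $\beta$ in diagonal form and evaluating each $I_k$ as an explicit polynomial in the $\lambda_i,s_i,p_i$, it notes that the six vanishing triple products say precisely that the two families $\Gamma_L=\{S,\beta\beta^TS,(\beta\beta^T)^2S,\beta P,\beta\beta^T\beta P\}$ and $\Gamma_R$ are each coplanar, and then extracts the singular-vector conclusion from coplanarity via two short structural propositions (Propositions~\ref{span} and~\ref{betaP}), branching only on whether $S,\beta\beta^TS,(\beta\beta^T)^2S$ are linearly independent. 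Your multiplicity split is doing the same work in coordinates; the paper's packaging is a little cleaner precisely in the degenerate branches you rightly flag as delicate, because the coplanarity statement is uniform across degeneracies, whereas in coordinates several of the invariants collapse to $0$ and one must chase which of $I_{15}$--$I_{18}$ still carry information.
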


\begin{proofSketch}

The proof of this Theorem is simple in one direction, namely when $H$ is real, then the triple product invariants of the corresponding $\beta, S, P$ are zero.  This is because the $Y$ coefficients in $S$ and $P$ are zero, and so they lie in a common two-dimensional subspace and the repeated multiplication by $\beta$ or $\beta^T$ keeps them in this subspace. Any triple of vectors, all lying in a two-dimensional space have zero triple product. 

In order to prove the other direction, one can observe that the triple product invariants are all equal to zero if and only if the vectors transforming under $O_L$ are co-planar, and the vectors transforming under $O_R$ are co-planar. The vectors transforming under $O_L$ are
$$\Gamma_L = \left\lbrace S, \beta \beta^T S,  [\beta \beta^T]^2 S, \beta P , \beta \beta^T \beta P \right\rbrace,$$
and the vectors transforming under $O_R$ are
$$\Gamma_R = \left\lbrace P, \beta^T\beta P , [\beta^T \beta]^2 P ,  \beta^T S  , \beta^T \beta \beta^T S \right\rbrace.$$

To prove that zero triple product invariants implies realness under local rotations is more work since $\beta$ can be degenerate and have rank less than 3, so the proof has to incorporate the various cases. We have included a full proof of Theorem \ref{theo:real} in the Appendix \ref{AppendixLocalRealnessCondition}.
\end{proofSketch}

\subsection{Reducing the Two-Qubit Stoquastic Decision Problem to Polynomial Inequalities}\label{sec:ineq}
Here we present a general analytical strategy to determine if a two qubit Hamiltonian is stoquastic, and in what basis. 

First apply the Makhlin invariants from the previous section to determine if the Hamiltonian can be made real. If not, then it is not stoquastic. 

We would now like to put our Hamiltonian into a \emph{standard form}. Recalling the discussion in section \ref{convRep} we know that it is always possible to bring $\beta$ into a diagonal form:
$$
\beta=\left(\begin{array}{ccc}
a_{XX} &0&0 \\
0& a_{YY} &0 \\
0&0& a_{ZZ}
 \end{array}\right).$$

A trivial consequence of Theorem \ref{theo:real} is that $H$ is real under local unitary rotations if and only if there exists a transformation which diagonalizes $\beta$ and puts $S$ and $P$ into the form:

\begin{align}
S=\left( \begin{array}{c}
 a_{XI} \\
 0 \\
 a_{ZI}
 \end{array} \right) ,&&
 P=\left( \begin{array}{c}
 a_{IX} \\
0 \\
 a_{IZ}
 \end{array} \right).
 \end{align}

So, given that we know our Hamiltonian can be made real under local unitary rotations, we say our standard form must have this structure. In this form $H$ is real, because $a_{IY} = a_{YI}= a_{YX}=a_{ZY}=a_{YZ} =0$, and since this is a necessary condition for $H$ to be a $Z$-matrix we demand that any further transformation we perform preserves this realness. 

We would now like to add some additional structure to our standard form. Before we can do this, we must handle some simple special cases. 

The first special case is when $S=P=0$. In this case we know $H$ is stoquastic, because one can always apply an appropriate choice of permutations and sign-flips on $\beta$ so that $a_{XX}' \leq -\vert a_{YY}' \vert$.

As a second special case, if $\beta =0$ then $H$ is stoquastic, since $S$ and $P$ may be freely rotated independently. 

Finally, as a third special case, if $a_{ZZ} = a_{XX} =0$, $a_{YY} \neq 0$, at least one of  $a_{XI}$ and $a_{IX}$ is non-zero, and at least one of  $a_{ZI}$ and $a_{IZ}$ is non-zero, then $H$ is not stoquastic. This is true by the following reasoning. Any transformation which makes $H$ a $Z$-matrix must satisfy $a_{XX}' \leq -\vert a_{YY}' \vert$ so that $a_{XX}' \neq 0$ or $a_{YY}' =0$. If $a_{YY}' \neq 0$ then $a_{XX}' \neq 0$ and since $\beta$ is rank-1, $a_{XY}' \neq 0$ and $a_{YX}' \neq 0$, which would disqualify $H$ from being a $Z$-matrix. Therefore any transformation which makes $H$ a $Z$-matrix must set $a_{YY}'=0$. However any such transformation will set at least one other $a_{\bullet Y}'$ or $a_{Y \bullet}'$ to be non-zero. Therefore $H$ is not stoquastic.

We may proceed under the assumption that these special cases have been handled, and argue that we can always put $S$, $P$ and $\beta$ into a standard form with the following properties, namely at least one $S$ or $P$ is non-zero and $\beta \neq 0$. Furthermore we may assume that $a_{ZZ} \geq a_{XX} \geq 0$. This follows from the fact that $a_{ZZ}$ and $a_{XX}$ can always be made non-negative by perform sign flips on the eigenvalues of $\beta$ and dumping any extra sign into $a_{YY}$ so as to preserve the determinant, furthermore any permutation between the $a_{XX}$ and $a_{ZZ}$ eigenvalues will preserve the standard form. Finally we may assume $H$ can be normalized so that $a_{ZZ} = 1$. This follows from the fact that the exclusion of the second and third special cases allows us to always put $S$, $P$, and $\beta$ into a standard form such that $a_{ZZ}>0$, since if $a_{ZZ}=a_{XX}=0$, but $a_{YY} \neq 0$ then either $a_{IX}=a_{XI}=0$ or $a_{IZ}=a_{ZI}=0$, in which case a permutation can always be performed that preserves the standard form but sets $a_{ZZ}' \neq 0$. 

Having put $S$, $P$, and $\beta$ into standard form with the above structure, we wish to know if there exists a pair of $SO(3)$ rotations $O_1$ and $O_2$ such that $\beta' =O_1 \beta O_2^T$, $S'=O_1 S$ and $P'=O_2 P$ where $\beta',S',P'$ are associated with a \zmatrix{} Hamiltonian. As discussed above, any such transformation must preserve the realness of the Hamiltonian. If it is not the case that either $a_{XI}=a_{IX}=0$ or $a_{ZI}=a_{IZ}=0$ then the only transformations which preserve this realness are $SO(3)$ rotations in the $X$-$Z$ subspace combined with reflections in the $X$- or $Z$-axis, given by 
\begin{align}\label{eq:sfRot}
O = \left(\begin{matrix} \cos(\theta)& 0&  \sin(\theta) \\ 0 &\gamma  &0  \\  -\gamma \sin(\theta)&0  &\gamma \cos(\theta)  \end{matrix}\right), && \gamma = \pm 1
\end{align} (If it is the case that either $a_{XI}=a_{IX}=0$ or $a_{ZI}=a_{IZ}=0$  then there is an additional transformation which can be performed, see the extra step in Appendix \ref{inequalityAppendix}.)


Recalling Proposition \ref{stoquasticCondition}, three inequalities must be satisfied in order for the rotated $H$ with $\beta',S',P'$ to be a \zmatrix{}. As we show in Appendix \ref{inequalityAppendix} these inequalities can be re-expressed as systems of two-variable polynomials which are at most quadratic in either variable, so analytic solutions to their roots can be constructed and solutions can be found using graphical methods. However, the complexity of these inequalities makes it unclear whether their interest goes beyond that of numerically finding a set of local basis changes.

\section{The Stoquastic Problem for 2-local Hamiltonians}\label{polyAlgoXYZ}

Consider a general $n$-qubit 2-local Hamiltonian
\begin{equation}
H=\sum_{uv \in E} H_{uv}+ \sum_{v\in V} H_v,
\end{equation}
where $u$ and $v$ are the vertices of the interaction graph $G=(V,E)$ of the Hamiltonian $H$. Each vertex $v \in V$ corresponds to a qubit so that $H_v$ are the 1-local terms (corresponding to $S$ and $P$ in the two-qubit case). Each edge $e=uv \in E$ is represented as $H_{uv}$ and can be associated with the $3 \times 3$ matrix $\beta^{uv}$ for that edge. 

One wants to determine when such a Hamiltonian is stoquastic. Recalling Definition \ref{def:stoq}, we want to find both a basis as well as a local decomposition $H= \sum_i D_i$ so that each $D_i$ is a Z-matrix acting on at most 2 qubits.
We begin with two simple cases. In the first case we consider how to find a local decomposition so that each $D_i$ is a $Z$-matrix, ignoring any basis transformations. Next we consider a family of natural Hamiltonians which have identical two-local terms on a bipartite graph. In this case the problem of finding an appropriate decomposition and basis transformation becomes much simpler. The final case we consider is the XYZ Heisenberg Hamiltonian, where we show that there exists a polynomial time algorithm for deciding if such a Hamiltonian is stoquastic. This constitutes our most significant result.

\subsection{Finding a Decomposition in a Fixed Basis}
Even given a fixed basis it remains non-trivial to identify if a decomposition of the form $H= \sum_i D_i$ exists. 

We can define the set of 2-local symmetric $Z$-matrices as the polyhedral cone $\mathcal{C}_n$ which is generated by the extremal vectors of each of the 2-qubit cones $\mathcal{C}_2$, which are now embedded in a higher dimensional $n$-qubit Hilbert-Schmidt space. 

Given a fixed basis, the cone $\mathcal{C}_n$ corresponds to those $H$ which admit such a decomposition $H=\sum_i D_i$ so that each $D_i$ is a Z-matrix acting on at most 2 qubits, and we are interested in deciding our Hamiltonian lies in that cone. A necessary condition is that the Hamiltonian is real, however this is straightforward to check. So going forward we assume that $H$ is real and want to decide whether $H \in \mathcal{C}_n$.

It is important to note that for any $v$, $\mathcal{C}_2^{uv}$ and $\mathcal{C}_2^{vu'}$ intersect: for any $u,u' \neq v$, the extremal vectors obey ${\bf o}_3^{uv}+{\bf o}_4^{uv}={\bf o}_5^{vu'}+{\bf o}_6^{vu'}$. This element in the intersection of $\mathcal{C}_2^{uv}$ with $\mathcal{C}_2^{vu'}$ is the 1-local $-X_v$ term.  This non-empty intersection shows the freedom in the decomposition $D_i$ for a fixed basis. The parsimonious strategy below shows how to find a decomposition $D_i$ of a $n$-qubit Hamiltonian $H$ in terms of 2-local terms each of which is a $4 \times 4$ symmetric Z-matrix or decide that it does not exist.

Let $H_u^X$ be the 1-local term proportional to $X_u$ and similarly $H_u^Z$. First, note that  a real $H$ can only be in $\mathcal{C}_n$ when for all $u$, $H_u^X \leq 0$. This follows directly from Proposition \ref{stoquasticCondition} demanding that $a_{IX}, a_{XI} \leq 0$. Hence we assume this to be the case (otherwise we conclude that no decomposition exists in the given basis).

{\em Efficient Parsimonious Strategy}: \\
Repeat the following for all edges $uv$ in $H$.
\begin{enumerate}
\item Given the current Hamiltonian $H$, pick a pair of vertices $u$ and $v$ and consider $h(\alpha, \beta)=H_{uv}+H_{u}^Z+H_v^Z-\alpha X_u-\beta X_v$ which includes {\em all} current single-qubit terms which act on vertices $u$ and $v$ and wlog $\alpha \geq 0,\beta \geq 0$.

Find the minimal $\alpha_{\rm min} \leq \alpha, \beta_{\rm min} \leq \beta$ such that $h(\alpha_{\rm min},\beta_{\rm min})\in \mathcal{C}_2$, or decide that $h(\alpha,\beta) \notin \mathcal{C}_2$. In the latter case, decide that $H \not \in \mathcal{C}_n$ and exit. Note that when $h(\alpha,\beta) \notin \mathcal{C}_2$ then $\forall \alpha' \leq \alpha, \beta' \leq \beta, \; h(\alpha',\beta') \notin \mathcal{C}_2$ since $a_{IX} \leq -|a_{ZX}|$ and $a_{XI} \leq -|a_{XZ}|$, aka it is easier to satisfy the stoquastic condition for large $\alpha,\beta$, more negative 1-local $X$-terms. Hence the goal is to use as little $\alpha$ and $\beta$ as possible, be parsimonious, so that a lot of $-X_u$ and $-X_v$ will be left over for another edge. This is then clearly an optimal strategy.
\item Define the left-over Hamiltonian $H_{\rm left-over}=H-h(\alpha_{\rm min},\beta_{\rm min})$ and repeat the previous step with the left-over Hamiltonian as the current Hamiltonian.
\item Stop the iteration when you have done all the edges $uv$. Either the left-over Hamiltonian is now 0 or it is the sum of 1-local $X$-terms, each of which has a negative sign, hence $H \in \mathcal{C}_n$.
\end{enumerate}

This problem is also discussed in \cite{Marvian2018}, where they argue that it can in general always be solved using linear programming.


\subsection{Uniform Bipartite Graphs}

Let us next consider a particularly simple example for which a resolution of the two-qubit problem analyzed in section \ref{twoQubit} suffices for deciding if the Hamiltonian is stoquastic.

\begin{proposition}Let the interaction graph $G=(V,E)$ of a Hamiltonian $H$ be bipartite, i.e. $V=V_A \cup V_B$, $V_A \cap V_B = \emptyset$, and there are only edges $uv \in E$ with $u \in V_A, v \in V_B$. Furthermore, $H=\sum_{uv \in E} h_{uv}$ where $h_{uv}$ acts with both one and two local terms on sites $u$ and $v$, and $h_{uv}=h$ for all $u \in V_A$, $v\in V_B$. If the two-qubit Hamiltonian $h$ is stoquastic, then $H$ is stoquastic. If $h$ is not stoquastic then $H$ will not be stoquastic under any local basis change which acts identically on all qubits in a partition.
\label{prop:bip}
\end{proposition}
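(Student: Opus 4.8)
The plan is to prove the two implications separately; the first is immediate, and the second is the substantive one.

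\textbf{If $h$ is stoquastic, then $H$ is stoquastic.} I would first observe that, for a two-qubit Hamiltonian, Definition \ref{def:stoq} is equivalent to being conjugate under a single-qubit basis change to a symmetric $Z$-matrix, since a sum of symmetric $Z$-matrices is again a symmetric $Z$-matrix and a trivial one-term decomposition is always available. So pick single-qubit $U_A,U_B$ with $(U_A\otimes U_B)\,h\,(U_A\otimes U_B)^\dagger$ a symmetric $Z$-matrix, and apply $U_A$ to every qubit of $V_A$ and $U_B$ to every qubit of $V_B$. In this basis each summand $h_{uv}$ is a $4\times 4$ symmetric $Z$-matrix tensored with the identity on the other qubits (tensoring with the identity keeps off-diagonal entries non-positive), and a sum of symmetric $Z$-matrices is a symmetric $Z$-matrix, so $H$ is itself a symmetric $Z$-matrix in this basis -- in particular stoquastic.

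\textbf{If $h$ is not stoquastic, then $H$ is not stoquastic under a partition-uniform basis change.} I would argue the contrapositive: assume $\mathcal U = V^{\otimes V_A}\otimes W^{\otimes V_B}$ makes $H$ stoquastic and deduce that $h$ is stoquastic. Set $g := (V\otimes W)\,h\,(V\otimes W)^\dagger$, with associated $3\times 3$ matrix $\beta_g$ and one-local vectors $S_g$ (on the $V_A$ side) and $P_g$ (on the $V_B$ side); then $\mathcal U H\mathcal U^\dagger=\sum_{uv\in E} g_{uv}$, and by hypothesis this equals $\sum_{\{a,b\}} D_{ab}+\sum_u D_u$, a decomposition into two- and one-qubit symmetric $Z$-matrices in the standard basis. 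I would then show that $g$ is itself a symmetric $Z$-matrix, which (by conjugation) makes $h$ stoquastic and gives the contradiction, via three steps. (i) On an edge $uv\in E$ the only term with two-local weight on $\{u,v\}$ is $D_{uv}$, so its $3\times 3$ matrix equals $\beta_g$; by Proposition \ref{stoquasticCondition} this forces $\beta_g$ to have vanishing $Y$-row and $Y$-column apart from the $YY$ entry, and $(\beta_g)_{XX}\le -|(\beta_g)_{YY}|$. (ii) Fix an edge endpoint $u\in V_A$ (so its degree in $G$ is positive) and compare the one-local term on qubit $u$ on both sides: the left side is $\deg(u)$ times $S_g$, the right side is $\sum_{u'\ne u} r_{uu'}$, where $r_{uu'}$ is the one-local-on-$u$ part of $D_{uu'}$ (with $D_u$ absorbed). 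Every $r_{uu'}$ has vanishing $Y$-component: if $\{u,u'\}\in E$ this is the $a_{IY}=a_{YI}=0$ realness condition for $D_{uu'}$, and if $\{u,u'\}\notin E$ then $D_{uu'}$ is purely one-local, and a purely one-local two-qubit symmetric $Z$-matrix has vanishing $Y$-components and non-positive $X$-components (Proposition \ref{stoquasticCondition} with zero $\beta$). Hence $S_g$ has vanishing $Y$-component, and by the same argument at an endpoint in $V_B$ so does $P_g$, i.e. $g$ is real. (iii) Among the $u'\ne u$ above, the $\deg(u)$ edge-neighbours give $D_{uu'}$ with $3\times 3$ matrix $\beta_g$, so the condition $a_{XI}\le -|a_{XZ}|$ forces the $X$-component of $r_{uu'}$ to be at most $-|(\beta_g)_{XZ}|$, while the remaining $r_{uu'}$ have non-positive $X$-component; dividing the per-vertex identity by $\deg(u)$ gives that the $X$-component of $S_g$ is at most $-|(\beta_g)_{XZ}|$, and symmetrically for $P_g$ and $(\beta_g)_{ZX}$. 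By Proposition \ref{stoquasticCondition}, $g$ is a symmetric $Z$-matrix.

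I expect the main obstacle to be steps (ii)--(iii): ruling out that the non-$Z$-matrix features of $g$ get redistributed and cancelled across the many terms $D_{ab}$. This succeeds precisely because uniformity of $H$ pins every edge-term's $3\times 3$ matrix to $\beta_g$ and pins the one-local weight at each vertex to an integer multiple of $g$'s one-local part, while the constraints defining a symmetric $Z$-matrix in Proposition \ref{stoquasticCondition} are one-sided -- the realness equalities and the $X$-component inequalities can only be reinforced, never undone, by adding further $Z$-matrix pieces -- so a sum of $Z$-matrices is never ``less non-stoquastic'' than its summands. I would also take care that singleton terms $D_u$ and two-local terms $D_{ab}$ supported on non-edge pairs (including pairs inside one partition) contribute only one-local pieces with vanishing $Y$-component and non-positive $X$-component, so they cause no harm; and note that the restriction to basis changes uniform on each partition is genuinely needed, since coping with arbitrary local basis changes is exactly the difficulty addressed later for the XYZ Heisenberg model.
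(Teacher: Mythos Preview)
Your proof is correct and follows essentially the same approach as the paper: pin each edge-term's two-local part to $\beta_g$, then use the one-local constraints from Proposition~\ref{stoquasticCondition} at each vertex to bound $S_g,P_g$. Your direct averaging (sum the $r_{uu'}$ and divide by $\deg(u)$) is equivalent to the paper's ``parsimonious'' budget contradiction (more $-X_u$ in one $D'_{uv}$ forces less in another), and your explicit treatment of the $Y$-components is in fact more careful than the paper's sketch.
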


Bipartite graphs include linear arrays, square lattices, cubic lattices and hexagonal lattices, all of which are very natural structures to consider.

\begin{proof}
If there exists a $U_A \otimes U_B$ such that $(U_A \otimes U_B) h (U_A \otimes U_B)^{\dagger}$ is a \zmatrix{}, then apply this rotation $U_A$ to all $u \in V_A$ and apply $U_B$ to all $v \in V_B$ and $H'=\sum_{uv \in E} (U_A \otimes U_B) h_{uv}(U_A \otimes U_B)^{\dagger}$ with $(U_A \otimes U_B) h_{uv}(U_A \otimes U_B)^{\dagger}$ a \zmatrix{}. 

Suppose $h$ were not stoquastic, but there existed a pair of unitaries $U_A$ and $U_B$ such that a decomposition $H= \sum_{uv \in E} D_{uv}$ exists where $D'_{uv}=(U_A \otimes U_B) D_{uv}(U_A \otimes U_B)^{\dagger}$ is a \zmatrix{}. $h'_{uv} =(U_A \otimes U_B) h_{uv} (U_A \otimes U_B)^{\dagger}$ cannot be a \zmatrix{}, and so $D_{uv} \neq h_{uv}$. However both $D_{uv}$ and $h_{uv}$ must share the same purely two local parts. As such, $D'_{uv}$ must differ from $h'_{uv}$ by its 1-local terms, and in particular one or both of the $-X_u$ and $-X_v$ terms. Since $h'_{uv}$ is not stoquastic, $D'_{uv}$ must have more support on either $-X_u$ or $-X_v$ than $h'_{uv}$. Suppose wlog that $D'_{uv}$ has more support on $-X_u$, then, by the parsimonious reasoning outlined earlier, there must exist some $D'_{ux}$ which has less support on $-X_u$ than $h'_{ux}$ does, and consequently $D'_{ux}$ is not a \zmatrix{}, leading to a contradiction.
\end{proof}

\subsection{Efficient Algorithm for XYZ Heisenberg Models}
\label{sec:proof-XYZ}

In this section we present and prove our main result, namely that there is an efficient algorithm to decide if an XYZ Heisenberg Hamiltonian is stoquastic.

We define an XYZ Heisenberg Hamiltonian as an $n$ qubit Hamiltonian of the form 
\begin{align}
H =& \sum_{uv} H_{uv}, \\
H_{uv} =& a_{XX}^{uv} X_u X_v+ a_{YY}^{uv} Y_u Y_v+ a_{ZZ}^{uv} Z_u Z_v.
\end{align}
The interaction graph of the Hamiltonian may be of any form.
Our main theorem is as follows.
\begin{theorem}
There exists a constructive algorithm, which runs in $O(n^3)$ time, that decides if an ${\rm XYZ}$ Hamiltonian H is stoquastic. 
\end{theorem}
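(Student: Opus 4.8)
The plan is to reduce the decision problem to a finite search over single-qubit Clifford rotations and, from there, to a polynomially solvable constraint-satisfaction problem on the interaction graph. \textbf{Step 1 (reduction to an edgewise problem).} An XYZ Heisenberg Hamiltonian has no one-local terms, and conjugating $\sigma_i^u\sigma_j^v$ by $U_u\otimes U_v$ again yields an operator that is traceless on each factor, hence a combination of $\sigma_k^u\sigma_l^v$ with $k,l\neq I$; so no local basis change ever creates one-local terms. Proposition~\ref{stoquasticCondition} forces the one-local $X$-coefficients of any two-local \zmatrix{} to be nonpositive, so in any decomposition of the transformed Hamiltonian into \zmatrix{}es the one-local parts must sum to zero and therefore all vanish, and the \zmatrix{}es sharing a given edge $uv$ can be merged into a single \zmatrix{} $D_{uv}$, which must equal $(U_u\otimes U_v)H_{uv}(U_u\otimes U_v)^{\dagger}$. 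Hence $H$ is stoquastic iff there exist single-qubit unitaries $\{U_v\}$ making every rotated edge term a \zmatrix{}; by Lemmas~\ref{cliffordsSuffice} and \ref{cliffDiag} it suffices to search over single-qubit Clifford rotations, which in the $3\times3$ picture of Section~\ref{convRep} act on the (already diagonal) matrix $\beta^{uv}=\mathrm{diag}(a^{uv}_{XX},a^{uv}_{YY},a^{uv}_{ZZ})$ as signed permutation matrices $O_v=D_v\Pi_v$ with $\det O_v=1$.

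\textbf{Step 2 (translation to signed permutations).} Because $S=P=0$, Proposition~\ref{stoquasticCondition} says that $O_u\beta^{uv}O_v^{T}$ is the $\beta$-matrix of a \zmatrix{} exactly when it is diagonal and its $(1,1)$ entry is at most minus the absolute value of its $(2,2)$ entry. Diagonality forces $\Pi_u$ and $\Pi_v$ to agree on every coordinate where $\beta^{uv}$ is nonzero, so $\Pi_u=\Pi_v$ on any edge carrying at least two nonzero couplings; and, writing $P^{\star}$ and $P^{\sharp}$ for the Paulis that $\Pi$ sends to $X$ and to $Y$, the remaining condition becomes the magnitude inequality $|a^{uv}_{P^{\star}P^{\star}}|\ge|a^{uv}_{P^{\sharp}P^{\sharp}}|$ together with the sign constraint $s^{u}_{P^{\star}}s^{v}_{P^{\star}}=-\sign(a^{uv}_{P^{\star}P^{\star}})$, while the other sign bits (hence also the constraint $\det O_v=1$) impose no further restriction. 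Edges with exactly one nonzero coupling impose a weaker constraint — $\Pi_u,\Pi_v$ must agree on the relevant Pauli and send it into $\{X,Z\}$ — and null edges impose none.

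\textbf{Step 3 (solving the combinatorial problem and counting).} On the subgraph of edges with at least two nonzero couplings the permutation assignment $v\mapsto\Pi_v$ is constant on each connected component, leaving at most six candidate permutations per component; the singly-degenerate edges couple these choices across components, giving a constraint-satisfaction problem that one argues splits into small, independent, efficiently decidable pieces. For each globally consistent permutation pattern one sweeps the $O(n^2)$ edges to check the magnitude inequalities, then solves the sign constraints $s^{u}_{P^{\star}}s^{v}_{P^{\star}}=\epsilon_{uv}$ — a linear system over $\mathbb{Z}_2$, equivalently a signed-graph balance test — by elimination. Running this over the finitely many permutation patterns, the $O(n)$ components and the $O(n^2)$ edges, with the $\mathbb{Z}_2$ linear algebra costing up to $O(n^3)$, yields the stated running time; if a consistent assignment is found the Clifford rotations are read off from it, and otherwise $H$ is not stoquastic.

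\textbf{Main obstacle.} Given Proposition~\ref{stoquasticCondition} and the Clifford-suffices lemmas, the conceptual content is short; the real work is the bookkeeping around \emph{degenerate edges} (those with at most one nonzero coupling). They break the tidy picture of a single global permutation propagated along the whole graph, they interact with the sign constraints, and one must verify that the resulting constraint-satisfaction problem genuinely decomposes into polynomially solvable pieces rather than something harder. Exhausting that case analysis — and confirming en route that no non-edgewise decomposition into \zmatrix{}es can ever help — is where the difficulty lies.
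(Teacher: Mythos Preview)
Your high-level plan matches the paper's: reduce to Cliffords via Lemmas~\ref{cliffordsSuffice} and~\ref{cliffDiag}, observe that edges of rank $>1$ force equal permutations at their endpoints, quotient these \highrank\ components out, solve a permutation problem on the resulting rank-1 graph, and finally solve a $\mathbb{Z}_2$ (Ising/balanced-sign) problem for the sign bits. Steps~1 and~2 are correct; the minor overstatement in Step~1 that ``all one-local parts vanish'' should read ``all one-local $X$-parts vanish'' (the $Z$-parts are diagonal and unconstrained), but since the diagonal freedom is irrelevant to the $Z$-matrix conditions, the conclusion that one may work edge-by-edge stands.

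The genuine gap is in Step~3. You assert that the permutation CSP on the quotient graph ``splits into small, independent, efficiently decidable pieces'', but this is precisely the nontrivial algorithmic content and is not obvious. After quotienting you have a graph whose vertices carry a variable ranging over $S_3$ (six values, further restricted by admissibility), and each rank-1 edge labelled $i$ imposes $\Pi_u(i)=\Pi_v(i)\in\{1,3\}$. There is no a priori reason such a CSP is tractable, and ``running over the finitely many permutation patterns'' is naively $6^{\Theta(n)}$, not polynomial. The paper's missing idea is this: if any quotient vertex is incident to edges of all three labels the instance is infeasible outright; otherwise every vertex sees at most two labels $i,j$, which forces the remaining label $k$ to be sent to position~$2$ and collapses the choice at that vertex to a single bit (whether $i\mapsto 1$ or $i\mapsto 3$). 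After temporarily excising single-label vertices, the edge constraints become genuine XOR constraints on these bits (Table~\ref{rubric}), i.e.\ an Ising/XOR-SAT instance with exactly two solutions, and the single-label vertices are then filled back in. You correctly flag this as the ``main obstacle'', but without actually supplying the reduction to a binary problem your $O(n^3)$ runtime claim is unsupported.
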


\subsubsection{Preliminaries and Proof Outline}
Our proof consists of first describing an algorithm, and secondly proving the correctness of the algorithm.

The general aim of the algorithm is to find a set of single qubit Clifford rotations which transforms the Hamiltonian into a Z-matrix. If such a set is found, then our Hamiltonian is stoquastic.  We claim that if our algorithm cannot find such a set, then no such set exists, and furthermore that this implies that no local unitary rotation exists which makes $H$ a Z-matrix, and therefore the Hamiltonian is not stoquastic.

We find that the most natural way to represent the problem is as a matrix-weighted graph, with each vertex corresponding to a qubit, and each edge corresponding to an interaction term $H_{uv}$. The matrix weight of the edge corresponding to the term $H_{uv}$ is the $\beta^{uv}$ matrix discussed in Section \ref{convRep}, i.e.  a $3\times 3$ real diagonal matrix whose diagonal entries correspond to the coefficients of the XX, YY and ZZ terms. As discussed in Section \ref{convRep}, single qubit unitary rotations on qubits $u$ and $v$ correspond to $SO(3)$ rotations $O_u$ and $O_v$ acting by left and right matrix multiplication: $O_u \beta^{uv} O_v^T$. In this picture single-qubit Clifford rotations correspond to signed permutations with determinant 1, and so the goal of our algorithm is to assign signed permutations $\sPi_u$ to each vertex so that they transform all of the weights $\sPi_u \beta^{uv} \sPi_v^T$ into a form which satisfies the stoquasticity conditions given by proposition \ref{stoquasticCondition}. We refer to such an assignment $\{ \sPi_u \}$ as a \textbf{solution}.

Our algorithm relies on three important points. The first is that it is sufficient to consider only Clifford transformations, instead of all possible unitaries, so in our algorithm we only need to consider signed permutations. The second point is that any solution which transforms the Hamiltonian into a Z-matrix must preserve the diagonal form of the $\beta$ matrices. This follows from Proposition \ref{stoquasticCondition} and the fact that there are no single qubit terms. Thus an assignment of signed permutations $\{\sPi_u \}$, with $\tilde{\beta}^{uv}=\sPi_u \beta^{uv} \sPi_v^T$ is considered a solution if and only if
\begin{align}\label{diag}
\tilde{\beta}^{uv}_{ij}=\delta_{ij} \tilde{\beta}^{uv}_{ii} \; \mbox{, ie $\tilde{\beta}$ remains diagonal,}
\end{align}
and
\begin{align}\label{sign}
\tilde{\beta}^{uv}_{11}\leq& - \vert \tilde{\beta}^{uv}_{22} \vert 
\end{align}
the latter of which is equivalent to
\begin{align}
\label{perm}
[\tilde{\beta}^{uv}_{11}]^2 \geq & [\tilde{\beta}^{uv}_{22}]^2  \mbox{ and} \\
\tilde{\beta}^{uv}_{11} \leq  & 0 \label{negative}
\end{align}

The third point is that when an edge has a matrix weight $\beta$ which has a rank greater than 1, then in order to preserve the diagonal form of the matrix, the signed permutation acting from the left must be the same as to the signed permutation acting on the right, up to a difference of signs.

Any solution of signed permutations $\{\sPi_u\}$ admits a decomposition into a specification of permutations $\{\Pi_u \in S_3\}$, and signed diagonal matrices $\{R_u  = {\rm diag}(\pm 1, \pm 1, \pm 1) \}$ so that $\{\sPi_u = R_u \Pi_u \}$. 
The algorithm thus breaks up into two parts. The goal of the first part is to restrict the possible permutations $\{\Pi_u\}$  to those which satisfy condition \ref{perm}. The goal of the second part is to determine for which of those possible permutations an assignment of signs $\{R_u\}$ can be made so that condition \ref{negative} is also satisfied.

The general strategy of the first part of the algorithm is to `quotient out' clusters of vertices which are connected by \highrank\ edges, since the permutations applied to those vertices need to be all identical. 
One is left with a graph composed only of \lowrank\ edges. The goal is then to find permutations at the vertices which transform away any rank-1 $\beta$ matrix proportional to a YY term in the Hamiltonian, since such lone (not accompanied by XX) YY terms are forbidden by inequality \ref{perm}. 
By temporarily ignoring particular vertices, this task can be reduced to deciding if there exists an exact satisfying assignment to a classical Ising problem, which is an XOR-SAT problem and can be decided straightforwardly. The solution to this Ising problem translates back to the original graph in the form of sets of compatible permutation assignments. 

We then proceed to the second part of the algorithm, in which we determine for which (candidate) selections of permutations one can choose signs so that condition \ref{negative} is also satisfied. 

For any given choice of permutations, deciding the existence of such appropriate signs again reduces to deciding if there exists an exact solution to a classical Ising problem, and the number of such Ising problems one needs to check is polynomially bounded.

For the sake of clarity we present the definitions, the algorithm, and the proof of correctness separately, but the reader may find it useful to read these sections in parallel.

\subsubsection{Definitions}
\begin{definition}[\basegraph]
Given a Hamiltonian, the \textbf{\basegraph}\ is a matrix weighted graph, with each vertex corresponding to a qubit, and each edge $(u,v)$ weighted by the $\beta^{uv}$ matrix of the $H_{uv}$ term in the Hamiltonian.
\end{definition}
In this algorithm we will be considering weighted interactions graphs of $n$ qubit XYZ Heisenberg Hamiltonians.


\begin{definition}[\lowrank\ and \highrank\ edge]
An edge in a \basegraph\ is \textbf{\lowrank}\ if its matrix weight has a rank 1. Conversely, an edge in a \basegraph\ is \textbf{\highrank}\ if its matrix weight has a rank greater than 1.
\end{definition}

\begin{definition}[\highrank\ connected component]
Consider a \basegraph\ $G$. Remove all \lowrank\ edges. One is left with a set of distinct connected components which are composed entirely of \highrank\ edges. Each of these connected components is a subgraph of $G$ which we call a \highrank\ connected component $\Gamma$. 
\end{definition}
Note that a \highrank\ connected component may contain a single vertex with no edges, in the case where some vertex is connected to only \lowrank\ edges. Note also that every vertex in the weighted interaction graph belongs to exactly one \highrank\ connected component.

\begin{definition}[\quotientGraph]
Given a \basegraph\ of an XYZ Heisenberg Hamiltonian, a \textbf{\quotientGraph} is a multi-graph with labelled edges which is constructed as follows. For each \highrank\ connected component in the \basegraph\, populate the \quotientGraph\ with a corresponding vertex. Since every vertex in the \basegraph\ belongs to exactly one \highrank\ connected component, every vertex in the \basegraph\ has a corresponding representative in the \quotientGraph . For every pair of vertices in the \basegraph\ connected by a \lowrank\ edge, connect their corresponding representative vertices in the \quotientGraph\ by an edge. Label this new edge either $1$, $2$, or $3$, corresponding to the index $i$ for which $\left[\beta \right]_{ii} \neq 0$. 
\end{definition}

This construction effectively quotients out all the \highrank\ connected components. It is however a multi-graph because there may be multiple \lowrank\ edges connecting the vertices in a pair of \highrank\ connected components.

\begin{definition}[single-label vertex]
Given a \quotientGraph\, a \textbf{single-label vertex} is a vertex incident to edges of a single label.
\end{definition}

\begin{definition}[single-label connected component]
Consider a \quotientGraph\ $G_Q$ with all vertices removed which are not single-label, call this graph $G_Q^-$. For each subset of vertices $S_Q$ associated with a connected component of $G_Q^-$, the subgraph $\Gamma_Q$ of $G_Q$ induced \footnote{A subgraph of a graph $G=(V,E)$ induced by a subset of vertices $S \subseteq V$ is defined as a graph whose vertex set is $S$ and whose edge set is all edges in $E$ that have both endpoints in $S$.} by $S_Q$ is called a \textbf{single-label connected component}. 

\end{definition}

Note that because every vertex in a given single-label connected component is incident to edges of a single label $i$, every edge in the single-label connected component must have a common label $i$, and we may say that a single-label connected component is labelled $i$. Every single-label vertex in the \quotientGraph\ belongs to exactly one single-label connected component. 


\begin{definition}[heterogeneous \quotientGraph]
Given a \quotientGraph , the \textbf{heterogeneous \quotientGraph} is a copy of the \quotientGraph\ modified in the following way. For each single-label connected component, with label $i$, remove all vertices in the single-label connected component from the \quotientGraph , and connect every vertex in the boundary of the single-label connected component to every other vertex in the boundary, as well as to itself, by an edge labelled by $i$. 
\end{definition}

By construction, every single-label vertex is removed from the heterogeneous \quotientGraph . Therefore every vertex in this new heterogeneous \quotientGraph\ will be incident to edges labelled by more than one label.

\begin{definition}[Admissible Permutations]
The set of \textbf{admissible permutations} $\tilde{S}$ of a \highrank\ connected component $\Gamma$ is a subset of permutations $\tilde{S} \subseteq S_3$  such that for each permutation $\Pi \in \tilde{S}$, when applied identically to every vertex in $\Gamma$, transforms every edge weight in $\Gamma$ to a form which satisfies condition \ref{perm}. Since every vertex $u$ in both the \quotientGraph\ and the heterogeneous \quotientGraph\ is associated with a \highrank\ connected component, each such vertex is associated with a set of \textbf{admissible permutations} $\tilde{S}_u$.
\end{definition}


\begin{definition}[Compatible Set of Permutation Assignments]
Given a \quotientGraph , or a heterogeneous \quotientGraph\ , a \textbf{compatible set of permutation assignments} is an assignment of sets of possible permutations to each vertex in the graph: $A=\{\Sigma_u \subseteq \tilde{S}_u \}$ satisfying the following property. For every edge with label $i$ and end vertices $u$ and $v$, there exists a number $j \neq 2$ such that every permutation $\Pi$ in $\Sigma_u$ or $\Sigma_v$ satisfies $\Pi(i) =j$. 

\end{definition}

\begin{definition}[partitioned \basegraph]
Given a \quotientGraph , a \basegraph\, and a compatible set of permutation assignments: $A=\{\Sigma_u \subseteq S_3 \}$, the \textbf{partitioned \basegraph} is a copy of the \basegraph\ modified in the following way. For each edge in the \quotientGraph\ labelled $i$, with end vertices $u$ and $v$, any permutation $\Pi$ in $\Sigma_u$ or $\Sigma_v$ will map $i$ to specific label $\Pi(i)=j \neq 2$. If $j \neq 1$, remove the corresponding edge from the \basegraph . 
\end{definition}

Note that the partitioned \basegraph\ may or may not be fully connected.


\begin{definition}[Ising Partition]
Given a connected component $\Gamma_W$ of a partitioned \basegraph\, and an assignment of permutations $\{\Pi_u\}$ to each vertex $u$ in the connected component, an \textbf{Ising Partition} is a copy of $\Gamma_W$, with each edge labelled by the value $ -\textrm{\rm sign}([\Pi_u \beta^{uv} \Pi_v^T]_{11})$.  \label{isingSign}
\end{definition}

The Ising partition can be thought of as a classical Ising model, since it is a graph whose edges are labelled by $+1$ or $-1$. One can ask if there exists an assignment of $+1$ or $-1$ to each vertex in the graph so that for every edge the product of its neighbouring vertices is equal to the sign associated with that edge. If such an assignment exists, we call it an exact satisfying solution to the classical Ising problem. Note that this problem is distinct from the generic classical Ising problem, which asks for a minimizing assignment instead of an exact assignment. It is straightforward to decide if such an exact satisfying solution exists. Simply assign a sign to one vertex, and propagate that choice through the graph, according to the edge constraints, until you either encounter a term which is not satisfied under this procedure, or have successfully assigned signs to every vertex. Since the procedure runs over all edges it runs in $O(n^2)$ time. If a solution exists, then two solutions must exist, since one can simply flip every sign and still have solution.

\subsubsection{The Algorithm}
The algorithm presented here takes as input an $n$ qubit XYZ Heisenberg Hamiltonian, and outputs either true, indicating that the Hamiltonian is stoquastic, or false, indicating that the Hamiltonian is not stoquastic. In the event that the algorithm outputs true, the algorithm also provides a set of local unitary rotations which transform the Hamiltonian into a Z-matrix. In this sense it is a constructive algorithm.

\noindent\rule{0.5 \textwidth}{0.4pt}

\noindent\textbf{Part 1 of the Algorithm}

\begin{wrapfigure}{r}{.19 \textwidth}
\includegraphics[width=.19 \textwidth]{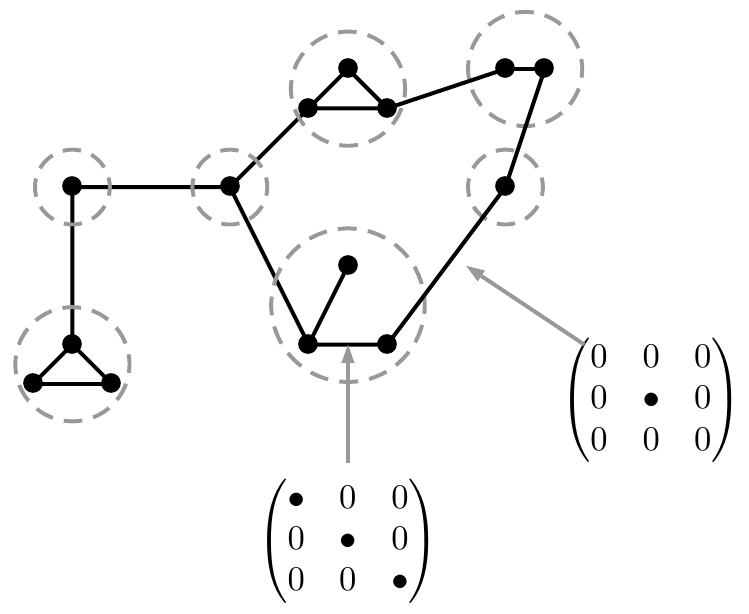}
\end{wrapfigure}
\textbf{Step 1:} Construct the \basegraph\ of the Hamiltonian. $O(n^2)$

%

\textbf{Step 2:}  Identify all \highrank\ connected components of the \basegraph\, and determine their admissible permutations. $O(n)$



\begin{wrapfigure}{l}{.21 \textwidth}
\includegraphics[width=.21 \textwidth]{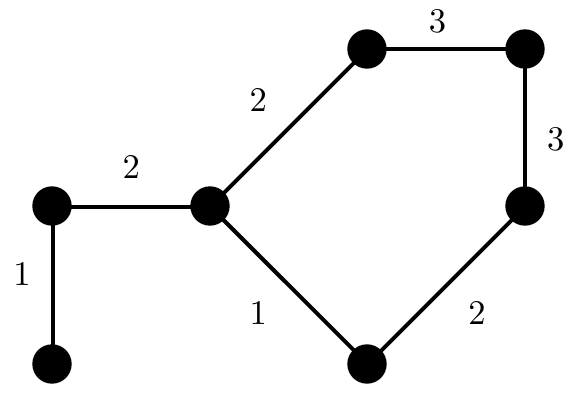}
\end{wrapfigure}

\textbf{Step 3:} Construct the \quotientGraph\ of the \basegraph . $O(n^2)$

\textbf{Step 4:} If there are any vertices in the \quotientGraph\ which connect to at least three edges, each with a different label, then return \textbf{false}. If not, then every vertex in our graph connects to edges which are labelled by at most two distinct labels. $O(n^2)$

\textbf{Step 5 edge case:} If the whole \quotientGraph\ is itself a single-label connected cluster, then and only then will the heterogeneous \quotientGraph\ be empty.  If so proceed to the step 9 edge case. $O(n)$

\begin{wrapfigure}{l}{.2 \textwidth}
\includegraphics[width=.2 \textwidth]{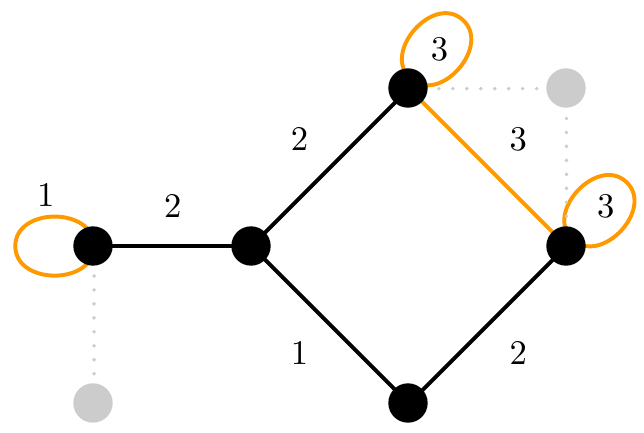}
\end{wrapfigure} 
\textbf{Step 5:} Construct the heterogeneous \quotientGraph\ from the \quotientGraph . Since every vertex in the \quotientGraph\ connects to edges which are labelled by at most two distinct labels, every vertex in the  heterogeneous \quotientGraph\ must be incident to edges with exactly two distinct labels. Let us denote a vertex incident to edges labelled $i$ and $j$ as an $ij$-vertex. $O(n)$

\textbf{Step 6:} A classical Ising problem may be specified by a graph with $+1$ and $-1$ labels on the edges, \begin{wrapfigure}{l}{.2 \textwidth}
\includegraphics[width=.2 \textwidth]{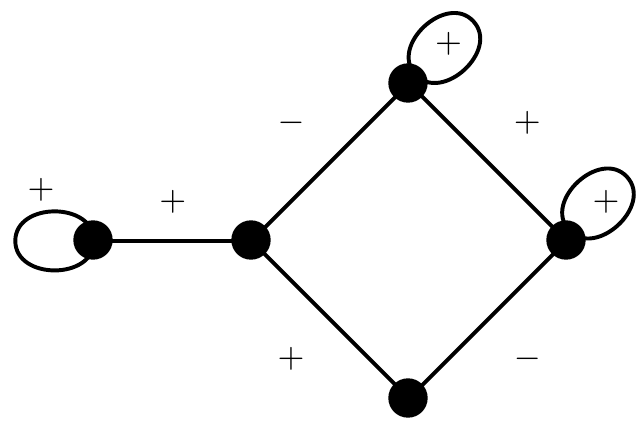}
\end{wrapfigure}
and the solution corresponds to an assignment of $+1$ and $-1$ to the vertices so that the product of an edge with its vertex assignments produces $+1$.  Using the heterogeneous \quotientGraph , construct an Ising problem in the following way. Copy the graph structure of the heterogeneous \quotientGraph . Label the edges of this new graph using the following prescription.  $O(n^2)$

\begin{itemize}
\item  For each edge in the heterogeneous \quotientGraph\ labelled $2$, with one incident $12$-vertex and one incident $23$-vertex, label the corresponding edge in the Ising problem with a $-1$.
\item Label all other edges in the Ising problem with a $+1$.
\end{itemize}

\begin{wrapfigure}{r}{.22 \textwidth}
\includegraphics[width=.22 \textwidth]{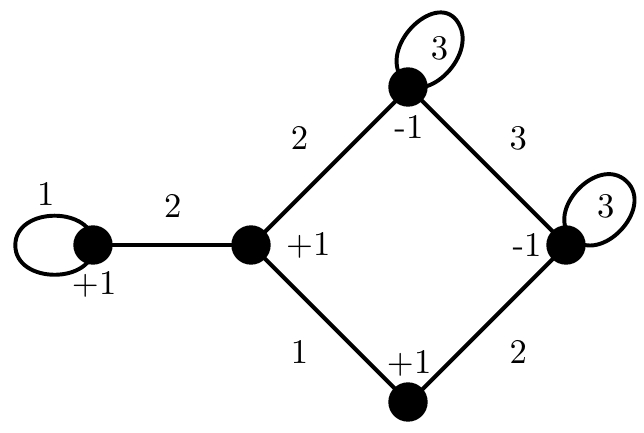}
\end{wrapfigure}
\textbf{Step 7:} If there does not exist an exactly satisfying solution to this classical Ising problem then return \textbf{false}. If there does, then there must exist two possible solutions $I_1$ and $I_2$.  $O(n)$

\begin{wrapfigure}{r}{.22 \textwidth}
\includegraphics[width=.22 \textwidth]{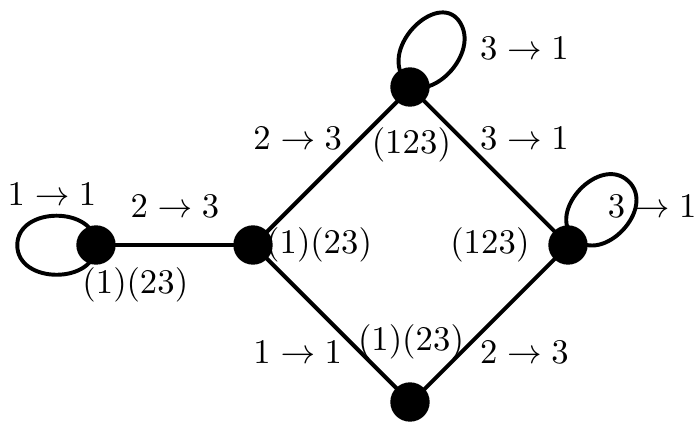}
\end{wrapfigure}
\textbf{Step 8:} A solution to the Ising problem described above uniquely specifies an assignment of permutations to the vertices of the heterogeneous \quotientGraph , when it is combined with the information about the edges to which each of those vertices connect. This is most intuitively illustrated by figure  \ref{hexagon}, and a concise description of the unique assignment is given by table \ref{rubric}.
\begin{table}[h!]
\begin{center}
\begin{tabular}{|c|c|c|}
\hline
Ising solution:  & +1 & -1  \\
\hline 
 13-vertex & (1)(2)(3) & (13)(2)\\
 12-vertex & (1)(23) & (132)\\
 23-vertex & (12)(3) & (123)\\
 \hline
\end{tabular}
\caption{Translation between permutation and Ising solution}\label{rubric}
\end{center}
\end{table}
 
\begin{figure}
\begin{center}
\includegraphics[scale=.6]{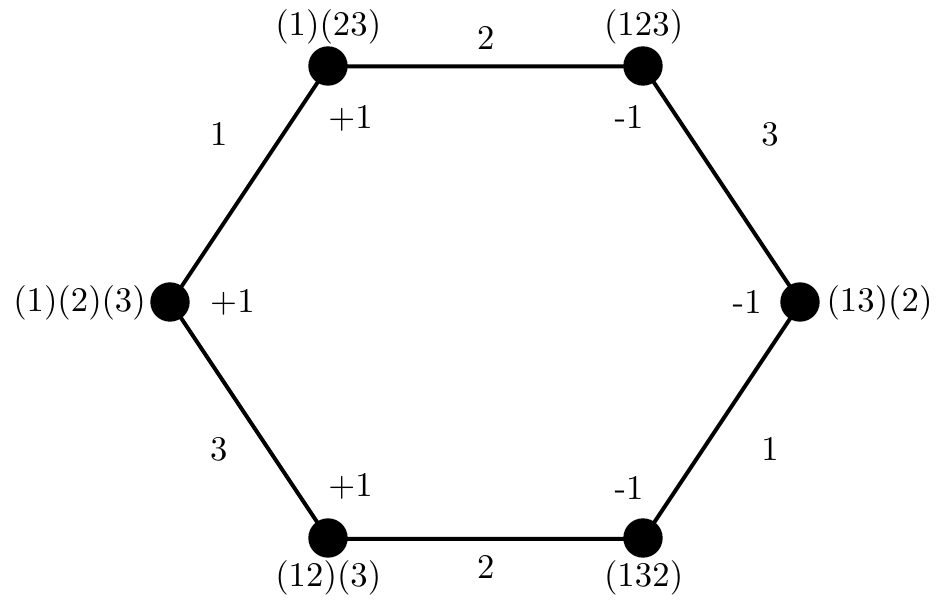}
\end{center}
\caption{A complete illustration of which permutations to apply to a vertex, given a solution to an Ising problem on the heterogeneous \quotientGraph}\label{hexagon}
\end{figure}

 Using this rubric, and the two solutions $I_1$ and $I_2$, construct two possible assignments of permutations to the vertices of the heterogeneous \quotientGraph\ $A^h_1 = \{\Sigma_{u}^{h}=\{\Pi_u^1\}\}$ and $A^h_2 = \{\Sigma_{u}^h= \{\Pi_u^2\}\}$. If for either assignment, the permutation assigned to some vertex is not in the admissible permutations of that vertex, then discard that assignment. If both assignments are discarded in this way, return \textbf{false}. Any remaining $A^h$ constitute compatible sets of permutations assignments to the heterogeneous \quotientGraph , with each $\Sigma_u^h$ containing a single element. $O(n)$
 

\textbf{Step 9 preliminaries} We now wish to construct all compatible sets of permutation assignments for the \quotientGraph . Note that unlike in the case of the heterogeneous \quotientGraph , the permutations at the single label vertices may not be completely specified in a given compatible set of permutations assignment, so that some $\Sigma_u$ may have more than one element. 



\textbf{Step 9 edge case:} In the case where the whole \quotientGraph\ is a single-label connected component with label $i$, as is the case in the step 5 edge case, construct two compatible sets of permutation assignments $A_1$ and $A_2$. $A_1 =\{ \Sigma_u^1 \}$, and $\Sigma_u^1$ is all permutations $\Pi$ in the admissible permutations of vertex $u$ which satisfy $\Pi(i)=1$. $A_2 =\{ \Sigma_u^2 \}$, and $\Sigma_u^2$ is all permutations $\Pi$ in the admissible permutations of vertex $u$ which satisfy $\Pi(i)=3$. If either $A_1$ or $A_2$ contains an element $\Sigma_u$ which is empty, then discard it. If both $A_1$ and $A_2$ have been discarded, return \textbf{false}. $O(n)$

\begin{wrapfigure}{r}{.20 \textwidth}
\includegraphics[width=.20 \textwidth]{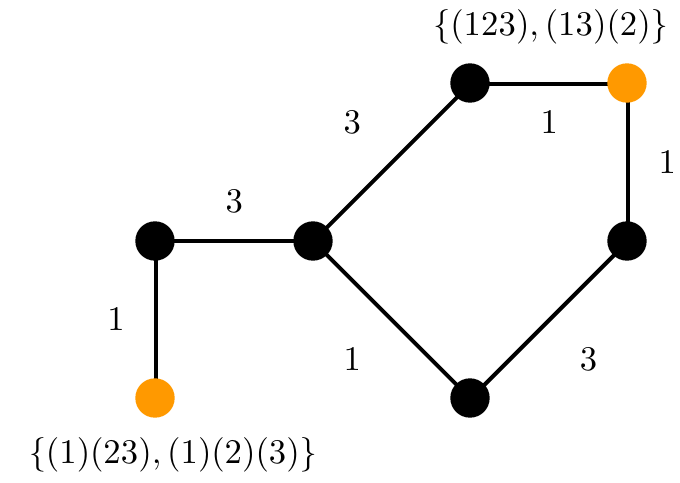}
\end{wrapfigure}
\textbf{Step 9:} Each compatible set of permutationa assignments to the heterogeneous \quotientGraph\ translates into an assignment of permutations to all of the non-single-label vertices in the \quotientGraph . This partially specifies up to two compatible sets of permutation assignments $A_1=\{\Sigma_u^1\}$ and $A_2=\{\Sigma_u^2\}$, where $\Sigma_u^1 = \Sigma_u^h \in A_h^1$ and $\Sigma_u^2 = \Sigma_u^h \in A_h^2$ are single element sets defined only on non-single label vertices. For each compatible set of permutation assignments $A_x$, perform the following procedure. For each single-label connected component in the \quotientGraph , with label $i$, choose a vertex $u$ from its boundary, which will have a permutation $\Pi_u^x$ assigned to it. There exists only one other permutation $\Pi' \neq \Pi_u^x$ such that $\Pi_u^x(i) = \Pi'(i)$. For each vertex $v$ in the single-label connected component, let $\Sigma_v^x$ be all permutations in $\{\Pi', \Pi_u^x \}$ which are also in the admissible permutations of vertex $v$. If either $A_1$ or $A_2$ contains an element $\Sigma_u$ which is empty, then discard it. If both $A_1$ and $A_2$ have been discarded, return \textbf{false}. $O(n)$

\textbf{Conclusion of Part 1 of the Algorithm:}
We have now constructed up to two compatible sets of permutation assignments to the vertices in the \quotientGraph . Together these sets represent all possible assignments of permutations to the vertices which satisfy conditions \ref{perm} and \ref{diag}. The aim of the second part of the algorithm is to determine whether or not there exists any assignments of permutations from these sets which also admit an assignment of signs at every vertex $\{R_u = {\rm diag}(\pm 1, \pm 1, \pm 1) \}$ so that condition \ref{negative} is satisfied.

\noindent\rule{0.5 \textwidth}{0.4pt}

\noindent\textbf{Part 2 of the Algorithm}

\textbf{Step 10: } Let $sol = \{\sPi_u = R_u \Pi_u\}$ store the potential solution. 
For each set of compatible permutation assignments $A_x = \{\Sigma_u^x\}$, $x \in {1,2}$, from part 1, perform steps 11 through 13, discarding those $A_x$ which fail.



\begin{table}[h!b]
\begin{tabular}[t]{@{} m{.25 \textwidth}  m{.25 \textwidth}@{}}
 \hspace{.2cm} \textbf{Step 11:} Construct the partitioned \basegraph . $O(n^2)$  &
\includegraphics[scale=.47]{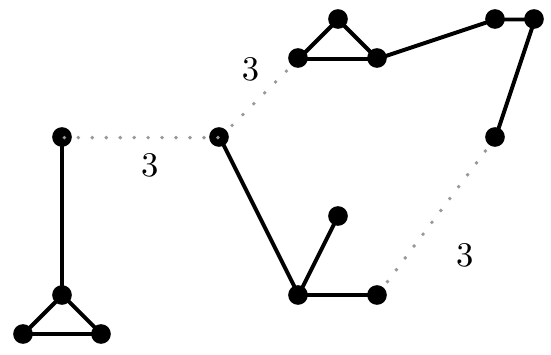}
\end{tabular}
\end{table}

\textbf{Step 12:} By construction, every single-label vertex $u$ in the \quotientGraph , with label $i$, for which the permutations $\Pi$ in $\Sigma_u^x$ satisfy $\Pi(i)=3$, will correspond to a unique connected component in the partitioned \basegraph . For each such vertex $u$, do the following.  For each permutation $\Pi \in \Sigma_u^x$, the permutation assignment for each vertex $v$ in the connected component is given by $\Pi$. Construct the corresponding Ising Partition and decide if it admits a solution. If yes, then choose a solution and let $\sPi_u = \textrm{diag}(\delta_u, 1 , \delta_u \det(\Pi)) \Pi$, and $\delta_u \in \{+1,-1\}$ is given by the solution to the Ising model. If neither permutation in $\Sigma_u^x$ admits a solution, then discard $A_x$. In the worst case there are $n$ single-label vertices, and in the worst case constructing and solving an Ising Partition is $O(n^2)$, so in the naive worse case the runtime is $O(n^3)$.

\begin{wrapfigure}{r}{.23 \textwidth}
\includegraphics[width=.23 \textwidth]{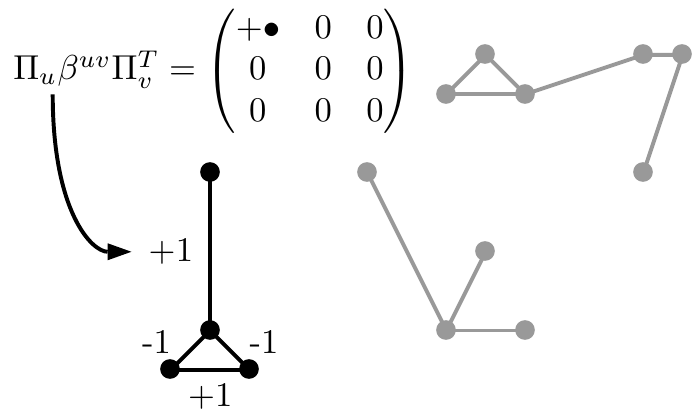}
\end{wrapfigure}
\textbf{Step 13:} For the remaining connected components $\Gamma_P$ in the partitioned \basegraph\, do the following.  For each vertex $v$ in $\Gamma_P$, choose a permutation $\Pi_v \in \Sigma_v^x$.  Equipped with this permutation assignment, construct the corresponding Ising partition and decide if it admits a solution. If not, then discard $A_x$. If yes, then choose a solution and let $\sPi_v = \textrm{diag}(\delta_v, 1, \delta_v \det(\Pi_v)) \Pi_v$ for all $v$ in $\Gamma_P$, and $\delta_v \in \{+1,-1\}$ is given by the solution to the Ising model. By similar reasoning to step 12, the worst case runtime is $O(n^3)$.



\textbf{Step 14:} If all $A_x$ have been discarded, return \textbf{false}. Otherwise return \textbf{true}, along with $sol$, which will have been completely specified for every vertex in the \basegraph .

\subsubsection{Correctness of the algorithm}

It is obvious that the algorithm is complete, as long as it is supplied an XYZ Heisenberg Hamiltonian, it will always either return true or false. We need only prove soundness. The proof of soundness of the algorithm is twofold. We must first show that when the algorithm returns false then $H$ is not stoquastic. Secondly we must show that when the algorithm returns true $H$ is stoquastic, and $sol$ corresponds to the rotation which transforms it into a Z-matrix.

\begin{lemma}[All Diagonal Matrix Weighted Graphs are Connected by Signed Permutations]\label{cliffDiag}
Given a set of diagonal $3\times 3$ matrices $\{ \beta^{uv} \}$ which are diagonal in some basis $b=\{\hat{e}_1, \hat{e}_2, \hat{e}_3 \}$. For any set of  $SO(3)$ rotations $\{O_u\}$ such that $ O_u \beta^{uv} O_v^T$ is also diagonal in basis $b$ for all $\beta^{uv}$, there exists a set of signed permutations $\{\sPi_u\}$ such that  $\sPi_u \beta^{uv} \sPi_v^T =  O_u \beta^{uv} O_v^T$
\end{lemma}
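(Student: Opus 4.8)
The plan is to analyze, edge by edge, what constraints the condition ``$O_u\beta^{uv}O_v^T$ is diagonal in $b$'' places on the pair $(O_u,O_v)$, and then argue that one may replace each $O_u$ by a signed permutation $\sPi_u$ without changing any of the products $O_u\beta^{uv}O_v^T$. The first step is to reduce to a single connected component of the graph on which the matrix weights live, since the choice of $\sPi_u$ on distinct components is independent. Within a component, pick a spanning tree and argue by propagating choices outward from a root.

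The key structural input is the case distinction by rank of $\beta^{uv}$. First I would treat a \highrank\ edge $uv$: writing $\beta^{uv}={\rm diag}(b_1,b_2,b_3)$ with at least two nonzero entries, the requirement that $O_u\beta^{uv}O_v^T$ be diagonal forces $O_u$ and $O_v$ to agree up to signs and up to a permutation that fixes the (at least two) coordinates where $b_i\neq b_j$; when all three entries are distinct and nonzero this pins down $O_u O_v^{-1}$ to a diagonal sign matrix, and more generally $O_uO_v^{-1}$ lies in the subgroup of $SO(3)$ stabilizing the eigenspace decomposition of $\beta^{uv}$. The point is that this subgroup, intersected with the constraint of mapping the standard basis directions to standard basis directions (up to sign), is generated by signed permutations. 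For a \lowrank\ edge $uv$, $\beta^{uv}$ has a single nonzero diagonal entry, say in direction $\hat e_i$; diagonality of $O_u\beta^{uv}O_v^T$ requires only that $O_u\hat e_i$ and $O_v\hat e_i$ both point along a common coordinate axis $\hat e_j$ (the same $j$ for both, since the product must be a nonzero multiple of $\hat e_j\hat e_j^T$) — so the constraint only involves the image of one coordinate direction under each endpoint.

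The construction of $\sPi_u$ then goes as follows. For each vertex $u$, I want to define $\sPi_u$ so that for every coordinate direction $\hat e_i$ that is ``active'' at $u$ — i.e.\ either $u$ lies in a \highrank\ component (where all three directions are active) or $u$ is an endpoint of a \lowrank\ edge labelled $i$ — we set $\sPi_u\hat e_i := \sign(\langle \hat e_{k},O_u\hat e_i\rangle)\,\hat e_{k}$ where $\hat e_k$ is the (necessarily unique) coordinate axis that $O_u\hat e_i$ lies along. The rank analysis above guarantees $O_u\hat e_i$ genuinely lies along a coordinate axis for every active $i$, so this is well-defined; on any remaining inactive directions one extends $\sPi_u$ arbitrarily to a signed permutation of determinant matching what is needed (here one uses that a partial signed-permutation assignment on a subset of coordinates always extends to a full one). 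Finally one checks $\sPi_u\beta^{uv}\sPi_v^T = O_u\beta^{uv}O_v^T$ on each edge: for a \lowrank\ edge this is the rank-one identity $\sign(\cdot)\sign(\cdot)b_i\,\hat e_k\hat e_k^T$ which matches $O_u\beta^{uv}O_v^T$ by construction of the signs; for a \highrank\ edge one verifies it entry by entry on the three diagonal directions.

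The main obstacle I expect is the \highrank\ case when $\beta^{uv}$ is \emph{degenerate} — two equal nonzero eigenvalues, or one zero eigenvalue together with two equal nonzero ones — because then $O_uO_v^{-1}$ need not be a signed permutation (it can be a genuine rotation within the degenerate eigenspace), so one cannot read off $\sPi_u$ and $\sPi_v$ independently and must instead first choose $\sPi$ on one endpoint and then define it consistently on the other. This is handled by the spanning-tree propagation: fix $\sPi$ at the root of each \highrank\ component, and for each subsequent tree edge use the relation $O_u\beta^{uv}O_v^T = $ (diagonal) to deduce that replacing $O_v$ by the already-fixed $\sPi_u$ followed by whatever residual signed permutation is forced still reproduces the product. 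Care is needed that the cycles in the (non-tree) edges of the component are automatically satisfied — this follows because all those edges impose the \emph{same} type of constraint and the matrix weights are all simultaneously diagonal in $b$ to begin with, so consistency around a cycle is inherited from the hypothesis that the $O_u$'s already satisfied every edge constraint.
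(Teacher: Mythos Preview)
Your approach is quite different from the paper's and has a real gap at the degenerate \highrank\ case, which you flag but do not resolve. Your paragraph-3 construction presumes that for every ``active'' direction $\hat e_i$ at $u$ the image $O_u\hat e_i$ lies along a coordinate axis; but with $\beta^{uv}={\rm diag}(1,1,0)$ and $O_u=O_v$ an arbitrary rotation about $\hat e_3$, the product $O_u\beta^{uv}O_v^T=\beta^{uv}$ is diagonal while $O_u\hat e_1$ points nowhere special. Your proposed spanning-tree fix is too vague to constitute a proof: you do not say how to choose $\sPi$ at the root, you do not specify the ``residual signed permutation'' when $\beta$ has a kernel (so solving $\sPi_u\beta\sPi_v^T=\tilde\beta$ for $\sPi_v$ has genuine freedom), and the cycle-consistency claim is asserted rather than argued.

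The paper sidesteps all graph analysis with a single device: a map $\sPi(\cdot)$ from $SO(3)$ to signed permutations defined \emph{purely in terms of $O$}, independent of any edge. Concretely, $\sPi(O)$ picks the lexicographically-first nonzero monomial in the $3\times 3$ determinant expansion of $O$ and records the signs of the three contributing entries as a signed permutation matrix. The work is then a purely two-matrix lemma: if $\beta$ and $O_L\beta O_R^T$ are both diagonal, then $\sPi(O_L)\,\beta\,\sPi(O_R)^T=O_L\beta O_R^T$. This follows from the observations that (i) $|\lambda_m|\neq|\tilde\lambda_n|$ forces $[O_L]_{nm}=[O_R]_{nm}=0$, and (ii) on columns with $\lambda_m\neq 0$ the entries of $O_L$ and $O_R$ agree up to the sign ${\rm sign}(\lambda_m\tilde\lambda_n)$, a pattern that survives the monomial extraction. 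Because $\sPi(O_u)$ is defined once and for all from $O_u$ alone, the same $\sPi_u$ automatically works on every edge incident to $u$; the passage from one matrix to a set of matrices is free, with no spanning tree, no root choice, and no cycle-consistency argument needed.
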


The proof of lemma \ref{cliffDiag} can be found in appendix \ref{permsandrefs}

\begin{lemma}[Single-Qubit Clifford Rotations Suffice] \label{cliffordsSuffice}
The ${\rm XYZ}$ Heisenberg Hamiltonian can be transformed into a $Z$-matrix by single-qubit unitary transformations if and only if it can be transformed into a $Z$-matrix by single-qubit Clifford transformations.
\end{lemma}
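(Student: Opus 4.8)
The plan is to establish the nontrivial implication: if some tuple of single-qubit unitaries $\{U_u\}$ conjugates $H$ into a symmetric $Z$-matrix, then some tuple of single-qubit Clifford unitaries does as well (the converse is immediate, since Clifford unitaries are in particular single-qubit unitaries). Let $O_u \in SO(3)$ be the rotation associated to $U_u$ as in Section \ref{convRep}. Because an XYZ Heisenberg Hamiltonian has no $1$-local terms, the conjugated Hamiltonian $H' = (\bigotimes_u U_u) H (\bigotimes_u U_u)^\dagger$ also has none, and its edge term on $uv$ is described by $\tilde{\beta}^{uv} = O_u \beta^{uv} O_v^T$.

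The key structural step is to show that $H'$ is a symmetric $Z$-matrix if and only if every $\tilde{\beta}^{uv}$ is diagonal in the $(X,Y,Z)$ basis with $\tilde{\beta}^{uv}_{11} \le -|\tilde{\beta}^{uv}_{22}|$. I would prove this by inspecting the entries of $H'$ in the computational basis. A matrix element between two bit strings differing in exactly two adjacent sites $u,v$ receives a contribution only from the $\{X,Y\}\otimes\{X,Y\}$ part of the edge term on $uv$ — nothing else flips precisely those two qubits and no others — so that block must itself be a two-qubit $Z$-matrix, and the computation behind Proposition \ref{stoquasticCondition} gives $\tilde{\beta}^{uv}_{12}=\tilde{\beta}^{uv}_{21}=0$ together with $\tilde{\beta}^{uv}_{11}\le -|\tilde{\beta}^{uv}_{22}|$. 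A matrix element between two bit strings differing in exactly one site $v$ is a signed sum, over the edges incident to $v$, of the $XZ$- and $YZ$-type entries of the corresponding $\tilde{\beta}$'s, weighted by $\pm 1$ (from the $Z$-eigenvalues of the frozen neighbours) and by the relevant single-qubit matrix element of $X$ or $Y$; since that matrix element is real for $X$ and imaginary for $Y$, requiring this sum to be real for every sign pattern forces the $YZ$-entries to vanish, and requiring it to be non-positive for every sign pattern forces the $XZ$-entries to vanish. Running the same argument with the two endpoints of each edge swapped kills the $ZX$- and $ZY$-entries too, so each $\tilde{\beta}^{uv}$ is diagonal; the reverse implication is the same calculation read backwards. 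I expect this step to be the main obstacle — the conceptual content is that, with no $1$-local terms available and with distinct edges supported on distinct qubit pairs, there is nothing that can cancel a stray off-diagonal Pauli pair, so the $Z$-matrix conditions decouple edge by edge.

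With the structural step in hand the conclusion follows quickly. We now have diagonal matrices $\{\beta^{uv}\}$ (diagonal in $b=(X,Y,Z)$) and rotations $\{O_u\}\subset SO(3)$ such that $O_u\beta^{uv}O_v^T$ is diagonal in $b$ for every edge, so Lemma \ref{cliffDiag} supplies signed permutations $\{\sPi_u\}$ with $\sPi_u\beta^{uv}\sPi_v^T = O_u\beta^{uv}O_v^T = \tilde{\beta}^{uv}$ for all $uv$. These products have the same diagonal entries as before, so $\tilde{\beta}^{uv}_{11}\le -|\tilde{\beta}^{uv}_{22}|$ still holds, and by the structural step conjugating $H$ by the single-qubit operations associated to $\{\sPi_u\}$ again yields a symmetric $Z$-matrix. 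Finally, to guarantee these are Clifford (i.e.\ $\det = +1$): if some $\sPi_u$ has $\det\sPi_u=-1$, replace it by $\mathrm{diag}(1,-1,1)\,\sPi_u$. This is still a signed permutation with the same permutation part (so constraints from rank${}>1$ edges are unaffected), its determinant becomes $+1$, and since every $\tilde{\beta}^{uv}$ is diagonal, left- or right-multiplying by $\mathrm{diag}(1,-1,1)$ merely negates a $\tilde{\beta}^{uv}_{22}$ entry, which leaves diagonality and the inequality $\tilde{\beta}^{uv}_{11}\le -|\tilde{\beta}^{uv}_{22}|$ intact. Doing this at every offending vertex produces a tuple of determinant-$+1$ signed permutations, which are exactly the single-qubit Clifford rotations (the rotation group of the cube), and they conjugate $H$ into a symmetric $Z$-matrix, completing the proof.
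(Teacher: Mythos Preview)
Your proof is correct and follows essentially the same route as the paper: reduce the $Z$-matrix condition to per-edge diagonality plus $\tilde{\beta}^{uv}_{11}\le -|\tilde{\beta}^{uv}_{22}|$, invoke Lemma~\ref{cliffDiag} to replace the $O_u$ by signed permutations, and then repair determinants by a diagonal sign-flip. The only cosmetic differences are that you spell out the structural step in more detail than the paper (which defers it to Proposition~\ref{stoquasticCondition} and the absence of $1$-local terms) and you fix determinants with $\mathrm{diag}(1,-1,1)$ rather than the paper's $\mathrm{diag}(1,1,\det\sPi_u)$; both choices leave the relevant inequality untouched.
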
 

\begin{proof}
 The ${\rm XYZ}$ Heisenberg Hamiltonian corresponds to a set of diagonal $3\times 3$ matrices $\{ \beta^{uv} \}$, and any transformation which transforms one into a $Z$-matrix corresponds to a set of $SO(3)$ rotations $\{O_u\}$ such that $ \tilde{\beta}_{uv} = O_u \beta^{uv} O_v^T$ satisfies conditions \ref{diag}, \ref{perm} and \ref{negative} in the basis $b$ for all $u$ and $v$. Therefore $\tilde{\beta}_{uv}$ is diagonal, and by lemma \ref{cliffDiag} there exists a set of set of signed permutations $\{\sPi_u\}$ such that $\sPi_u \beta^{uv} \sPi_v^T = O_u \beta^{uv} O_v^T$ also satisfies those conditions. These signed permutations may not have determinant 1, but there exist determinant 1 signed permutations $\sPi_u' = \textrm{diag}(1,1,\det{\sPi_u}) \sPi_u$ such that the matrices $\{\sPi_u'\beta^{uv} \sPi_v'\}$ also satisfy conditions \ref{diag}, \ref{perm} and \ref{negative} in the basis $b$ for all $u$ and $v$.  The other direction of the biconditional is trivial.
\end{proof}

\begin{lemma}[Permutations must be identical on \highrank\ connected components] If a set of signed permutations $\{\sPi_u=R_u \Pi_u\}$ is a solution, then for any \highrank\ connected component $\Gamma$, there exists a permutation $\Pi$ such that $\Pi_u=\Pi$ for every vertex $u$ in $\Gamma$.
\label{rigid}
\end{lemma}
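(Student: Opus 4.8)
The plan is to extract, from the diagonality requirement that a solution imposes on every edge (condition \ref{diag}), the fact that the permutation parts of the signed permutations must agree across any \highrank\ edge, and then to propagate this agreement along paths inside $\Gamma$.

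First I would localize the constraint. For an edge $uv$, being a solution forces $\tilde{\beta}^{uv}=\sPi_u\beta^{uv}\sPi_v^{T}$ to be diagonal; writing $\sPi_u=R_u\Pi_u$ with $R_u$ a $\pm1$ diagonal matrix and $\Pi_u$ a permutation matrix, left and right multiplication by the $R$'s only flips signs of entries and never changes which entries vanish, so $\tilde{\beta}^{uv}$ is diagonal if and only if $\Pi_u\beta^{uv}\Pi_v^{T}$ is diagonal. Next I would compute the zero pattern: since $\beta^{uv}$ is diagonal, the entry $[\Pi_u\beta^{uv}\Pi_v^{T}]_{ij}$ vanishes unless $j=\pi_v(\pi_u^{-1}(i))$, in which case it equals $\beta^{uv}_{\pi_u^{-1}(i)}$. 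Demanding that the off-diagonal entries vanish therefore forces $\pi_u$ and $\pi_v$ to agree on the support $S_{uv}=\{k:\beta^{uv}_{k}\neq0\}$ of the diagonal of $\beta^{uv}$.

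Then I would invoke the hypothesis that $uv$ is \highrank: $\mathrm{rank}(\beta^{uv})>1$ means at least two of its three diagonal entries are nonzero, so $|S_{uv}|\ge2$, and since two permutations of $\{1,2,3\}$ that agree at two points must agree at the third as well, we obtain $\pi_u=\pi_v$ for every \highrank\ edge $uv$. Finally I would propagate this: fix a vertex $u_0$ of $\Gamma$ and set $\Pi:=\Pi_{u_0}$; if $\Gamma$ is a single isolated vertex the claim is immediate, and otherwise, since $\Gamma$ is connected and built entirely from \highrank\ edges, every vertex $u$ of $\Gamma$ is joined to $u_0$ by a path of \highrank\ edges, and applying the previous step along that path yields $\Pi_u=\Pi$.

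I do not anticipate any real obstacle here; the only mild care needed is the bookkeeping in the first two steps of how a signed permutation acts on a diagonal matrix by two-sided multiplication, together with the observation that `rank $>1$' is exactly the threshold that upgrades agreement of $\pi_u$ and $\pi_v$ on $S_{uv}$ to agreement on all of $\{1,2,3\}$.
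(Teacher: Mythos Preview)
Your proposal is correct and follows essentially the same route as the paper's proof: both expand $\sPi_u\beta^{uv}\sPi_v^{T}$ on a diagonal $\beta^{uv}$, read off that diagonality forces $\Pi_u$ and $\Pi_v$ to agree on the support of $\beta^{uv}$, use $\mathrm{rank}>1$ to upgrade this to $\Pi_u=\Pi_v$, and then propagate through the connected component. The only differences are cosmetic---you separate out the diagonal sign matrices first and handle the isolated-vertex case explicitly---so there is nothing substantive to compare.
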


\begin{proof}
If $\{\sPi_u \}$ is a solution, then $\sPi_u \beta^{uv} \sPi_v^T=\sum_m \beta^{uv}_{mm} R_{\Pi_u(m)}^u R_{\Pi_v(m)}^v \ket{\Pi_u(m)} \bra{\Pi_v(m)}$ is diagonal. This implies that if $\beta^{uv}_{mm} \neq 0$ then $\Pi_u(m)=\Pi_v(m)$. Since $\beta^{uv}_{mm} \neq 0$ for two of the three values of $m$, $\Pi_u=\Pi_v$. This holds for every pair of connected vertices in the connected components, hence for all $u \in \Gamma$, $\Pi_u=\Pi$ for some $\Pi$.
\end{proof}

\begin{lemma}[\quotientGraph\ permutation assignments are necessary] If there does not exist a compatible set of permutation assignments for the \quotientGraph , then $H$ is not stoquastic.
\label{compatPerm}
\end{lemma}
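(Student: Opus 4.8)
The plan is to prove the contrapositive: assuming $H$ is stoquastic, I will exhibit a compatible set of permutation assignments for the \quotientGraph. Since $H$ is stoquastic, Lemma~\ref{cliffordsSuffice} supplies a \emph{solution} built from single-qubit Clifford rotations, i.e.\ signed permutations $\{\sPi_u = R_u\Pi_u\}$ such that every transformed weight $\tilde\beta^{uv} = \sPi_u\beta^{uv}\sPi_v^T$ satisfies conditions \ref{diag}, \ref{perm}, and \ref{negative}. By Lemma~\ref{rigid}, the permutation part $\Pi_u$ is constant over each \highrank\ connected component $\Gamma$; call this common value $\Pi_\Gamma$. Because the vertices of the \quotientGraph\ are, by construction, in bijection with the \highrank\ connected components, this already assigns one permutation $\Pi_\Gamma$ to each quotient-graph vertex, and I would take $\Sigma_\Gamma := \{\Pi_\Gamma\}$ as the candidate assignment.

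Next I would check that $\Sigma_\Gamma\subseteq\tilde S_\Gamma$, as required by the definition of a compatible set. For an edge $(u,v)$ internal to $\Gamma$ (necessarily \highrank), the solution gives $\tilde\beta^{uv} = R_u\bigl(\Pi_\Gamma\beta^{uv}\Pi_\Gamma^T\bigr)R_v$; since $R_u,R_v$ only flip signs of diagonal entries of a diagonal matrix, they leave $[\tilde\beta^{uv}_{11}]^2$ and $[\tilde\beta^{uv}_{22}]^2$ unchanged, so the matrix $\Pi_\Gamma\beta^{uv}\Pi_\Gamma^T$ (applying $\Pi_\Gamma$ identically to both endpoints) already satisfies condition \ref{perm}. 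As this holds for every edge inside $\Gamma$, we get $\Pi_\Gamma\in\tilde S_\Gamma$.

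Finally I would verify the edge condition defining compatibility. A quotient-graph edge with label $i$ and endpoints $\Gamma_u,\Gamma_v$ arises from a \lowrank\ base-graph edge $(u',v')$ with $u'\in\Gamma_u$, $v'\in\Gamma_v$ and $\beta^{u'v'} = c\,\ket{i}\bra{i}$, $c\neq 0$. Then $\Pi_{u'}\beta^{u'v'}\Pi_{v'}^T = c\,\ket{\Pi_{u'}(i)}\bra{\Pi_{v'}(i)}$, and after the sign flips this stays off-diagonal unless $\Pi_{u'}(i)=\Pi_{v'}(i)$; condition \ref{diag} forces equality, so $\Pi_{\Gamma_u}(i)=\Pi_{\Gamma_v}(i)=:j$. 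The resulting weight is then $\propto\ket{j}\bra{j}$, and condition \ref{perm} requires $[\tilde\beta^{u'v'}_{11}]^2\ge[\tilde\beta^{u'v'}_{22}]^2$, which fails precisely when $j=2$ (left side $0$, right side $c^2>0$); hence $j\neq 2$. Since $\Sigma_{\Gamma_u}$ and $\Sigma_{\Gamma_v}$ are singletons, this single $j\neq 2$ witnesses the required property for the edge. Thus $A=\{\Sigma_\Gamma\}$ is a compatible set of permutation assignments, contradicting the hypothesis, so $H$ is not stoquastic. The argument is mostly bookkeeping once Lemmas~\ref{cliffordsSuffice} and \ref{rigid} are available; the one step requiring care is the translation of a quotient-graph edge label into the image-equality constraint (from \ref{diag}) and the $j\neq 2$ constraint (from \ref{perm}, \emph{not} \ref{negative}), together with the observation that the sign matrices $R_u$ are harmless because condition \ref{perm} is invariant under diagonal sign flips.
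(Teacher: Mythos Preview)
Your proof is correct and follows essentially the same contrapositive route as the paper: invoke Lemma~\ref{cliffordsSuffice} to obtain a signed-permutation solution, use Lemma~\ref{rigid} to extract a single permutation $\Pi_\Gamma$ per \highrank\ component, set $\Sigma_\Gamma=\{\Pi_\Gamma\}$, and then read off the edge constraint $\Pi_{\Gamma_u}(i)=\Pi_{\Gamma_v}(i)\neq 2$ from conditions~\ref{diag} and~\ref{perm}. Your write-up is in fact more complete than the paper's, since you explicitly verify the admissibility requirement $\Pi_\Gamma\in\tilde S_\Gamma$ (using that sign flips leave condition~\ref{perm} invariant), a step the paper's proof leaves implicit.
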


\begin{proof}
If $H$ is stoquastic, then by lemma \ref{cliffordsSuffice} there exists a set of signed permutations $\{\sPi_u = R_u \Pi_u\}$ which transform $H$ into a Z-matrix, and so for every edge weight $\beta^{uv}$, $\tilde{\beta}^{uv}=\sPi_u \beta^{uv} \sPi_v^T$ satisfies conditions \ref{diag}, \ref{perm} and \ref{negative}. Every vertex $x$ in the \quotientGraph\ is associated with a rigid connected component $\Gamma_x$ in the \basegraph . For every edge in the \quotientGraph , with label $i$, end vertices $x$ and $y$ and associated rigid connected components $\Gamma_x$ and $\Gamma_y$, there exists an edge in the \basegraph\ with end vertices $u \in \Gamma_x$ and $v \in \Gamma_y$ and edge weight $\beta^{uv}$ with rank 1 and $\beta^{uv}_{ii} \neq 0$. By lemma \ref{rigid}, there must exist permutations $\Pi_x$ and $\Pi_y$ such that for every vertex $u$ in $\Gamma_x$ and vertex $v$ in $\Gamma_y$, $\sPi_u = R_u \Pi_x$ and $\sPi_v = R_v \Pi_y$.  By conditions \ref{diag} and \ref{perm}, $\Pi_x(i) = \Pi_y(i)\neq 2$. Thus such a specification of permutations $A = \{ \Gamma_x=\{ \Pi_x\}\}$ corresponds to a compatible set of permutation assignments. The lemma follows contrapositively.
\end{proof}

\begin{lemma}[Single label paths are transformed uniformly]
Given a connected path of edges in a \quotientGraph\ which are identically labelled $i$, for any compatible set of permutation assignments $A=\{\Sigma_u\}$ there exists a label $j \neq 2$ such that for every vertex $u$ in the path, and for every $\Pi \in \Sigma_u$, $\Pi(i)=j$.
\label{path}
\end{lemma}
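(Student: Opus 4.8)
The plan is to prove Lemma \ref{path} by propagating the constraint from the definition of a compatible set of permutation assignments along the path. Recall that for an edge with label $i$ and endpoints $u$ and $v$, the compatibility property guarantees the existence of a number $j \neq 2$ such that every permutation $\Pi \in \Sigma_u$ and every $\Pi \in \Sigma_v$ satisfies $\Pi(i) = j$. So for a single edge the claim is immediate; the content of the lemma is that the same $j$ works along an entire connected path of $i$-labelled edges.

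First I would set up an induction along the path. Label the vertices of the path $u_0, u_1, \ldots, u_m$ so that $u_{t-1}u_t$ is an $i$-labelled edge for each $t$. The base case is the single edge $u_0 u_1$: by compatibility there is some $j \neq 2$ with $\Pi(i) = j$ for all $\Pi \in \Sigma_{u_0} \cup \Sigma_{u_1}$. For the inductive step, suppose every permutation in $\Sigma_{u_0}, \ldots, \Sigma_{u_t}$ sends $i$ to the common value $j$. The edge $u_t u_{t+1}$ is also labelled $i$, so compatibility supplies some $j' \neq 2$ with $\Pi(i) = j'$ for all $\Pi \in \Sigma_{u_t} \cup \Sigma_{u_{t+1}}$. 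Since $\Sigma_{u_t}$ is nonempty (a compatible set of permutation assignments assigns a nonempty set $\Sigma_u \subseteq \tilde S_u$ to each vertex), pick any $\Pi \in \Sigma_{u_t}$; then $j = \Pi(i) = j'$. Hence every permutation in $\Sigma_{u_{t+1}}$ also sends $i$ to $j$, completing the induction. Taking the claim for $t = m$ gives the statement of the lemma, with the single value $j \neq 2$ inherited from the first edge.

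The one point that needs a small remark is the nonemptiness of each $\Sigma_{u_t}$, which is what lets the two locally-defined target labels $j$ and $j'$ be identified: this is guaranteed by the definition of a compatible set of permutation assignments (the algorithm explicitly discards any assignment containing an empty $\Sigma_u$). I do not expect any real obstacle here — the lemma is essentially a connectivity/transitivity argument, and the only subtlety is making sure the ``$\neq 2$'' constraint travels along with the value $j$, which it does automatically since the locally-guaranteed $j'$ is always required to be $\neq 2$ and equals $j$. If the path has length zero (a single vertex with no edges) the statement is vacuous in the sense that any $j \neq 2$ compatible with $\Sigma_{u_0}$ works, and nonemptiness of the admissible set on that component is what is needed; but in the relevant use case the path has at least one edge so the base case applies directly.
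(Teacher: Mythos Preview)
Your argument is correct and is essentially the same induction-along-the-path that the paper uses; the paper's proof is just more terse, saying only that the condition at one vertex propagates to its neighbours by construction and then invoking induction. Your explicit handling of the nonemptiness of each $\Sigma_{u_t}$ (needed to identify the locally-guaranteed $j'$ with the inductive $j$) is a detail the paper leaves implicit.
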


\begin{proof}
For any vertex $u$ in the path, there exists a label $j \neq 2$ such that for every $\Pi \in \Sigma_u$, $\Pi(i)=j$. By construction, every vertex in the path neighbouring $u$ satisfies this condition as well. The proof follows by induction.
\end{proof}



\begin{lemma}[Heterogeneous \quotientGraph\ permutation assignments are necessary] If there does not exist a compatible set of permutation assignments for the heterogeneous \quotientGraph , then there does not exist one for the \quotientGraph .
\label{hetCompatPerm}
\end{lemma}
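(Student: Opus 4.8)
The plan is to argue by contraposition: I will assume that a compatible set of permutation assignments $A=\{\Sigma_u\}$ does exist for the \quotientGraph\ $G_Q$ (taking every $\Sigma_u$ nonempty, as is needed for the assignment to be meaningful) and build one for the heterogeneous \quotientGraph\ $G_Q^h$. The vertices of $G_Q^h$ are precisely the non-single-label vertices of $G_Q$, and for such a vertex the admissible-permutation set $\tilde S_u$ is unchanged, since it depends only on the \highrank\ connected component it represents. So the natural candidate is the restriction $A^h=\{\Sigma_u^h:=\Sigma_u\}$ of $A$ to these surviving vertices; it is automatically a legitimate assignment, and the whole proof reduces to verifying the compatibility condition on every edge of $G_Q^h$.

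The edges of $G_Q^h$ come in two kinds. An edge that was already an edge of $G_Q$ between two non-single-label vertices inherits the compatibility condition verbatim from $A$, so there is nothing to do. The remaining edges are, for each single-label connected component $\Gamma_Q$ of $G_Q$ with label $i$, a new edge labelled $i$ between every pair of boundary vertices of $\Gamma_Q$, together with a self-loop labelled $i$ at each boundary vertex. To handle these I will use the structural observation that the subgraph $\hat\Gamma_Q$ consisting of $\Gamma_Q$, its boundary, and all $G_Q$-edges incident to a vertex of $\Gamma_Q$ is connected (because $\Gamma_Q$ is connected and each boundary vertex is by definition joined to it) and has all of its edges labelled $i$ (because every vertex of $\Gamma_Q$ is single-label with label $i$). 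If the boundary of $\Gamma_Q$ is empty then no new edges are created and there is nothing to check, so I may assume $\hat\Gamma_Q$ has at least one edge.

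Now I invoke Lemma \ref{path}. Between any two vertices of the connected, uniformly $i$-labelled graph $\hat\Gamma_Q$ there is a path all of whose edges carry the label $i$, so, fixing one boundary vertex $u_0$ and applying Lemma \ref{path} along the path from $u_0$ to an arbitrary boundary vertex $v$, I obtain a label $j\neq 2$ such that every $\Pi\in\Sigma_w$ maps $i$ to $j$ for every vertex $w$ on that path. Because $u_0$ lies on every such path and $\Sigma_{u_0}$ is nonempty, the value $j$ cannot depend on $v$; hence every boundary vertex $v$ of $\Gamma_Q$ has all permutations of $\Sigma_v^h$ sending $i$ to the single value $j\neq 2$. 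That is exactly the compatibility condition for each new edge $uv$ (and each self-loop, $u=v$) labelled $i$ between boundary vertices. Together with the first kind of edge, this shows $A^h$ is a compatible set of permutation assignments for $G_Q^h$, which is the contrapositive of the lemma.

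I do not expect a serious obstacle: the content is carried entirely by Lemma \ref{path}, and the only points that need care are the bookkeeping of which edges appear in $G_Q^h$, the passage from the paths of Lemma \ref{path} to the possibly-branching subgraph $\hat\Gamma_Q$ (handled by routing through a fixed vertex $u_0$ and using nonemptiness of the $\Sigma_u$ to see that the label $j$ is globally consistent), and a couple of degenerate cases, namely single-label components with empty boundary and the convention that the $\Sigma_u$ in a compatible set are nonempty.
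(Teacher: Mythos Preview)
Your proof is correct and follows essentially the same approach as the paper: argue by contraposition, restrict the compatible assignment $A$ on the \quotientGraph\ to the non-single-label vertices to obtain $A^h$, observe that original edges inherit compatibility, and invoke Lemma~\ref{path} on the $i$-labelled paths through each single-label connected component to handle the newly added edges between boundary vertices. Your treatment is slightly more explicit than the paper's in routing through a fixed boundary vertex $u_0$ to pin down a single common $j$ and in addressing the self-loops and empty-boundary cases, but the argument is the same.
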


\begin{proof}
Suppose there exists a compatible set of permutation assignments $A=\{\Sigma_u\}$ for the \quotientGraph . Let $A^h = \{\Sigma_u\}  \subset A$ be the assignment of permutations to those vertices $u$ which are also in the heterogeneous \quotientGraph . $A^h$ is a compatible set of permutation assignments for the heterogeneous \quotientGraph . To see that this is true, one only needs to note that the only edges that are added in the construction of the heterogeneous \quotientGraph\ are those which connect boundary vertices of connected components. For every pair of boundary vertices $v$ and $w$ of a connected component labelled by $i$, it follows from lemma \ref{path} that there exists a $j$ such that $\Pi(i)=j$ for all $\Pi \in \Sigma_v, \Sigma_w$, because there exists a path of edges labelled $i$ connecting $v$ and $w$ in the \quotientGraph . Thus $A^h$ is indeed a compatible set of permutation assignments, and the proof follows contrapositively.
\end{proof}

\begin{lemma}[The heterogeneous \quotientGraph\ translates faithfully to an Ising problem]
 Given a heterogeneous \quotientGraph\ with vertices which connect to edges with at most two labels. If there exists a compatible set of permutation assignments $A^h$, then there exists a solution to the Ising problem described in step 6, which when translated using table \ref{rubric} will produce $A^h$. 
 \label{isingPerm}
\end{lemma}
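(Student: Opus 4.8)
The plan is to write down the Ising solution explicitly and then check its constraints edge by edge. I would begin by recording the structural input: by hypothesis, and by the remark following the definition of the heterogeneous \quotientGraph, every vertex of the heterogeneous \quotientGraph\ is incident to edges of exactly two distinct labels, so (in the notation of Step 5) each vertex is a $12$-, $13$-, or $23$-vertex. The first substantive step is to observe that on the heterogeneous \quotientGraph\ a compatible set of permutation assignments $A^h=\{\Sigma_u\}$ is automatically made of singletons $\Sigma_u=\{\Pi_u\}$, with $\Pi_u$ one of the two permutations listed for its vertex type in Table \ref{rubric}. Indeed, if $u$ is an $ij$-vertex it carries both an edge labelled $i$ and one labelled $j$, so the compatibility condition forces every $\Pi\in\Sigma_u$ to have $\Pi(i)\neq 2$ and $\Pi(j)\neq 2$; by injectivity $\{\Pi(i),\Pi(j)\}=\{1,3\}$ and $\Pi$ sends the remaining label to $2$, which leaves exactly the two options tabulated in Table \ref{rubric}. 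This also makes ``translated using Table \ref{rubric} produces $A^h$'' an unambiguous statement.

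Next I would define the candidate configuration $\sigma$ for the Step 6 Ising instance by $\sigma_u=+1$ when $\Pi_u$ is the ``$+1$'' entry of Table \ref{rubric} for its vertex type and $\sigma_u=-1$ otherwise; by construction, applying Table \ref{rubric} to $\sigma$ returns precisely $\{\Pi_u\}=A^h$, so the lemma reduces to verifying that $\sigma$ exactly satisfies every Ising edge constraint, i.e.\ $s_e\,\sigma_u\sigma_v=+1$ for each edge $e=(u,v)$ with Ising label $s_e$. Since I check each edge separately using only the compatibility of $A^h$ on that very edge, consistency across parallel edges and self-loops comes for free. The verification is a short case split on the label $\ell$ of $e$. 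For $\ell=1$ the endpoints are $12$- or $13$-vertices, and from Table \ref{rubric} one reads off that for \emph{both} types $\sigma_u=+1\iff\Pi_u(1)=1$; compatibility gives $\Pi_u(1)=\Pi_v(1)$, hence $\sigma_u=\sigma_v$, and a label-$1$ edge always carries Ising label $+1$, so the constraint holds. The case $\ell=3$ is symmetric via $\sigma_u=+1\iff\Pi_u(3)=3$ for $13$- and $23$-vertices.

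The one place needing care — and the reason Step 6 introduces $-1$ labels at all — is $\ell=2$: the correspondence between $\sigma_u$ and $\Pi_u(2)$ is \emph{reversed} between $12$-vertices (where $\sigma_u=+1\iff\Pi_u(2)=3$) and $23$-vertices (where $\sigma_u=+1\iff\Pi_u(2)=1$), while a label-$2$ edge can touch no $13$-vertex. Hence a label-$2$ edge with both endpoints of the same type has $\sigma_u=\sigma_v$ and Ising label $+1$, while a label-$2$ edge joining a $12$-vertex to a $23$-vertex has $\Pi_u(2)=\Pi_v(2)$ forcing $\sigma_u=-\sigma_v$, and this is exactly the edge that Step 6 assigns Ising label $-1$, so once more $s_e\sigma_u\sigma_v=+1$. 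Matching this sign convention correctly to Table \ref{rubric} is the main (and essentially the only) obstacle; once it is pinned down, $\sigma$ is an exact satisfying solution of the Step 6 Ising problem whose Table-\ref{rubric} translation is $A^h$, which is the claim.
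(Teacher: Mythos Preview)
Your argument is correct and follows essentially the same route as the paper: pin down $\Sigma_u$ to a single permutation from Table~\ref{rubric} using that each vertex sees exactly two labels, read off the Ising value $\sigma_u$ from the column, and verify the edge constraints by a case split on the edge label. Your handling of the $\ell=2$ case is in fact slightly more careful than the paper's, which asserts that for $i=2$ the two endpoints must lie in distinct columns without separating the ``same type'' and ``mixed type'' subcases; your split makes explicit why the Step~6 rule assigns $-1$ only to the mixed $12$/$23$ edges.
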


\begin{proof}
Suppose $A^h=\{\Sigma_u^h\}$ exists. For every edge in the heterogeneous \quotientGraph , with end vertices $u$ and $v$ and label $i$, there exists a $j \neq 2$ s.t. for all $\Pi_u \in \Sigma_u^h$, $\Pi_v \in \Sigma_v^h$: $\Pi_u(i) = \Pi_v(i) = j$. Let $u$ be an $ix$-vertex and $v$ be an $iy$-vertex. Then $\Pi_u(x) \neq 2$ and $\Pi_v(y) \neq 2$. Since $x\neq i$, there must exist a third index $k$ so that $\Pi_u(k) = 2$, this uniquely specifies two possible permutations, given by the rows of table \ref{rubric}, similarly for $\Pi_v$.  If $i \neq 2$ then both $\Pi_u$ and $\Pi_v$ must belong to the same column in table \ref{rubric}. If $i = 2$ then  $\Pi_u$ and $\Pi_v$ must belong to distinct columns in table \ref{rubric}. We see that any choice of assignments of permutations to the vertices satisfying the above restrictions specify an exact satisfying solution to the Ising model described in step 6, given by the labelling of the columns in table \ref{rubric}.
\end{proof}

\begin{theorem}
If the algorithm returns false, the Hamiltonian is not stoquastic.
\end{theorem}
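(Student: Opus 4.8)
The plan is to prove the contrapositive: if $H$ is stoquastic then the algorithm terminates with \textbf{true}. Since for an XYZ Heisenberg model the required decomposition is trivial, stoquasticity means $H$ can be turned into a symmetric $Z$-matrix by single-qubit unitaries, so by Lemma~\ref{cliffordsSuffice} it can be turned into one by single-qubit Cliffords, i.e.\ by determinant-one signed permutations $\{\sPi_u\}$. I would write $\sPi_u = R_u \Pi_u$ with $\Pi_u \in S_3$ and $R_u = {\rm diag}(\pm1,\pm1,\pm1)$. As $\tilde\beta^{uv} = \sPi_u\beta^{uv}\sPi_v^T$ obeys conditions~\ref{diag}, \ref{perm} and \ref{negative}, Lemma~\ref{rigid} makes $\Pi_u$ equal to one permutation $\Pi_\Gamma$ on each \highrank\ connected component $\Gamma$, and $\Pi_\Gamma$ is \emph{admissible} because condition~\ref{perm}, being a statement about squared diagonal entries, is unchanged if the sign matrices $R_u$ are dropped. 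Hence $A^\star := \{\Sigma_x = \{\Pi_{\Gamma_x}\}\}$ is a compatible set of permutation assignments for the \quotientGraph\ (this is Lemma~\ref{compatPerm}), and its restriction to the non-single-label vertices is one for the heterogeneous \quotientGraph\ (Lemma~\ref{hetCompatPerm}). The rest of the proof checks, in order, that none of the \textbf{false} returns in Steps 4, 7, 8, 9 (with its edge case) and 12--14 fires, using $A^\star$ and the signs $\{(R_u)_{11}\}$ as certificates.

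For Part 1: Step 4 cannot fire, since $\Pi_x$ is a bijection sending every incident label $i$ to $\Pi_x(i)\neq 2$, so no vertex of the \quotientGraph\ meets three distinctly-labelled edges. The return in Step 7 cannot fire, since a compatible set for the \quotientGraph\ exists, hence one for the heterogeneous \quotientGraph\ exists (Lemma~\ref{hetCompatPerm}), hence the Ising problem of Step 6 is satisfiable (Lemma~\ref{isingPerm}). The return in Step 8 cannot fire, since on a heterogeneous \quotientGraph\ vertex two of the three images of an admissible permutation in a compatible set are pinned to non-$2$ values, which forces that permutation to be unique; therefore $A^\star$ restricted to those vertices is literally one of the at most two rubric-translated families $A^h_1, A^h_2$ produced in Step 8, and being admissible it is not discarded. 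Extending it over the single-label connected components in Step 9, Lemma~\ref{path} forces every vertex $v$ of a single-label component of label $i$ to satisfy $\Pi_{\Gamma_v}(i)=j$, the common value carried by the chosen boundary vertex; hence $\Pi_{\Gamma_v}$ lies in the two-element set used in Step 9 and, being admissible, in $\Sigma^x_v$. So the corresponding candidate $A_x$ has every $\Sigma^x_u$ nonempty and satisfies $\Pi_{\Gamma_x}\in\Sigma^x_x$ for all $x$; the two degenerate edge cases (the whole \quotientGraph\ a single-label cluster) follow by the same argument with $j\in\{1,3\}$ selecting $A_1$ or $A_2$.

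For Part 2 I would run Steps 11--13 on this surviving $A_x$. Put $\epsilon_u = (R_u)_{11}\in\{\pm1\}$. Because $R_u$ is diagonal, $[\sPi_u\beta^{uv}\sPi_v^T]_{11} = \epsilon_u\epsilon_v\,[\Pi_u\beta^{uv}\Pi_v^T]_{11}$, which is $\le 0$ by condition~\ref{negative}; and $[\Pi_u\beta^{uv}\Pi_v^T]_{11}$ is nonzero on every edge surviving into the partitioned \basegraph\ --- on a retained \lowrank\ edge of quotient label $k$ it equals $\beta^{uv}_{kk}$ since $\Pi_u(k)=\Pi_v(k)=1$, and on a \highrank\ edge $[\Pi\beta\Pi^T]_{11}=0$ would, by condition~\ref{perm}, force a second vanishing diagonal entry of $\beta^{uv}$ and contradict \highrank. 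Hence $\epsilon_u\epsilon_v = -{\rm sign}\,[\Pi_u\beta^{uv}\Pi_v^T]_{11}$, i.e.\ $\{\epsilon_u\}$ exactly satisfies every Ising Partition of Definition~\ref{isingSign} built from the true permutations. Step 12 tries both permutations of $\Sigma^x_u$ on each isolated \highrank\ component, and the true $\Pi_\Gamma$ gives a solvable partition, so nothing is discarded there; and for the remaining components $\Gamma_P$ in Step 13 one checks that every retained low-rank edge forces $\Pi(k)=1$ and every single-label vertex of $\Gamma_P$ has $\Pi^{-1}(1)=i$, so every Ising label of $\Gamma_P$ is the same for both residual choices in the $\Sigma^x_v$; hence whatever partition the algorithm builds is solved by $\{\epsilon_u\}$. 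So $A_x$ reaches Step 14, which returns \textbf{true}.

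The main obstacle I expect is not the inequality bookkeeping --- which reduces, as above, to ``$R_u$ leaves condition~\ref{perm} squared-invariant'' and ``the relevant $[\cdot]_{11}$ entries never vanish'' --- but the combinatorial tracking: one must verify carefully that the true signed-permutation solution, pushed through the \quotientGraph, the heterogeneous \quotientGraph, the rubric of Table~\ref{rubric}, and the single-label component extension of Step 9, coincides \emph{exactly} with one of the at most two candidate permutation families the algorithm builds, and that the residual ambiguity on single-label vertices is immaterial to the Ising Partitions of Steps 12--13. Getting the quantifiers right --- which vertices are forced to a unique permutation, which to a two-element set, and how Lemmas~\ref{rigid}, \ref{path}, \ref{hetCompatPerm} and \ref{isingPerm} chain to pin down the candidate --- is the substantive part.
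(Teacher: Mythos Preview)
Your proposal is correct and follows essentially the same approach as the paper, just organised contrapositively: where the paper, at each \textbf{false} return, invokes Lemmas~\ref{compatPerm}, \ref{hetCompatPerm}, \ref{isingPerm}, \ref{path} and \ref{rigid} to rule out a solution, you instead carry the true signed-permutation solution $\{\sPi_u=R_u\Pi_u\}$ as a certificate through the algorithm and check that each \textbf{false} branch is avoided. Your explicit observations---that condition~\ref{perm} is invariant under dropping the sign factors $R_u$, that $[\Pi_u\beta^{uv}\Pi_v^T]_{11}\neq 0$ on \highrank\ edges via~\ref{perm}, and that on the Step~13 components the Ising labels depend only on $\Pi^{-1}(1)$ and hence agree for both residual permutations in $\Sigma_v^x$---are exactly the points the paper uses in its Step~14 analysis.
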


\begin{proof}
There are six steps where the algorithm returns false. We shall address them individually.

\textbf{Step 4}: If a vertex in the \quotientGraph\ connects to at least three edges, each with a different label, then any assignment of permutation to that vertex will leave one label equal to 2. Thus no compatible set of permutation assignments is possible, and by lemma \ref{compatPerm} the Hamiltonian is not stoquastic.

\textbf{Step 7:} If there does not exist an exact satisfying solution to the Ising problem given in step 6, then by lemmas \ref{isingPerm}, \ref{hetCompatPerm}, and \ref{compatPerm} the Hamiltonian is not stoquastic.

\textbf{Step 8:} If both $A^h_1$ and $A^h_2$ contain permutations at some vertices which are not in the admissible permutations of those vertices, then they are not compatible sets of permutations assignments. Furthermore no others exist, by lemma \ref{isingPerm}, because an Ising problem has only two solutions. So by \ref{hetCompatPerm} and \ref{compatPerm} the Hamiltonian is not stoquastic.

\textbf{Step 9 edge case:} Since, in the step 5 edge case, there is a label-$i$ path between every vertex in the \quotientGraph , by lemma \ref{path} any set of compatible permutation assignments must be of the form $A_1$ or $A_2$. Thus if both are incompatible with the admissible permutations of the vertices, then by lemma \ref{compatPerm} the Hamiltonian is not stoquastic.

\textbf{Step 9:} By lemma \ref{path}, any assignment of possible permutations $\Sigma_v$ for a vertex $v$ in a single-label connected component with labelling $i$ must agree with the action $\Pi(i)$ of any permutation $\Pi \in \Sigma_u$ of a boundary vertex $u$ of that single-label connected component. If, for $A_x^h$ the admissible permutations of $v$ do not contain any permutations which agree in this way with the assignments of permutations to the boundary vertices, then no set of compatible permutation assignments exists on the \quotientGraph\ which corresponds to $A_x^h$. If both $A_1^h$ and $A_2^h$ are ruled out in this way, then by lemmas \ref{hetCompatPerm} and \ref{compatPerm} $H$ is not stoquastic.

\textbf{Step 14:}  A compatible set of permutation assignments $A_x$ can not correspond to a possible solution if, for all of the possible permutations producible by $A_x$, there does not exist an assignment of signs $\{R_u = \textrm{diag}(\pm1, \pm1, \pm1)\}$ satisfying condition \ref{negative}. Given a fixed choice of permutations $\{\Pi_u\}$, the question of whether there exists an assignment of signs is dependent only on the values $[\Pi_u \beta^{uv} \Pi_v^T]_{11}$ at each edge. Therefore the question of whether there exists an assignment of signs can be answered independently for each subset of vertices in the \basegraph\ which do not connect to any vertices outside of the subset with an edge satisfying $[\Pi_u \beta^{uv} \Pi_v^T]_{11} \neq 0$. Since for a given $A_x$, any choice of permutation assignment $\{\Pi_u \in \Sigma_u^x\}$ will, by definition, have the same action on all low-rank edges, it suffices to consider each connected component of the partitioned \basegraph\ independently. Steps 12 and 13 check each connected component of the partitioned \basegraph\ in this fashion. 

For each connected component of the partitioned \basegraph\, one must rule out all possible permutation assignments one could draw from $A_x=\{\Sigma_v^x\}$ and assign to the vertices in that connected component. In the case of step 12, each single-label vertex corresponds to its own connected component, and there are only two possible permutations to consider. In the case of step 13, which covers the other connected components of the partitioned \basegraph\, it may seem that there are an exponential number of possible assignments one could draw from. However each vertex $u$ in one of these connected components necessarily corresponds to either a non-single label vertex $v$ in the \quotientGraph , for which $\Sigma_v^x$ has only one element, or a single-label vertex $v$ in the \quotientGraph , labelled by $i$, for which any permutation $\Pi_v \in \Sigma_v^x$ satisfies $\Pi_v(i) = 1$. As such, the Ising model produced is independent of the choice of permutation assignment, so we choose a single representative.

By the reasoning outlined for the earlier steps, the only possible compatible sets of permutation assignments $A_x$ are the ones produced in part 1. If none of the $A_x$ correspond to a possible solution, then there is none and the Hamiltonian is not stoquastic.

\end{proof}

\begin{theorem}
When the algorithm returns true, along with a solution $sol$, the solution $sol=\{\sPi_u= R_u \Pi_u\}$ prescribes a set of single qubit unitary transformations $\{U_v\}$ such that $(\otimes_v U_v) H (\otimes_v U_v)^{\dagger}$ is a $Z$-matrix, and therefore $H$ is stoquastic.
\end{theorem}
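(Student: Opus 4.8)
The plan is to prove soundness of the "true" branch by verifying that the object $sol = \{\sPi_u = R_u \Pi_u\}$ constructed by the algorithm actually satisfies conditions \eqref{diag}, \eqref{perm} and \eqref{negative} for every edge $uv$ of the \basegraph. Once this is established, Proposition \ref{stoquasticCondition} (in its $\beta$-only form, since there are no 1-local terms) immediately gives that $(\otimes_v U_v) H (\otimes_v U_v)^\dagger$ is a symmetric $Z$-matrix, where $U_v$ is the single-qubit Clifford whose $SO(3)$ image is $\sPi_u$ (adjusting the global sign via $\textrm{diag}(1,1,\det\sPi_u)$ exactly as in Lemma \ref{cliffordsSuffice} so that the $U_v$ exist); hence $H$ is stoquastic by Definition \ref{def:stoq}.

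First I would check condition \eqref{diag}: for each edge $uv$, $\sPi_u \beta^{uv} \sPi_v^T$ must be diagonal. For \highrank\ edges this is immediate from Lemma \ref{rigid}, since the permutations assigned within a \highrank\ connected component are all equal to a common $\Pi$, and $\Pi \beta^{uv} \Pi^T$ is diagonal whenever $\beta^{uv}$ is; the left/right signs do not affect diagonality. For \lowrank\ edges with label $i$, the algorithm's construction of the \quotientGraph\ together with the definition of a compatible set of permutation assignments guarantees that both endpoints' permutations send $i$ to a common label $j \neq 2$, so the single nonzero entry $\beta^{uv}_{ii}$ is mapped to the diagonal position $(j,j)$. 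Next I would check condition \eqref{perm}, i.e. $[\tilde\beta^{uv}_{11}]^2 \geq [\tilde\beta^{uv}_{22}]^2$. On \highrank\ edges this is exactly what membership in the admissible permutation set $\tilde S_u$ enforces, and Part 1 only ever assigns admissible permutations (any assignment with a non-admissible permutation is discarded in Steps 8, 9-edge-case, or 9). On \lowrank\ edges the right-hand side is $0$, so the condition is automatic. Thus after Part 1, every surviving $A_x$ yields permutations satisfying \eqref{diag} and \eqref{perm}.

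The main content — and the step I expect to be the real obstacle — is condition \eqref{negative}, $\tilde\beta^{uv}_{11} \leq 0$, handled in Part 2. Here I would argue that, for a fixed permutation assignment $\{\Pi_u\}$, the sign of $[\sPi_u \beta^{uv} \sPi_v^T]_{11}$ on a given edge equals $\delta_u \delta_v \cdot \textrm{sign}([\Pi_u \beta^{uv}\Pi_v^T]_{11})$, where $\delta_u \in \{\pm 1\}$ is the single sign degree of freedom left at vertex $u$ after forcing the second diagonal entry's sign and the determinant constraint (this is why $R_u$ is taken of the form $\textrm{diag}(\delta_u,1,\delta_u\det\Pi_u)$ in Steps 12–13). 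Consequently, requiring $\tilde\beta^{uv}_{11}\leq 0$ on every edge with $\tilde\beta^{uv}_{11}\neq 0$ is precisely the demand that the $\delta$-assignment be an exact satisfying solution to the classical Ising problem whose edge label is $-\textrm{sign}([\Pi_u\beta^{uv}\Pi_v^T]_{11})$ — which is the Ising Partition of Definition \ref{isingSign}. One must observe that this decouples over the connected components of the partitioned \basegraph\ (edges with $\tilde\beta^{uv}_{11}=0$ impose no constraint and have already been removed), and that within each such component the permutation assignment drawn from $A_x$ is effectively unique for the purpose of defining the Ising labels — single-label vertices with $\Pi(i)=1$ and non-single-label vertices each contribute a fixed $[\Pi_u\beta^{uv}\Pi_v^T]_{11}$, while single-label vertices with $\Pi(i)=3$ sit in their own components (Step 12) where both permutation choices are tried explicitly. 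Since Steps 12 and 13 choose an actual satisfying $\delta$-assignment whenever one exists and only return \textbf{true} when every component succeeds, the resulting $sol$ satisfies \eqref{negative} on all edges. Combining the three conditions with Proposition \ref{stoquasticCondition} completes the proof; the delicate part is making rigorous the claim that the per-component Ising problems capture \emph{all} of condition \eqref{negative} with no hidden coupling between components and no dependence on the residual permutation freedom inside $\Sigma_u^x$.
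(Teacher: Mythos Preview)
Your proposal is correct and follows essentially the same route as the paper: verify conditions \eqref{diag} and \eqref{perm} edge-by-edge from the definition of a compatible set of permutation assignments (using admissibility on \highrank\ edges and the $j\neq 2$ constraint on \lowrank\ edges), then verify condition \eqref{negative} by unpacking the Ising-partition construction on each connected component of the partitioned \basegraph, and finally confirm $\det(\sPi_u)=+1$ from the specific form $R_u=\textrm{diag}(\delta_u,1,\delta_u\det\Pi_u)$. The only cosmetic wrinkle is that your appeal to the $\textrm{diag}(1,1,\det\sPi_u)$ adjustment from Lemma~\ref{cliffordsSuffice} is unnecessary here, since the $R_u$ chosen in Steps 12--13 already forces $\det(\sPi_u)=\delta_u^2\det(\Pi_u)^2=1$; the paper simply verifies this computation directly at the end.
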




\begin{proof}
As discussed in the preliminaries, single qubit unitary transformations correspond to $SO(3)$ rotations, and $H$ is stoquastic by those rotations if conditions \ref{diag}, \ref{perm}, and \ref{negative} are satisfied. Therefore it suffices to show that $sol$ is a set of $SO(3)$ rotations satisfying those conditions. The construction of $sol$ is primarily described in steps 12 and 13. 

Conditions \ref{diag} and \ref{perm} are only determined by the assignment of permutations  $\{\Pi_u\}$. The assignment of permutations is determined by a compatible set of permutation assignments $A= \{ \Sigma_v\}$, so for every low-rank edge, conditions \ref{diag} and \ref{perm} will be satisfied by definition.  For every \highrank\ connected component $\Gamma$, associated with a vertex $v$ in the \quotientGraph , every vertex in $\Gamma$ is assigned the same permutation $\Pi \in \Sigma_v$. Thus by the argument given in lemma \ref{rigid}, condition \ref{diag} is satisfied for all edges in $\Gamma$. Furthermore, since $\Sigma_v$ only contains permutations compatible with the admissible permutations of the \highrank\ connected component, \ref{perm} must also be satisfied. Thus all edges satisfy conditions \ref{perm} and \ref{diag}.

For every connected component of the partitioned \basegraph\, and ever vertex $u$ in the connected component, the sign assignments $R_u = \textrm{diag}(\delta_u, 1, \delta_u \det(\Pi_u))$ are determined by the solution $\{ \delta_u \}$ to an Ising problem given by definition \ref{isingSign}. The solution must satisfy $\delta_u \left(-{\rm sign}([\Pi_u \beta^{uv} \Pi_v^T]_{11}) \right) \delta_v = 1$. Therefore  $[R_u \Pi_u \beta^{uv} \Pi_v^T R_v^T]_{11} \leq 0$ for every connected component in the partitioned \basegraph\. 
The only edges which do not belong to a connected component in the partitioned \basegraph\ are those edges for which $[R_u \Pi_u \beta^{uv} \Pi_v^T R_v^T]_{11}=[\Pi_u \beta^{uv} \Pi_v^T]_{11}=0$. Therefore for all edges, condition \ref{negative} is satisfied.

Finally, every $\sPi_u$ is an $SO(3)$ transformation, since $\det(\sPi_u) = \det(R_u) \det(\Pi_u) = \delta_u^2 \det(\Pi_u)^2 =+1$.

\end{proof}

\section{Discussion}

There remain many open questions with regards to stoquasticity. It would be interesting to capture the two-qubit (or two-qudit) stoquastic space in terms of inequalities involving invariants. It would be desirable if the problem of deciding whether a Hamiltonian is computationally stoquastic is easier when allowing more transformations than just local basis changes. 

Many of the proofs in this paper are tedious and involved since the reasoning is different for different subcase Hamiltonians. We did not see a way of simplifying these arguments. It is an interesting question whether one could supply such proofs using a proof assistant, automatically ensuring their validity.

The presence of single-qubit terms causes our algorithm for deciding the stoquasticity of the XYZ Heisenberg model to fail. Furthermore, the existence of single-qubit terms makes the class of single-qubit Clifford rotations strictly weaker than that of single-qubit rotations. Nevertheless, it seems likely that a generalization of the algorithm described here could be constructed to decide stoquasticity under single-qubit Clifford rotations of an XYZ model with single qubit terms included. It remains an open question whether it is computationally hard to decide whether such a Hamiltonian is stoquastic by local basis changes beyond single-qubit Clifford rotations.

We expect that the problem of deciding whether a Hamiltonian is stoquastic by local basis changes is easy for $n$-qubit Hamiltonians with an underlying line or tree interaction graph $G$ by employing a combination of an efficient dynamic programming strategy with the parsimonious strategy (i.e. discretize the set of local basis changes, show how the optimization of a partial problem giving a basis change at the boundary can be used to solve a next-larger partial problem etc.).

An intriguing idea is that results showing hardness of deciding whether a Hamiltonian is stoquastic could be used for quantum computer verification. A general adiabatic computation with frustration-free Hamiltonians requires a quantum computer to run it, but let's assume that Alice, the verifier, picks a stoquastic frustration-free Hamiltonian, but then hides this stoquasticity by local basis changes or, say, the application of a constant depth circuit. Bob with a quantum computer can run the adiabatic computation and provide Alice with samples from its output which she can efficiently verify to be statistically correct using the algorithm in \cite{Bravyi2008}. On the other hand, a QC-pretender Charlie may have to either find the constant-depth hiding circuit or classically simulate the quantum adiabatic algorithm, and either task could be too daunting. A similar idea has been suggested in \cite{Marvian2018}.

\section{Funding}
We acknowledge support through the EU via the ERC GRANT EQEC No. 682726. This research was supported in part by Perimeter Institute for Theoretical Physics. Research at Perimeter Institute is supported by the Government of Canada through Industry Canada and by the Province of Ontario through the Ministry of Economic Development $\&$ Innovation.

\section{Acknowledgements}
BMT would like to thank David DiVincenzo for interesting discussions. BMT and JDK would also like to thank Sergey Bravyi as well as Christophe Vuillot for helpful discussions. 

\bibliographystyle{utphys}
\bibliography{bibliography-stoq-rev}

\appendix

\section{Symmetric Z-Matrix Cone in $\mathbb{C}^{d^2}$} \label{AppendixNecessaryAndSufficientConditions}

It is understood that the set of symmetric Z-matrices forms a polyhedral cone. For convenience and concreteness we present the structure of this cone for operators acting on $\mathbb{C}^{d^2}$ in terms of generalizations of the Pauli matrices.

Any Hermitian operator acting on $\mathbb{C}^d \otimes \mathbb{C}^d$ with basis elements $\{\ket{i} \;\vert \; i\in {0,..d-1}\}$, can be written as :
$$ H=\sum_{x,y \in {D,A,S}}\sum_{i,j,m,n=0}^{d-1} a_{ij;mn}^{xy} \lambda^x_{ij} \otimes \lambda^y_{mn},$$
with $a_{ij;mn}^{xy}\in \mathbb{R}$. Here the $\lambda^{D,A,S}_{ij}$ matrices with labels $S$ (symmetric), $A$ (anti-symmetric) and $D$ (diagonal) are a modified set of generalized Gell-Mann matrices \cite{Bertlmann2008a} given by:

\begin{align}
\lambda^S_{ij} &= \left\lbrace \begin{matrix}
\sqrt{d/2}\left(\ket{i}\bra{j}+ \ket{j} \bra{i}\right) &\; : \; i< j \\
0 & \; : \; \text{otherwise}
\end{matrix}\right.\\
\lambda^A_{ij} &=\left\lbrace \begin{matrix}
-i\sqrt{d/2} \left(\ket{i}\bra{j} - \ket{j}\bra{i}\right) &\; : \; i< j \\
0 & \; : \; \text{otherwise}
\end{matrix}\right.\\
\lambda^D_{ij} &= \left\lbrace \begin{matrix}
\frac{\sqrt{d}}{\sqrt{d}-1} e_0 + \sqrt{d} e_i &\; : \;i=j>0 \\
\mathbb{I} & \; : \; i=j=0\\
0 &\;: \; \text{otherwise}
\end{matrix} \right.
\end{align}
with $e_i =\ket{i}\bra{i}-\frac{1}{d} \mathbb{I}$. Note that ${\rm Tr}[\lambda_{ij}^x \lambda_{mn}^y] =\delta_{im}\delta_{jn}\delta_{xy} d$, so that the $\lambda$ matrices can be thought of as spanning vectors of the Hilbert-Schmidt vector space. We can thus employ the vector inner product of two bipartite matrices $A$ and $B$ acting on $\mathbb{C}^d \otimes \mathbb{C}^d$ as: $\langle A, B \rangle ={\rm Tr}[A^{\dagger}B]/d^2$.

 The necessary and sufficient conditions for $H$ to be a symmetric $Z$-matrix (see Definition \ref{def:Z}) are (1) $\forall i,j,m,n$
$\bra{im}H\ket{jn} \in \mathbb{R}$ and (2) $\bra{im}H\ket{jn} \leq 0$ when either $i \neq j$ or $m \neq n$ (or both).

 The $\lambda$-basis is convenient because of the following properties. The diagonal part of $H$ only has support on vectors of the form $\lambda^D \otimes \lambda^D$ while the off-diagonal part has no contribution from $\lambda^D \otimes \lambda^D$. Furthermore, the imaginary part of $H$ only has support on vectors of the form $\lambda^A \otimes \lambda^S$, $\lambda^A \otimes \lambda^D$, $\lambda^S \otimes \lambda^A$ and $\lambda^D \otimes \lambda^A$, while both the real and the diagonal part of $H$ have no contribution from these vectors. Thus the vector $H$ has support on three orthogonal subspaces, the diagonal subspace $D$ spanned by vectors $\lambda^D \otimes \lambda^D$, the imaginary subspace $I$, spanned by vectors $\lambda^A \otimes \lambda^S$, $\lambda^A \otimes \lambda^D$, $\lambda^S \otimes \lambda^A$ and $\lambda^D \otimes \lambda^A$, and the off-diagonal real subspace $R$ spanned by the remaining vectors.

We may write $H = H_D + H_R + H_I$ where $H_D,H_R,H_I$ are the vectors in these three subspaces.

The first (``realness'') condition thus corresponds to demanding that $H_I=0$. By the linear independence of the $\lambda$-basis it implies that
\begin{equation}
\forall i,j,m,n,\; a^{AS}_{ij;mn}=a^{SA}_{ij;mn}=a^{AD}_{ij;mn}=a^{DA}_{ij;mn}=0.
\end{equation}

The second (``negativity'') condition then says that for all $i \neq j$ or $m \neq n$ (or both):
$$ \bra{im}H\ket{jn}= {\rm Tr}[\ket{jn}\bra{im}H]= d^2 \langle \ket{jn}\bra{im} , H_R \rangle \leq 0.$$

This condition thus specifies some of the facets of the symmetric $Z$-matrix cone. When a vector $\ket{im}\bra{jn}$ is interpreted as the normal vector of an oriented hyperplane in Hilbert-Schmidt space, the negativity condition can be thought of as the requirement that $H_R$ lies within the union of half spaces defined by these oriented hyperplanes.

 To write down the inequalities describing these facets, we can use that
$$
\ket{j}\bra{i} = \left\lbrace  \begin{matrix}
\frac{1}{2}(\lambda^S_{ij}-i \lambda^A_{ij}) &\; : \; i<j, \\
\frac{1}{2}(\lambda^S_{ji}+i \lambda^A_{ji}) &\; : \; i>j, \\
e_i+\frac{1}{d}\mathbb{I} &\; : \; i=j,
\end{matrix}\right.
$$
with
\begin{equation}
e_0 = \frac{1}{d} \sum_{m=1}^{d-1} \lambda_m^D, \;e_{i\neq 0} = \frac{1}{\sqrt{d}} \lambda_i^D- \frac{1}{d(\sqrt{d}-1)} \sum_{m=1}^{d-1} \lambda_m^D. \nonumber
\end{equation}

 One can then write down three types of inequalities:
\begin{itemize}
\item case 1: $i \neq j$ and $m\neq n$
\begin{equation} \label{generalStoqCond1}
\begin{split}
&\langle \ket{jn}\bra{im} , H_R \rangle  \\ &=\left\lbrace \begin{matrix}
\frac{1}{4} \left(a^{SS}_{ij;mn} - a^{AA}_{ij;mn} \right)\leq 0 \;:&\; \begin{matrix}(i<j,\; m<n )  \\ (i>j,\; m>n) \end{matrix}\\
\\
\frac{1}{4} \left( a^{SS}_{ij;mn} + a^{AA}_{ij;mn} \right)\leq 0 \;:&\; \begin{matrix} (i>j,\; m<n )\\ (i<j,\; m>n) \end{matrix}\\
\end{matrix}\right.
\end{split}
\end{equation}
\item case 2: $i=j$ and $m \neq n$
\begin{equation}\label{generalStoqCond2}
\begin{split}
&\langle \ket{in}\bra{im} , H_R \rangle =\frac{1}{2} \left\langle\left(e_i+\mathbb{I}/d\right)\otimes \lambda_{mn}^S, H_R \right\rangle\\&= \frac{1}{2}\left(\langle e_i \otimes \lambda_{mn}^S, H_R \rangle+a_{00;mn}^{DS}/d\right) \leq 0
\end{split}
\end{equation}
\item case 3: $i \neq j$ and $m=n$
\begin{equation} \label{generalStoqCond3}
\begin{split}
& \left\langle \ket{jm}\bra{im} , H_R \right\rangle=\frac{1}{2} \left\langle\lambda_{ij}^S\otimes\left(e_m+\mathbb{I}/d\right) , H_R \right\rangle\\ & = \frac{1}{2}\left(\langle \lambda_{ij}^S \otimes e_m, H_R \rangle+a_{ij;00}^{SD}/d\right) \leq 0
\end{split}
\end{equation}
\end{itemize}

There is an interesting geometric fact here. Consider for instance Inequality \eqref{generalStoqCond2}. For a fixed $n$ and $m\neq n$ Inequality \eqref{generalStoqCond2} is only concerned with the elements of the vector $H_R$ with support on basis elements of the form $\lambda^D \otimes \lambda_{mn}^S$, which do not appear in any other inequalities. So for a fixed $n$ and $m\neq n$ we may think of the set of inequalities as living in a $d$-dimensional subspace of Hilbert-Schmidt space orthogonal to all other sets of inequalities. The vectors $(\ket{i} \bra{i} \otimes \ket{n}\bra{m})_R=\left(e_i+\frac{\mathbb{I}}{d}\right)\otimes \lambda_{mn}^S$ are the normal vectors of the supporting hyperplanes of our convex set in this space, and the vectors $\{e_m\}$ correspond to the extremal points of a regular simplex. Fixing  a value for $\langle\frac{\mathbb{I}}{d}\otimes\lambda_{mn}^S, H_R \rangle=\frac{a_{ij;00}^{SD}}{d}$ we can take a $d-1$-dimensional slice of our convex set and see that the vectors  $e_i \otimes \lambda_{mn}^S$ form the normals of a new supporting hyperplane whose distances from the origin of the slice is proportional to our value of  $\frac{a_{ij;00}^{SD}}{d}$. Thus inequalities \eqref{generalStoqCond2} and \eqref{generalStoqCond3} give the polyhedral cone a simplicial structure.

This is illustrated in figure \ref{simplicialCone} for $d=2,3$.

\begin{figure}[h]
\center
\includegraphics[width=.43 \linewidth]{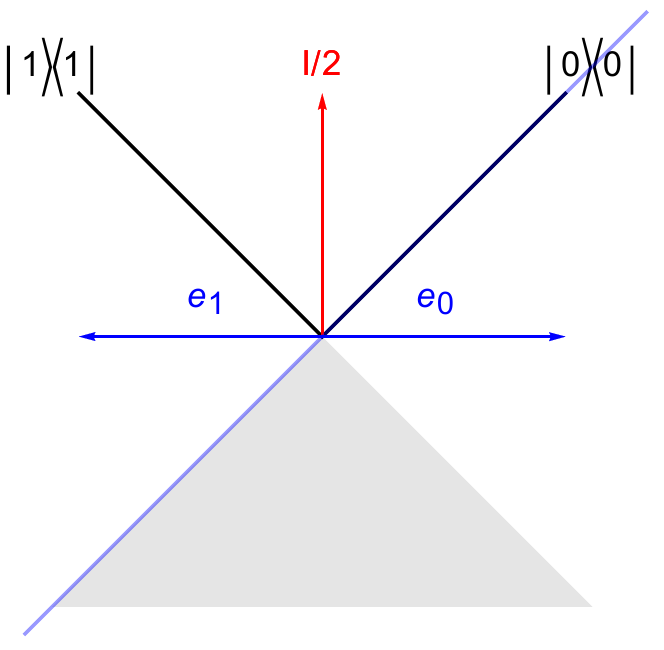}
\includegraphics[width=.55\linewidth]{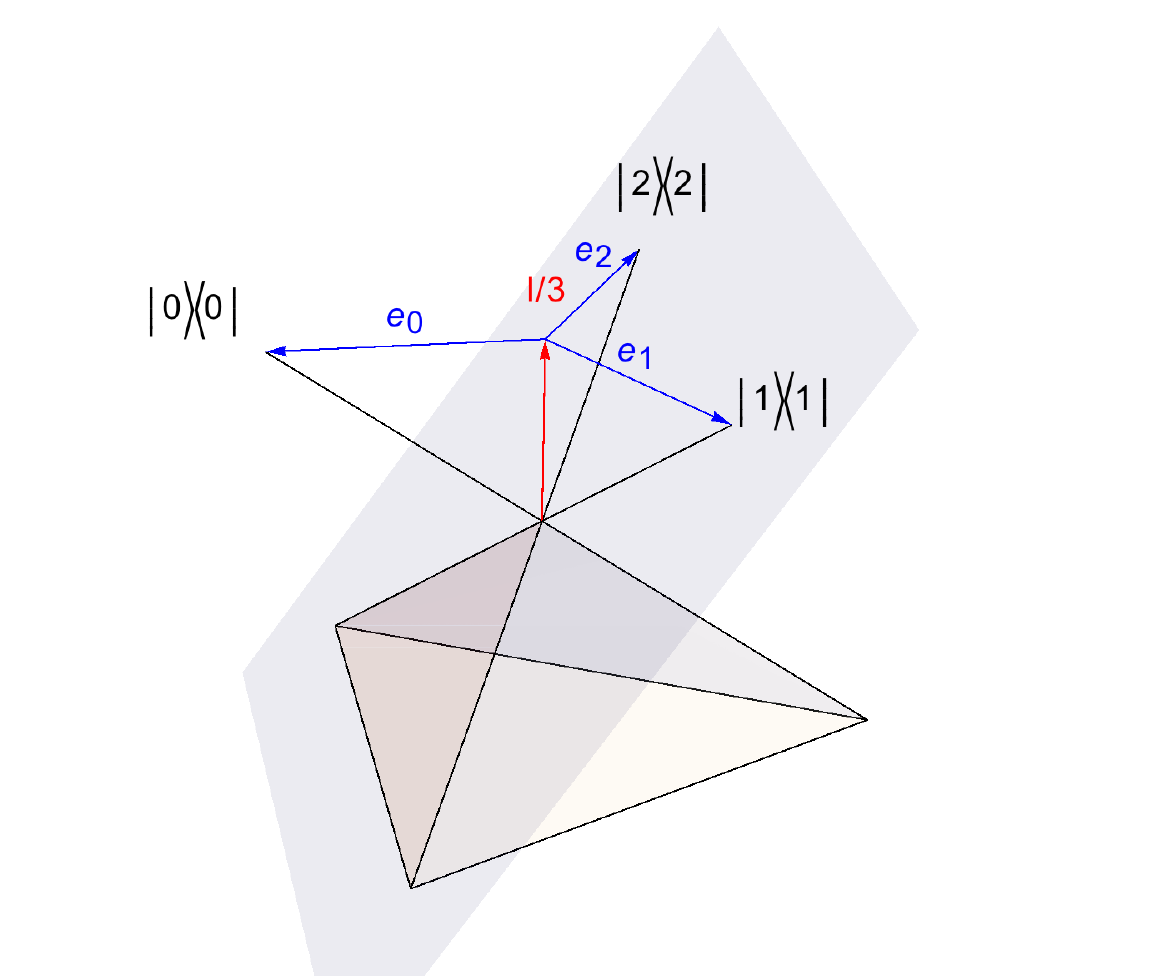}
\caption{An illustration of the simplicial polyhedral cones generated by the inequalities \eqref{generalStoqCond2} or \eqref{generalStoqCond3} for $d=2$ and $d=3$. For example in the case of Inequality \eqref{generalStoqCond2}, fixing $n$ and $m\neq n$ gives solutions lying inside a cone bounded by the set of hyperplanes defined by the normal vectors $(\ket{i} \bra{i} \otimes \ket{n}\bra{m})_R$ (labelled $\ket{i} \bra{i}$ in the figures). See for example in the $d=3$ figure the blue plane defined by the normal vector $(\ket{0} \bra{0} \otimes \ket{n}\bra{m})_R$. Note that the cone lies in the lower half of the space, for example the region where $\langle \frac{\mathbb{I}}{d}\otimes\lambda^S_{mn}, H_R \rangle \leq 0$ in the case of Inequality \eqref{generalStoqCond2}.} \label{simplicialCone}
\end{figure}

In the case of two qubits $\lambda^S_{01}=X$, $\lambda^A_{01} =Y$, $\lambda^D_{01}=Z$, $e_0 =Z/2$ and $e_1=-Z/2$. For the realness condition we require that $a_{IY}=a_{YI}=a_{YZ}=a_{ZY}=a_{YX}=a_{XY} =0$. For the negativity condition we retrieve the inequalities: 
\begin{align*}
a_{XX}&\leq  a_{YY}&  a_{XX}  &\leq -a_{YY} \\
a_{IX} &\leq -a_{ZX}&  a_{IX} &\leq a_{ZX}  \\
a_{XI} &\leq -a_{XZ}& a_{XI} &\leq a_{XZ}.
\end{align*}

Together these (in)equalities are given in condensed form in Proposition \ref{stoquasticCondition}.

\section{Proof of Theorem \ref{theo:real}} \label{AppendixLocalRealnessCondition}
 
Recall the definitions of
$$\Gamma_L = \left\lbrace S, \beta \beta^T S,  [\beta \beta^T]^2 S, \beta P , \beta \beta^T \beta P \right\rbrace,$$
and 
$$\Gamma_R = \left\lbrace P, \beta^T\beta P , [\beta^T \beta]^2 P ,  \beta^T S  , \beta^T \beta \beta^T S \right\rbrace.$$

When the triple product invariants are zero, $\Gamma_L$ and $\Gamma_R$ each contain coplanar vectors. By Proposition \ref{span} (see below) we know that $S$ is in the span of at most two left singular vectors of $\beta$ and $P$ is in the span of at most two right singular vectors of $\beta$. 

If $S$, $ \beta \beta^T S$ and $ [\beta \beta^T]^2 S$ are linearly independent, then $\beta P$ must be in the same span as $S$. If they are not linearly independent then $S$ is a left singular vector of $\beta$, with unit vector $\hat{e}_S$. It follows that $\beta P \in \text{span} (\hat{e}_S, \hat{e}_{S_\perp})$, and $\beta \beta^T \beta P \in   \text{span} (\hat{e}_S, \hat{e}_{S_\perp})$. This implies that $\beta \beta^T \hat{e}_{S_\perp} \in \text{span}(\hat{e}_S, \hat{e}_{S_\perp})$. Since $\hat{e}_S^T \beta \beta^T \hat{e}_{S_\perp}=0$ it follows that $\beta \beta^T \hat{e}_{S_\perp} \in \text{span}(\hat{e}_{S_\perp})$ and $\hat{e}_{S_\perp}$ is a left singular vector of $\beta$. Therefore there always exists a pair of left singular vectors of $\beta$ such that $S$ and $\beta P$ are in their span. By Proposition \ref{betaP} (see below) this implies that $S$ is spanned by at most two left singular vectors of $\beta$ and $P$ is spanned by the corresponding right singular vectors. By Proposition \ref{realH1} (see below) we see that $H$ is real under local unitary rotations. $\Box$

\begin{proposition} \label{span}
If $S$, $\beta \beta^T S$ and $ [\beta \beta^T]^2 S$ are coplanar, then $\exists \hat{e}_x^L,\hat{e}_y^L \text{ s.t. } S \in \text{span}(\hat{e}_x^L,\hat{e}_y^L )$ with $\hat{e}_x^L,\hat{e}_y^L$ left singular vectors of $\beta$. Similarly for $P$.
\end{proposition}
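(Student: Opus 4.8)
The plan is to reduce the statement to an elementary fact about a single symmetric positive-semidefinite matrix. Set $M=\beta\beta^T$; its orthonormal eigenvectors are exactly the left singular vectors of $\beta$ (with the understanding that, in the presence of repeated or zero singular values, \emph{any} orthonormal eigenbasis of $\beta\beta^T$ arises as the columns of some valid $O_L$ in an SVD $\beta=O_L\Sigma O_R^T$, the matching rotation within a degenerate block being absorbed into $O_R$). Diagonalizing $M=\sum_{i=1}^3\mu_i\,\hat{e}_i\hat{e}_i^T$ with $\mu_1,\mu_2,\mu_3\geq 0$, and writing $S=\sum_i c_i\hat{e}_i$, the vectors $S$, $\beta\beta^T S$ and $[\beta\beta^T]^2 S$ have eigenbasis-coordinate columns $(c_i)_i$, $(c_i\mu_i)_i$ and $(c_i\mu_i^2)_i$. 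The $3\times 3$ determinant detecting their coplanarity then factors, by pulling $c_i$ out of the $i$-th row and recognizing a Vandermonde determinant, as $c_1c_2c_3\prod_{1\leq i<j\leq 3}(\mu_j-\mu_i)$.

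Coplanarity thus forces either (i) some $c_i=0$, or (ii) some pair of eigenvalues to coincide. In case (i), say $c_3=0$, we are done immediately: $S\in\mathrm{span}(\hat{e}_1,\hat{e}_2)$, a span of two left singular vectors. In case (ii), say $\mu_1=\mu_2$, the $\mu_1$-eigenspace is at least two-dimensional; I would project $S$ onto it, obtaining a single vector which, once normalized, is an eigenvector $\hat{e}_x^L$ of $M$ that can be taken as one element of an orthonormal eigenbasis of $\beta\beta^T$, hence as a left singular vector of $\beta$. Then $S$ equals that projection plus $c_3\hat{e}_3$, so $S\in\mathrm{span}(\hat{e}_x^L,\hat{e}_3)$. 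The remaining sub-subcases only make the conclusion easier: if in addition $\mu_3=\mu_1$ then $M$ is a scalar multiple of the identity and every vector (in particular $S$) is an eigenvector; if the projection of $S$ into the degenerate eigenspace vanishes then $S=c_3\hat{e}_3$ already lies in a one-dimensional eigenspan. The statement for $P$ is proved identically, replacing $\beta\beta^T$ by $\beta^T\beta$, whose orthonormal eigenvectors are the right singular vectors of $\beta$.

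The Vandermonde computation and case (i) are routine; the step I expect to require the most care is the bookkeeping in the degenerate case (ii), where one must be explicit that an arbitrary orthonormal eigenbasis of $\beta\beta^T$ legitimately qualifies as a set of left singular vectors of $\beta$, and that the component of $S$ lying inside a repeated eigenspace may be chosen as one of those basis vectors. This is exactly the well-known rotational freedom of the SVD within a block of equal singular values (and within the kernel for a zero singular value), so it is harmless, but it should be stated carefully since it is the only point where the correspondence ``left singular vector $\leftrightarrow$ eigenvector of $\beta\beta^T$'' is not completely rigid.
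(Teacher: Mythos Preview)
Your proposal is correct and follows essentially the same argument as the paper's proof: expand $S$ in an eigenbasis of $\beta\beta^T$, use coplanarity of $S$, $\beta\beta^T S$, $[\beta\beta^T]^2 S$ to force either a vanishing coefficient (non-degenerate case) or an eigenvalue coincidence (degenerate case), and in the latter case absorb the component of $S$ within the degenerate eigenspace into a single legitimately chosen left singular vector. Your write-up is in fact a bit more explicit than the paper's, spelling out the Vandermonde factorisation $c_1c_2c_3\prod_{i<j}(\mu_j-\mu_i)$ and the SVD rotational freedom that justifies treating any orthonormal eigenbasis of $\beta\beta^T$ as a set of left singular vectors.
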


\begin{proof}
Consider the vectors $S$, $\beta \beta^T S$ and $ [\beta \beta^T]^2 S$. Consider the expansion of $S$ into an orthonormal set of left singular vectors of $\beta$: $S=\sum_i^3 S_i \hat{e}_i^L$. If $\beta \beta^T$ is not degenerate, then $S$, $\beta \beta^T S$ and $ [\beta \beta^T]^2 S$ are coplanar only if $\exists i \text{ s.t. } S_i = 0$, which implies that $S \in \text{span}(\hat{e}_j^L,\hat{e}_k^L )$. If $\beta\beta^T$ is two fold degenerate then $\exists j,k \text{ s.t. } S_j \hat{e}_j^L + S_k \hat{e}_k^L$ is a left singular vector of $\beta \beta^T$, which implies that $S\in \text{span}(\hat{e}_i^L,\hat{e}_{jk}^L )$ with $\hat{e}_{jk}^L$ a left singular vector of $\beta$. Finally, if $\beta \beta^T$ is three fold degenerate, then $S$ is a left singular vector of $\beta$. Therefore, regardless of $\beta$, $\exists \hat{e}_x^L,\hat{e}_y^L \text{ s.t. } S \in \text{span}(\hat{e}_x^L,\hat{e}_y^L )$ with $\hat{e}_x^L,\hat{e}_y^L$ left singular vectors of $\beta$. A symmetric argument holds for $P$.
\end{proof}

\begin{proposition}\label{betaP}
If $\Gamma_L$ and $\Gamma_R$ each contain coplanar vectors, then if there exist a pair of left singular vectors $ \hat{e}_x^L,\hat{e}_y^L$ such that $\beta P, S \in \text{span}(\hat{e}_x^L,\hat{e}_y^L )$ then the right singular vectors which span $P$ correspond to the left singular vectors which span $S$. 
\end{proposition}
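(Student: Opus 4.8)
The plan is to work inside a singular value decomposition $\beta = O_L \Sigma O_R^T$ with singular values $\sigma_1,\sigma_2,\sigma_3 \geq 0$ and orthonormal left/right singular vectors $\hat e_i^L,\hat e_i^R$ satisfying $\beta \hat e_i^R = \sigma_i \hat e_i^L$ and $\beta^T \hat e_i^L = \sigma_i \hat e_i^R$. The hypothesis hands us a pair, which I relabel $\hat e_x^L,\hat e_y^L$, with $S \in \text{span}(\hat e_x^L,\hat e_y^L)$ and $\beta P \in \text{span}(\hat e_x^L,\hat e_y^L)$; write $\hat e_z^L$ for the third left singular vector and $\sigma_z$ for its singular value. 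The target I want to reach is stated with a little slack: there is a choice of SVD and an index pair $\{x',y'\}$ such that $S \in \text{span}(\hat e_{x'}^L,\hat e_{y'}^L)$ and $P \in \text{span}(\hat e_{x'}^R,\hat e_{y'}^R)$, i.e. the right singular vectors spanning $P$ carry the same indices as the left singular vectors spanning $S$. Phrasing the goal this way matters because, once $\beta$ is degenerate, neither spanning pair is unique, and the word ``correspond'' only becomes meaningful after one fixes representatives within degenerate eigenspaces.

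First I would dispatch the generic case $\sigma_z \neq 0$. Expanding $P = \sum_i p_i \hat e_i^R$ gives $\beta P = \sum_i p_i \sigma_i \hat e_i^L$, so $\beta P \perp \hat e_z^L$ forces $p_z \sigma_z = 0$, hence $p_z = 0$ and $P \in \text{span}(\hat e_x^R,\hat e_y^R)$; we are done with $\{x',y'\}=\{x,y\}$. This step is insensitive to a possible degeneracy $\sigma_x=\sigma_y$: whichever orthonormal pair one picks inside that eigenspace of $\beta\beta^T$ to serve as $\hat e_x^L,\hat e_y^L$, the vectors $\beta^T \hat e_x^L/\sigma_x$ and $\beta^T \hat e_y^L/\sigma_y$ form a legitimate corresponding right pair, and $P$ still lies in their span.

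The real content is the case $\sigma_z = 0$, where $\text{span}(\hat e_x^L,\hat e_y^L)$ is exactly the range of $\beta$, so $\beta P \in \text{span}(\hat e_x^L,\hat e_y^L)$ holds automatically and tells us nothing. Here I would lean on the coplanarity of $\Gamma_R$ and invoke Proposition \ref{span} to obtain $P \in \text{span}(\hat e_a^R,\hat e_b^R)$ for some pair of right singular vectors. If $\{a,b\}=\{x,y\}$ we are finished. Otherwise $z\in\{a,b\}$; write $c$ for the other member, which lies in $\{x,y\}$. If $P$ happens to lie along a single one of $\hat e_c^R,\hat e_z^R$ the claim is immediate after relabelling, so assume $P$ genuinely spans $\text{span}(\hat e_c^R,\hat e_z^R)$. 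Now combine: (i) $S \in \text{span}(\hat e_x^L,\hat e_y^L)$ puts both $\beta^T S$ and $\beta^T\beta\beta^T S$ in $\text{span}(\hat e_x^R,\hat e_y^R)$; (ii) $P$ and $\beta^T\beta P$ (the latter along $\hat e_c^R$, since $\hat e_z^R\in\ker\beta$) together span $\text{span}(\hat e_c^R,\hat e_z^R)$. Coplanarity of $\Gamma_R$ then forces $\beta^T S \in \text{span}(\hat e_x^R,\hat e_y^R) \cap \text{span}(\hat e_c^R,\hat e_z^R) = \text{span}(\hat e_c^R)$, i.e. $S$ points along the single left singular vector $\hat e_c^L$. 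But a vector along $\hat e_c^L$ is equally well regarded as spanned by $\{\hat e_c^L,\hat e_z^L\}$, whose corresponding right pair is $\{\hat e_c^R,\hat e_z^R\}$ — and $P$ lies in that span, so $\{x',y'\}=\{c,z\}$ works. Finally, when $\beta$ has two or three vanishing singular values, $\ker\beta$ and $\ker\beta^T$ are at least two-dimensional with completely free singular vectors, so one simply takes $\hat e_y^L \in \text{span}(\hat e_x^L,S)$ and $\hat e_y^R \in \text{span}(\hat e_x^R,P)\cap\ker\beta$; both have singular value $0$, hence correspond, and $S$, $P$ lie in the respective spans by construction.

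The \textbf{main obstacle} is organizational rather than computational: ``the right singular vectors which span $P$ correspond to the left singular vectors which span $S$'' is unambiguous only when $\beta$ has distinct nonzero singular values, so the proof must be phrased around a target that permits re-selecting spanning pairs within degenerate eigenspaces, and one must verify that in the cases where $\beta P \in \text{span}(\hat e_x^L,\hat e_y^L)$ degenerates (precisely $\sigma_z = 0$) it is the coplanarity hypotheses, via Proposition \ref{span} applied to $\Gamma_R$ together with the $\beta^T S$-members of $\Gamma_R$, that carry the argument. I would therefore fix the SVD convention and the slack target statement at the very start, and then enumerate $\text{rank}(\beta) \in \{3,2,1\}$ — with the further sub-splits according to which nonzero singular values coincide — as explicit subcases all funnelling into that conclusion.
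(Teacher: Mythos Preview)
Your argument is correct and follows essentially the same strategy as the paper: work in an SVD basis, use the coplanarity of $\Gamma_R$ (specifically the invariants $I_{11}$ and $I_{16}$) to pin down $P$ relative to the right singular vectors, and handle the rank-deficient cases by exploiting the freedom to re-select singular vectors within degenerate eigenspaces.

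The organization differs slightly. The paper's primary case split is on whether $\beta^T\beta P$ and $[\beta^T\beta]^2 P$ are linearly independent; when they are, $I_{11}=0$ immediately places $P$ in $\text{span}(\hat e_x^R,\hat e_y^R)$, and when they are not the paper introduces a new basis $\{\hat e_P,\hat e_0,\hat e_\perp\}$ adapted to $P$ and then uses $I_{16}=0$ to force $S_\perp=0$ (rank~2) or to rotate inside the kernel (rank~1). Your primary split on $\sigma_z$ is cleaner in the generic case---the one-line deduction $p_z\sigma_z=0$ bypasses any appeal to $I_{11}$ when $\sigma_z\neq 0$---and in the $\sigma_z=0$ case you invoke Proposition~\ref{span} as a black box rather than reproving it, then use $I_{16}$ via the same intersection argument the paper uses. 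Both routes arrive at the same conclusion by the same mechanism; yours just front-loads the rank distinction rather than discovering it through the proportionality test.
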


\begin{proof}
Let $S \in \text{span}(\hat{e}_x^L, \hat{e}_y^L)$. Assume $\beta P$ is in the same span as $S$, then $\beta^T \beta P, [\beta^T \beta]^2 P \in \text{span}(\hat{e}_x^R, \hat{e}_y^R)$. If $\beta^T \beta P$ and $[\beta^T \beta]^2 P$ are not proportional to one another then $P \in \text{span}(\beta^T\hat{e}_x^L, \beta^T\hat{e}_y^L)$. If $\beta^T \beta P$ and $[\beta^T \beta]^2 P$ are proportional to one another then they are proportional to a right singular vector of $\beta$, $\hat{e}_P^R$. If $\beta$ is full rank then $\beta^T \beta$ is invertible and $P \propto \hat{e}_P^R \in \text{span}(\hat{e}_x^R, \hat{e}_y^R)$. If $\beta$ is not full rank and $P \not\propto \hat{e}_P^R$ then $P \in \text{span}(\hat{e}_P^R, \hat{e}_0^R)$ with $\beta \hat{e}_0^R=0$. Consider the left singular vector expansion of $S$ into an orthonormal basis: $S=S_P\hat{e}_P^L + S_0\hat{e}_0^L+ S_{\perp} \hat{e}_{\perp}^L$. We require the following three vectors to be coplanar 
$$ \beta^T S\in \text{span}(\hat{e}_P^R,\hat{e}_{\perp}^R)$$
$$P\in \text{span}(\hat{e}_P^R, \hat{e}_0^R)$$
$$\beta^T \beta P \in \text{span}(\hat{e}_P^R).$$
The only way these three vectors can be coplanar is if $S_{\perp}=0$ or $\beta^T \hat{e}_{\perp}^L =0$. If $\beta$ is rank 1 and $\beta^T \hat{e}_{\perp}^L =0$ then $S_0 \hat{e}^L + S_{\perp} \hat{e}_{\perp}^L$ constitutes a left singular vector corresponding to $\hat{e}_0^R$, since $\beta \hat{e}_0^R = 0 (S_0 \hat{e}_0^L + S_{\perp} \hat{e}_{\perp}^L)$ and $\beta^T(S_0 \hat{e}^L + S_{\perp} \hat{e}_{\perp}^L)=0 \hat{e}_0^R$ and so $S$ and $P$ share a common pair of singular vectors. If $\beta$ is rank 2 then $S_{\perp}=0$ and $S$ and $P$ share common left right singular vectors $\hat{e}_P$ and $\hat{e}_0$. \end{proof}

\begin{proposition} \label{realH1}
Given a two-qubit Hamiltonian $H$ parametrized by $\beta$, $S$ and $P$.
$H$ is real under local unitary rotations iff $S$ is in the span of at most two left singular vectors of $\beta$ and $P$ is in the span of the corresponding right singular vectors.
\end{proposition}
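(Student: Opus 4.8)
The plan is to prove both directions by reducing ``$H$ is real'' to a concrete normal form and then moving back and forth between that normal form and the singular-value statement. Recall from Section~\ref{convRep} that local unitaries act on $(\beta,S,P)$ as $\beta\mapsto O_1\beta O_2^T$, $S\mapsto O_1S$, $P\mapsto O_2P$ with $O_1,O_2\in SO(3)$. In the computational basis $H$ is real precisely when $a_{XY}=a_{YX}=a_{YZ}=a_{ZY}=0$ and $a_{YI}=a_{IY}=0$, i.e.\ when $\beta$ is block diagonal with blocks indexed by $\{X,Z\}$ and by $\{Y\}$, and both $S$ and $P$ have vanishing $Y$-component. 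Thus ``$H$ real under local unitaries'' means exactly: there exist $O_1,O_2\in SO(3)$ with $O_1\beta O_2^T$ of this block form and $(O_1S)_Y=(O_2P)_Y=0$. The whole argument then rests on the elementary observation that an $SO(3)$ congruence $\beta\mapsto O_1\beta O_2^T$ carries a (signed) singular value decomposition $\beta=\sum_i\sigma_i\hat u_i\hat v_i^T$ to $\sum_i\sigma_i(O_1\hat u_i)(O_2\hat v_i)^T$, hence sends left/right singular vectors of $\beta$ to left/right singular vectors of the image and preserves all inner products among $\{S,P\}$ and the singular vectors.

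\textbf{Forward direction.} Take $O_1,O_2\in SO(3)$ realizing the normal form and set $\beta':=O_1\beta O_2^T$, $S':=O_1S$, $P':=O_2P$. Since $\beta'$ decomposes into its $\{X,Z\}$ block $B$ plus the scalar $a'_{YY}$ in the $(Y,Y)$ slot, an SVD of the $2\times2$ matrix $B$ taken inside the $X$-$Z$ plane, together with the vector $\hat e_Y$, gives a full SVD of $\beta'$: two of its left (resp.\ right) singular vectors lie in the $X$-$Z$ plane, the third is $\hat e_Y$. Because $(S')_Y=(P')_Y=0$, the vectors $S'$ and $P'$ lie in the $X$-$Z$ plane, hence each lies in the span of the two planar singular vectors of $\beta'$. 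Pulling back by $O_1^T,O_2^T$ and invoking the invariance noted above shows that $S$ lies in the span of two left singular vectors of $\beta$ and $P$ in the span of the corresponding right singular vectors.

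\textbf{Converse.} Here I would construct the rotations explicitly. Given left singular vectors $\hat u_1,\hat u_2$ with $S\in\mathrm{span}(\hat u_1,\hat u_2)$ and corresponding right singular vectors $\hat v_1,\hat v_2$ with $P\in\mathrm{span}(\hat v_1,\hat v_2)$, complete to full signed SVDs $\beta=\sum_{i=1}^3\sigma_i\hat u_i\hat v_i^T$; the remaining vectors $\hat u_3,\hat v_3$ are forced to be orthogonal to $\hat u_1,\hat u_2$ (resp.\ $\hat v_1,\hat v_2$) and are automatically singular vectors, since the orthocomplement of a $\beta\beta^T$-invariant plane is $\beta\beta^T$-invariant. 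Choose $O_1\in SO(3)$ with $O_1\hat u_1=\hat e_X$ and $O_1\hat u_2=\hat e_Z$ — possible because $(\hat u_1,\hat u_2,\hat u_1\times\hat u_2)$ and $(\hat e_X,\hat e_Z,\hat e_X\times\hat e_Z)$ are both positively oriented orthonormal frames — and analogously $O_2\in SO(3)$ with $O_2\hat v_1=\hat e_X$, $O_2\hat v_2=\hat e_Z$. Then $O_1S$ and $O_2P$ lie in the $X$-$Z$ plane, so their $Y$-components vanish, while $O_1\beta O_2^T=\sigma_1\hat e_X\hat e_X^T+\sigma_2\hat e_Z\hat e_Z^T\pm\sigma_3\hat e_Y\hat e_Y^T$ is diagonal (because $O_1\hat u_3=\pm\hat e_Y$ and $O_2\hat v_3=\pm\hat e_Y$), hence of the required block form. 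So in this basis $H$ is real.

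\textbf{Main obstacle.} The real work is the bookkeeping when $\beta$ is degenerate or rank-deficient, where singular vectors are non-unique. One must use that the statement is existential: in the forward direction one is free to pick the two planar singular vectors inside the $X$-$Z$ plane, which the $\{X,Z\}$-block SVD does automatically, while in the converse one is simply handed a working choice. One must also check that the notion of ``the right singular vector corresponding to a given left singular vector'' still makes sense when the associated singular value is zero (take any unit vector of $\ker\beta$ orthogonal to the others, using symmetry of $\beta^T\beta$ to confirm it is a singular vector). Keeping both rotations in $SO(3)$ rather than $O(3)$ is painless, since two orthonormal vectors always extend to a positively oriented frame; the price is only the harmless sign $\pm\sigma_3$ in the normal form. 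The analogous casework already appears in the proofs of Propositions~\ref{span} and~\ref{betaP}, which can be used to streamline the degenerate subcases.
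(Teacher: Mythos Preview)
Your proof is correct and follows essentially the same route as the paper's: in both directions one passes to a normal form where $\beta$ is diagonal and $S,P$ have vanishing $Y$-components, then reads off the span condition from the singular vectors (forward) or constructs the diagonalizing $SO(3)$ pair from a given singular-vector frame (converse). Your version is somewhat more explicit about the intermediate block-diagonal step, the $SO(3)$ versus $O(3)$ bookkeeping, and the degenerate/rank-deficient cases, but these are refinements of the same argument rather than a different approach.
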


\begin{proof}
If $H$ is real under local unitary rotations, then there exists a pair of rotations $O_L, O_R \in SO(3)$ s.t. $$
O_L \beta O_R^T =\beta'=\left(\begin{array}{ccc}
\lambda_1 &0&0 \\
0&\lambda_2 &0 \\
0&0&\lambda_3
 \end{array}\right),$$ $$O_L S= S'=\left( \begin{array}{c}
 S_1' \\
 0\\
 S_3'
 \end{array} \right),$$
$$O_R P= P'=\left( \begin{array}{c}
 P_1' \\
 0\\
 P_3' 
 \end{array} \right).$$
Expressed in the left and right basis $\{\hat{e}_i^L\}$ and $\{\hat{e}_i^R\}$ respectively. The left singular vectors of $\beta$ are $\{O_L^T\hat{e}_i^L\}$ and the corresponding right singular vectors are $\{\sign(\lambda_i)O_R^T\hat{e}_i^R\}$.

It is easy to see that $S$ and $P$ are spanned by at most two left singular vectors of $\beta$   
$$S = S_1' O_L^T \hat{e}_1^L + S_3' O_L^T \hat{e}_3^L$$
\begin{align*} 
P =  &\sign (\lambda_1) P_1' \left(\sign (\lambda_1)O_R^T \vec{e}_1^R \right) + \\ &\sign (\lambda_1) P_3' \left(\sign (\lambda_3)O_R^T \hat{e}_3^R \right)
\end{align*}

That concludes the proof for one direction of the biconditional.

Suppose $S$ and $P$ are spanned by two or fewer left and right (respectively) singular vectors of $\beta$ which share the same singular values. Then $S$ and $P$ are expressible as 
$$S = S_1 \hat{e}_1^L + S_3 \hat{e}_3^L$$ 
$$P = P_1 \hat{e}_1^R + P_3 \hat{e}_3^R$$ 
with
$$\beta^T \hat{e}_i^L = \vert \lambda_i \vert \hat{e}_i^R$$
and
$$ \beta \hat{e}_i^R = \vert \lambda_i \vert \hat{e}_i^L$$

There exists a pair of rotations $O_L, O_R \in SO(3)$ s.t. 

$$
O_L \beta O_R^T =\beta'=\left(\begin{array}{ccc}
\lambda_1 &0&0 \\
0&\lambda_2 &0 \\
0&0&\lambda_3
 \end{array}\right)$$
with
$$S' = O_LS = \left(\begin{array}{c}S_1\\0\\S_3\end{array} \right)$$
and
$$P' = O_RP=\left( \begin{array}{c}\sign(\lambda_1)P_1\\0\\\sign(\lambda_3)P_3\end{array} \right)$$
which corresponds to a real Hamiltonian. Thus H is real under local unitary rotations. 
\end{proof}

\section{Polynomial Inequalities for Stoquasticity of a 2-qubit Hamiltonian }\label{inequalityAppendix}

Given the standard form described in section \ref{sec:ineq}, rotated by $SO(3)$ rotations given by equation \ref{eq:sfRot}, the rotated two-qubit Hamiltonian is described by:

\begin{equation} \label{rotBeta} 
\beta'=\left( \begin{array}{ccc}
a_{XX}'& 0& a_{XZ}'\\
0  & a_{YY}&0\\
a_{ZX}'& 0&a_{ZZ}' 
\end{array} \right)\end{equation}
with
$$
\begin{array}{c}
a_{ZZ}'=\gamma_L \gamma_R \left(\cos(\theta_L) \cos(\theta_R)+ a_{XX} \sin(\theta_L) \sin(\theta_R) \right) \\
a_{ZX}' =\gamma_L (\cos(\theta_L) \sin(\theta_R)-a_{XX} \cos(\theta_R) \sin(\theta_L))  \\
a_{XZ}'=\gamma_R \left(\cos(\theta_R) \sin(\theta_L) -a_{XX}  \cos(\theta_L) \sin(\theta_R) \right) \\
a_{XX}'= \sin(\theta_L) \sin(\theta_R)+a_{XX} \cos(\theta_L) \cos(\theta_R) \\
a_{YY}'=a_{YY} \gamma_R \gamma_L
\end{array}
$$

 and
\begin{equation} \label{rotS} 
S'= \left( \begin{array}{c}
a_{XI} \cos(\theta_L) + a_{ZI} \sin(\theta_L) \\
0\\
a_{ZI} \gamma_L \cos(\theta_L) - a_{XI} \gamma_L \sin(\theta_L) 
\end{array} \right) \end{equation}
\begin{equation} \label{rotP}
P'= \left( \begin{array}{c}
a_{IX} \cos(\theta_R) + a_{IZ} \sin(\theta_R) \\
0\\
a_{IZ} \gamma_R \cos(\theta_R) - a_{IX} \gamma_R \sin(\theta_R) 
\end{array} \right)\end{equation}

Three inequalities must be satisfied in order for $H$ to be a \zmatrix{}:

\begin{multline} \label{ineq-set1}
\sin(\theta_L) \sin(\theta_R) +a_{XX} \cos(\theta_L) \cos (\theta_R) \leq - \vert a_{YY} \vert 
\end{multline}
\begin{multline}\label{ineq-set2}
a_{IX} \cos(\theta_R) + a_{IZ} \sin(\theta_R) \leq \\- \vert \cos(\theta_L) \sin(\theta_R) -a_{XX} \cos(\theta_R) \sin(\theta_L)\vert
\end{multline} 
\begin{multline}\label{ineq-set3}
a_{XI} \cos(\theta_L) + a_{ZI} \sin(\theta_L) \leq \\ - \vert \cos(\theta_R) \sin(\theta_L) -a_{XX} \cos(\theta_L) \sin(\theta_R) \vert 
\end{multline}

Note that the parameters $\gamma_1$ and $\gamma_2$ have fallen out of these inequalities. 

If the above inequalities can be satisfied, then the Hamiltonian is stoquastic. If they cannot be satisfied then the Hamiltonian is not stoquastic, barring a single case as follows. If either $\{ a_{XI} , a_{IX} \}=\{0,0\}$ or $\{ a_{ZI} , a_{IZ}\}=\{0,0\}$, then an additional test of the same form must be applied after first performing a permutation such that the pair of zero coefficients are now in the Y position. For example, if $\{a_{XI}, a_{IX}\}=\{0,0\}$ then one must also perform the above test on the diagonal form:

$$\beta =\left( \begin{array}{ccc}
a_{YY} & 0 & 0 \\
0 &a_{XX} & 0 \\
0 & 0 & a_{ZZ}
\end{array} \right)$$

$$\begin{array}{cc}

S=\left( \begin{array}{c}0\\
0\\ a_{ZI}
 \end{array}\right)
&
P=\left(\begin{array}{c}
0\\
0\\
 a_{IZ} \end{array}\right) 
\end{array}$$

The inequalities \eqref{ineq-set1}, \eqref{ineq-set2} and \eqref{ineq-set3} can be broken up into six inequalities in order to ignore the absolute values:

\begin{itemize}
\item[(1a)] $\sin(\theta_L) \sin(\theta_R) +a_{XX} \cos(\theta_L) \cos (\theta_R) \leq 0 $
\item[(1b)] $(\sin(\theta_L) \sin(\theta_R) +a_{XX} \cos(\theta_L) \cos (\theta_R))^2 \geq  a_{YY} ^2$
\item[(2a)] $a_{IX} \cos(\theta_R) + a_{IZ} sin(\theta_R) \leq 0$
\item[(2b)] $(a_{IX} \cos(\theta_R) + a_{IZ} sin(\theta_R))^2 \geq ( \cos(\theta_L) \sin(\theta_R) -a_{XX} \cos(\theta_R) \sin(\theta_L))^2$
\item[(3a)] $a_{XI} \cos(\theta_L) + a_{ZI} \sin(\theta_L) \leq 0$
\item[(3b)] $(a_{XI} \cos(\theta_L) + a_{ZI} \sin(\theta_L) )^2 \geq (\cos(\theta_R) \sin(\theta_L) -a_{XX} \cos(\theta_L) \sin(\theta_R) )^2 $
\end{itemize}

We wish to map these trigonometric inequalities into polynomial inequalities of two variables by defining $x_1=\tan(\theta_L)$ and $x_2 = \tan(\theta_R)$. However the mapping will depend on the value of $\cos(\theta_1)$ and $\cos(\theta_2)$, and so breaks up into several cases.

\noindent \textbf{Case 1: $\cos(\theta_L), \cos(\theta_R) \neq 0$}

\noindent Let $\delta_L = \text{Sign}(\cos(\theta_L))$ and  $\delta_R = \text{Sign}(\cos(\theta_R))$ then

\begin{itemize}
\item[(1a)] $\delta_L \delta_R (x_1 x_2 +a_{XX}  )\leq 0 $
\item[(1b)] $(x_1 x_2 +a_{XX})^2 \geq a_{YY}^2 (1+x_1^2)(1+x_2^2)$
\item[(2a)] $\delta_R (a_{IX} + a_{IZ} x_2) \leq 0$
\item[(2b)] $(a_{IX} +a_{IZ} x_2)^2(1+x_1^2) \geq (x_2 - a_{XX}x_1)^2 $
\item[(3a)] $ \delta_L (a_{XI} +a_{ZI} x_1) \leq 0$
\item[(3b)] $(a_{XI} +a_{ZI} x_1)^2 (1+x_2^2) \geq (x_1 - a_{XX} x_2)^2 $
\end{itemize}

\noindent \textbf{Case 2: $\cos(\theta_L)=0, \cos(\theta_R) \neq 0$}

\noindent Let $\delta_L = \text{Sign}(\sin(\theta_L))$ and  $\delta_R = \text{Sign}(\cos(\theta_R))$ then
\begin{itemize}
\item[(1a)] $\delta_L \delta_R x_2 \leq 0$
\item[(1b)] $x_2^2/(1+x_2^2) \geq a_{YY}^2$
\item[(2a)] $\delta_R (a_{IX} + a_{IZ} x_2) \leq 0$
\item[(2b)] $(a_{IX} +a_{IZ} x_2)^2 \geq a_{XX}^2 $
\item[(3a)] $\delta_L a_{ZI} \leq 0$
\item[(3b)] $(1+x_2^2)a_{ZI}^2 \geq 1 $
\end{itemize}

\noindent \textbf{Case 3: $\cos(\theta_L) \neq 0, \cos(\theta_R) = 0$}

\noindent Let $\delta_L = \text{Sign}(\cos(\theta_L))$ and  $\delta_R = \text{Sign}(\sin(\theta_R))$ then
\begin{itemize}
\item[(1a)] $\delta_L \delta_R x_1 \leq 0$
\item[(1b)] $x_1^2/(1+x_1^2) \geq a_{YY}^2$
\item[(2a)] $\delta_R a_{IZ} \leq 0$
\item[(2b)] $(1+x_1^2)a_{IZ}^2 \geq 1 $
\item[(3a)] $\delta_L (a_{XI} + a_{ZI} x_1) \leq 0$
\item[(3b)] $(a_{XI} +a_{ZI} x_1)^2 \geq a_{XX}^2 $

\end{itemize}

\noindent \textbf{Case 4: $\cos(\theta_L) = 0, \cos(\theta_R) = 0$}

\noindent Let $\delta_L = \text{Sign}(\sin(\theta_L))$ and  $\delta_R = \text{Sign}(\sin(\theta_R))$ then
\begin{itemize}
\item[(1a)] $\delta_L \delta_R \leq 0$
\item[(1b)] $1 \geq a_{YY}^2$
\item[(2a)] $\delta_R a_{IZ} \leq 0$
\item[(3a)] $\delta_L a_{ZI} \leq 0$
\end{itemize}

All of these inequalities are at most quadratic in any single variable $x_1$ or $x_2$, so analytic solutions to their roots can be constructed and solutions can be found using graphical methods. If there exist values of $x_1$ and $x_2$ such that for some choice of $\delta_L,\delta_R \in \{-1,1\}$ at least one of the above cases is satisfied, then $H$ is stoquastic. If not, then we may say that $H$ is not stoquastic save the exception described above.

This concludes the complete characterization of stoquasticity of a two-qubit Hamiltonian $H$.

\section{Proof of Lemma \ref{cliffDiag}}
\label{permsandrefs}

Consider a set of $3 \times 3$ matrices $\{\beta^{uv}\}$ indexed by the pair $uv$ which are diagonal in some fixed basis $\{\hat{e}_1,\hat{e}_2,\hat{e}_3\}$. Here we establish that whenever there exists a set of $SO(3)$ rotations $\{O_u\}$ mapping to a new set of diagonal matrices $\{O_u\beta^{uv}O_v^T\}$, there also exists a set of signed permutations $\{\sPi_u\}$, each with determinant 1, performing the same effective mapping: $\{O_u\beta^{uv}O_v^T\}=\{\sPi_u\beta^{uv}\sPi_v^T\}$, where by signed permutations we mean matrices of the form $\sPi_u = \Pi_u R_u$, with $\Pi_u$ a permutation and $R_u$ a diagonal matrix with diagonal elements $\pm 1$. Therefore whenever considering the class of sets of $SO(3)$ transformations which preserve the diagonality of the matrices $\{\beta^{uv}\}$, it suffices to consider only the signed permutations. Of course this is trivially true when considering a single diagonal matrix, but is less obvious when considering {\em a set of matrices}.

The proof is achieved by introducing a consistent mapping from a given $SO(3)$ matrix $O$ to a signed permutation $\sPi(O)$, which we call the $\sPi$-reduction. This mapping will have the property that for a diagonal matrix $\beta$, if $O_L\beta O_R^T$ is diagonal, then $O_L\beta O_R^T=\sPi(O_L)\beta \sPi(O_R)^T$. Since the mapping is consistent, it also holds when considering the action of a set of $SO(3)$ rotations $\{O_u\}$ on a set of diagonal matrices $\{\beta^{uv}\}$.
 
Let us first define the $\sPi$-reduction. Consider the determinant of an $SO(3)$ rotation in a particular fixed basis $\{\hat{e}_m\}$:

$$\vert O \vert = \left\vert \left(\begin{array}{ccc}
a&b&c\\
d&e&f\\
g&h&i \end{array} \right)\right \vert= aei - afh - bdi + bfg +cdh -ceg $$

We can associate each of the monomials in the right hand side of the above equation with a matrix which is zero for all terms in the matrix which do not contribute to that monomial, and $\text{sign}(x)$ for all terms $x$ which do contribute to that monomial. So for example:

$$ afh \leftrightarrow \left(\begin{array}{ccc}
\text{sign}(a)&0&0\\
0&0&\text{sign}(f)\\
0&\text{sign}(h)&0 \end{array} \right)$$

We may also impose a lexicographical ordering on the matrices, so that the matrix associated with $bdi$ is lower in the ordering than the matrix associated with $bfg$. This is to ensure that the $D$-reduction has a consistent definition, and the particular choice of ordering is not important. We may then define the $\sPi$-reduction as follows:
\begin{definition}
Given a fixed basis $\{\hat{e}_m\}$, the $\sPi$-reduction $\sPi(O)$ of an $SO(3)$ operator $O$ is a signed permutation associated with the lowest monomial, by lexicographical ordering, in the determinant polynomial of $O$. For all entries in $O$ which do not contribute to the monomial, the corresponding entries in $\sPi(O)$ are $0$, and for all entries $x$ in $O$ which do contribute to the monomial, the corresponding entries in $\sPi(O)$ are $\text{sign}(x)$.
\end{definition}
Note that the determinant of $O$ is $+1$, so there always exists a non-zero monomial. Finally, note that the $\sPi$-reduction is always a signed permutation, but not always an $SO(3)$ matrix.

Let $\beta=\sum_m \lam{}{m} \hat{e}_{m} \hat{e}_{m}^T$ and 
$\tilde{\beta}=\sum_m \lamp{}{m} \hat{e}_{m} \hat{e}_{m}^T$.

\begin{lemma} \label{lamneqlamp} If $O_L \beta O_R^T = \tilde{\beta}$ and 
$\vert \lam{}{m} \vert \neq \vert \lamp{}{n} \vert \rightarrow [O_L]_{nm} = [O_R]_{nm}=0$.
\end{lemma}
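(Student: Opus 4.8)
The plan is to reduce Lemma \ref{lamneqlamp} to the elementary fact that an orthogonal matrix which intertwines two diagonal matrices must have vanishing entries wherever the corresponding diagonal entries differ.

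First I would pass from $\beta$ to its Gram matrices. Since $\beta=\sum_m \lam{}{m}\,\hat{e}_m\hat{e}_m^T$ is diagonal in the basis $\{\hat{e}_m\}$, both $\beta\beta^T$ and $\beta^T\beta$ equal $\sum_m (\lam{}{m})^2\,\hat{e}_m\hat{e}_m^T$, and likewise $\tilde\beta\tilde\beta^T$ and $\tilde\beta^T\tilde\beta$ equal $\sum_n (\lamp{}{n})^2\,\hat{e}_n\hat{e}_n^T$. From the hypothesis $O_L\beta O_R^T=\tilde\beta$ together with $O_RO_R^T=I$ one gets $O_L(\beta\beta^T)O_L^T=\tilde\beta\tilde\beta^T$, i.e.\ $O_L(\beta\beta^T)=(\tilde\beta\tilde\beta^T)O_L$; transposing the hypothesis first gives $O_R\beta^TO_L^T=\tilde\beta^T$, and hence in the same way $O_R(\beta^T\beta)=(\tilde\beta^T\tilde\beta)O_R$.

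Next I would simply read off the $(n,m)$ entry of each of these two intertwining identities. Because the Gram matrices are diagonal, $[O_L(\beta\beta^T)]_{nm}=[O_L]_{nm}(\lam{}{m})^2$ while $[(\tilde\beta\tilde\beta^T)O_L]_{nm}=(\lamp{}{n})^2[O_L]_{nm}$, so $[O_L]_{nm}\big((\lam{}{m})^2-(\lamp{}{n})^2\big)=0$, and identically $[O_R]_{nm}\big((\lam{}{m})^2-(\lamp{}{n})^2\big)=0$. Therefore $\vert\lam{}{m}\vert\neq\vert\lamp{}{n}\vert$, which is equivalent to $(\lam{}{m})^2\neq(\lamp{}{n})^2$, forces $[O_L]_{nm}=[O_R]_{nm}=0$, which is the claim.

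I do not expect a genuine obstacle here; the single point that needs care is that the relevant invariants are the absolute values $\vert\lam{}{m}\vert$ rather than the signed $\lam{}{m}$ — which is exactly why the argument must go through $\beta\beta^T$ and $\beta^T\beta$ instead of $\beta$ itself. The sign ambiguity is precisely the freedom that $O_L$ and $O_R$ can absorb, and it is tracked separately by the signed-permutation bookkeeping in the remainder of the appendix.
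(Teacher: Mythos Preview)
Your proof is correct and is essentially the same as the paper's: both derive the intertwining relations $O_L\beta\beta^T=\tilde\beta\tilde\beta^T O_L$ and $O_R\beta^T\beta=\tilde\beta^T\tilde\beta\, O_R$ from $O_L\beta O_R^T=\tilde\beta$ and orthogonality, then read off the $(n,m)$ entry to obtain $(\lam{}{m})^2[O]_{nm}=(\lamp{}{n})^2[O]_{nm}$. The only cosmetic difference is that the paper writes $\beta^2$ and $\tilde\beta^2$ directly (since $\beta$ and $\tilde\beta$ are symmetric), whereas you distinguish $\beta\beta^T$ from $\beta^T\beta$ before noting they coincide.
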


\begin{proof}
$$\hat{e}_n^T \tilde{\beta}^2 O_R \hat{e}_m = \tilde{\lambda}_n^2 [O_R]_{nm}$$ $$\hat{e}_n^T O_R \beta^2  \hat{e}_m =  \lambda_m^2 [O_R]_{nm}$$ Since $ \tilde{\beta} = \tilde{\beta}^T$ and $\beta = \beta^T$ we have: $$\hat{e}_n^T \tilde{\beta}^2 O_R \hat{e}_m = \hat{e}_n^T O_R \beta^2  \hat{e}_m.$$ Therefore $$\tilde{\lambda}_n^2 [O_R]_{nm} = \lambda_m^2 [O_R]_{nm}.$$ A symmetric argument can be given to show that $$\tilde{\lambda}_n^2 [O_L]_{nm} = \lambda_m^2 [O_L]_{nm}.$$ Therefore if $\vert \lam{}{m} \vert \neq \vert \lamp{}{n}\vert$ then $[O_L]_{nm}=[O_R]_{nm}=0$.
\end{proof}

\begin{lemma} \label{opropo}If $O_L \beta O_R^T = \tilde{\beta}$ and $\lam{}{m} \neq 0 \rightarrow \forall n \;\;[O_L]_{nm} =\text{\rm sign}(\lam{}{m})\text{\rm sign}(\lamp{}{n} )[O_R]_{nm} $.
\end{lemma}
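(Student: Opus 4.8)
The plan is to prove the identity entry by entry. First I would rewrite the hypothesis $O_L\beta O_R^T=\tilde\beta$ as $O_L\beta=\tilde\beta O_R$, using that $O_R\in SO(3)$ so that $(O_R^T)^{-1}=O_R$. Reading off the $(n,m)$ entry of this matrix identity, and using that $\beta=\sum_k\lambda_k\hat e_k\hat e_k^T$ and $\tilde\beta=\sum_k\tilde\lambda_k\hat e_k\hat e_k^T$ are diagonal in the fixed basis $\{\hat e_k\}$, one gets the scalar relation $\lambda_m[O_L]_{nm}=\tilde\lambda_n[O_R]_{nm}$ for all $n,m$. This relation, together with Lemma \ref{lamneqlamp}, is all that is needed.

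Next, I would fix $m$ with $\lambda_m\neq 0$ and an arbitrary $n$, and split into cases according to whether $[O_L]_{nm}$ vanishes. If $[O_L]_{nm}=0$, the scalar relation gives $\tilde\lambda_n[O_R]_{nm}=0$, so either $\tilde\lambda_n=0$ or $[O_R]_{nm}=0$; in both cases the claimed identity $[O_L]_{nm}=\sign(\lambda_m)\sign(\tilde\lambda_n)[O_R]_{nm}$ holds since both sides are zero. If $[O_L]_{nm}\neq 0$, then $\tilde\lambda_n[O_R]_{nm}=\lambda_m[O_L]_{nm}\neq 0$, whence $\tilde\lambda_n\neq 0$ and $[O_R]_{nm}\neq 0$; since $[O_L]_{nm}\neq 0$, the contrapositive of Lemma \ref{lamneqlamp} forces $|\lambda_m|=|\tilde\lambda_n|$. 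Solving the scalar relation for $[O_L]_{nm}$ and writing $\lambda_m=\sign(\lambda_m)|\lambda_m|$, $\tilde\lambda_n=\sign(\tilde\lambda_n)|\tilde\lambda_n|$ then yields $[O_L]_{nm}=(\tilde\lambda_n/\lambda_m)[O_R]_{nm}=\sign(\lambda_m)\sign(\tilde\lambda_n)[O_R]_{nm}$, using $\sign(\lambda_m)^{-1}=\sign(\lambda_m)$.

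I do not expect any real obstacle: the substance is already packed into Lemma \ref{lamneqlamp} (whose proof uses the symmetry of $\beta$ and $\tilde\beta$), and the only care needed is to treat the degenerate cases — where some $\lambda$ or $\tilde\lambda$ or matrix entry vanishes — so that the sign factors on the right-hand side are unambiguous. If one prefers to avoid the case split, an alternative is to transpose the hypothesis, using $\beta^T=\beta$ and $\tilde\beta^T=\tilde\beta$, to also obtain $\lambda_m[O_R]_{nm}=\tilde\lambda_n[O_L]_{nm}$, and combine the two relations; but the argument above is cleaner and never divides by a possibly-zero quantity.
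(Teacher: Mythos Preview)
Your proposal is correct and follows essentially the same route as the paper: rewrite $O_L\beta O_R^T=\tilde\beta$ as $O_L\beta=\tilde\beta O_R$, read off the $(n,m)$ entry to get $\lambda_m[O_L]_{nm}=\tilde\lambda_n[O_R]_{nm}$, and then combine with Lemma~\ref{lamneqlamp} via a case split. The only cosmetic difference is that the paper splits on whether $[O_R]_{nm}$ vanishes rather than $[O_L]_{nm}$, which avoids ever needing the convention $\sign(0)=0$; your sub-case ``$\tilde\lambda_n=0$ but $[O_R]_{nm}\neq 0$'' is in fact impossible by Lemma~\ref{lamneqlamp} (since $\lambda_m\neq 0$), so the point is moot.
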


\begin{proof}
$$\hat{e}_n^T \tilde{\beta}O_R^T \hat{e}_m = \tilde{\lambda}_n [O_R]_{nm}$$
 $$\hat{e}_n^T O_L \tilde{\beta} \hat{e}_m = \lambda_m [O_L]_{nm}$$
  $$\hat{e}_n^T \tilde{\beta}O_R^T \hat{e}_m = \hat{e}_n^T O_L \tilde{\beta} \hat{e}_m$$ therefore 
  $$\tilde{\lambda}_n [O_R]_{nm}=\lambda_m [O_L]_{nm}.$$
 If $[O_R]_{nm}=0$ and $\lambda_m \neq 0$ then $[O_L]_{nm}=0$ and $$[O_L]_{nm} =\text{sign}(\lam{}{m})\text{sign}(\lamp{}{n} )[O_R]_{nm}.$$ Otherwise, if $[O_R]_{nm} \neq 0$ then by lemma \ref{lamneqlamp} $\vert \lam{}{m} \vert = \vert \lamp{}{n} \vert$ and if $\lambda_m \neq 0$ then $$[O_L]_{nm} =\text{sign}(\lam{}{m})\text{sign}(\lamp{}{n} )[O_R]_{nm}.$$
\end{proof}

\begin{lemma}\label{DproptoD}
Let $\sPi_L=\sPi(O_L)$ and $\sPi_R=\sPi(O_R)$. If $O_L \beta O_R^T =\tilde{\beta}$ and $\lam{}{m} \neq 0$ then $[\sPi_L]_{nm}  =\text{\rm sign}(\lam{}{m})\text{\rm sign}(\lamp{}{n}) [\sPi_R]_{nm} $.
\end{lemma}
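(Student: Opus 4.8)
The plan is to reduce everything to a single combinatorial claim: that the permutations $\pi_L,\pi_R\in S_3$ which index the lexicographically lowest nonzero monomials of $\det O_L$ and $\det O_R$ agree on every index $m$ with $\lam{}{m}\neq 0$. Granting this, the lemma is immediate: if $n=\pi_L(m)=\pi_R(m)$ then, using Lemma \ref{opropo}, $[\sPi_L]_{nm}=\sign([O_L]_{nm})=\sign\!\big(\sign(\lam{}{m})\sign(\lamp{}{n})[O_R]_{nm}\big)=\sign(\lam{}{m})\sign(\lamp{}{n})[\sPi_R]_{nm}$, where $\sign(\lam{}{m})\sign(\lamp{}{n})=\pm 1$ because $[O_R]_{nm}\neq 0$ forces $|\lamp{}{n}|=|\lam{}{m}|\neq 0$ by Lemma \ref{lamneqlamp}; and for $n'\neq\pi_L(m)$ both $[\sPi_L]_{n'm}$ and $[\sPi_R]_{n'm}$ are $0$, so the identity holds trivially there.

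First I would extract from Lemmas \ref{lamneqlamp} and \ref{opropo} that, for every column $m$ with $\lam{}{m}\neq 0$, the $m$-th columns of $O_L$ and $O_R$ have the \emph{same} support pattern: indeed $[O_L]_{nm}=\sign(\lam{}{m})\sign(\lamp{}{n})[O_R]_{nm}$ shows that the $m$-th column of $O_L$ equals $\sign(\lam{}{m})\,\tilde D$ times the $m$-th column of $O_R$, where $\tilde D=\mathrm{diag}(\sign(\lamp{}{1}),\sign(\lamp{}{2}),\sign(\lamp{}{3}))$ (in rows $n$ with $\lamp{}{n}=0$ both entries vanish by Lemma \ref{lamneqlamp}, so the value of $\tilde D_{nn}$ there is irrelevant and may be taken to be $+1$).

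Then I would split on the number of zero eigenvalues of $\beta$. If $\beta$ has at most one zero eigenvalue, all but at most one column of $O_L$ and $O_R$ already match in support; for the possible exceptional column $m_0$ (with $\lam{}{m_0}=0$) I would use that in $SO(3)$ that column is the cross product of the other two together with the identity $(\tilde D u)\times(\tilde D v)=\det(\tilde D)\,\tilde D(u\times v)$, valid for any $\pm 1$ diagonal matrix $\tilde D$, to conclude that the $m_0$-th column of $O_L$ is again a $\pm\tilde D$-image of that of $O_R$ and hence has the same support. So $O_L$ and $O_R$ share a support pattern everywhere, $\det O_L$ and $\det O_R$ have the same set of nonzero monomials, the lexicographically lowest one is the same for both, and $\pi_L=\pi_R$. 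If instead $\beta$ has at least two zero eigenvalues, then either $\beta=0$, where the statement is vacuous, or a single $\lam{}{m_0}\neq 0$; in the latter case $O_L\beta O_R^T=\lam{}{m_0}\,\vec c_L\vec c_R^{\,T}$ with $\vec c_L,\vec c_R$ the $m_0$-th columns of $O_L,O_R$, and since a nonzero rank-one matrix $\vec u\vec v^{\,T}$ can be diagonal only when $\vec u$ and $\vec v$ are scalar multiples of one common standard basis vector, orthonormality forces $\vec c_L=\pm\hat e_k$ and $\vec c_R=\pm\hat e_k$ for the \emph{same} $k$; every nonzero monomial of $\det O_L$ and of $\det O_R$ must then use this unique nonzero entry of column $m_0$, so $\pi_L(m_0)=k=\pi_R(m_0)$, which is exactly what is needed for that column.

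I expect the degenerate rank-one case to be the main obstacle: there $O_L$ and $O_R$ genuinely need not share a support pattern on the two columns spanning $\ker\beta$, so the clean "same nonzero monomials" argument breaks down and one must instead pin down the single relevant column directly via the rank-one-is-diagonal observation. The rank-two case is a milder nuisance, needing the elementary cross-product identity in addition to Lemmas \ref{lamneqlamp}–\ref{opropo}; the rank-three case is the easy one where support patterns coincide outright.
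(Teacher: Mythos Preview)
Your proposal is correct and takes a genuinely different route from the paper. The paper fixes a column index $m$ with $\lam{}{m}\neq 0$ and splits into three cases according to how many nonzero entries the $m$-th column of $O_L$ has; in each case it argues that $[\sPi_L]_{nm}\neq 0 \iff [\sPi_R]_{nm}\neq 0$, with the two-nonzero-entry case handled by establishing a common $2\times 2\,/\,1\times 1$ block structure for $O_L$ and $O_R$. You instead split on the rank of $\beta$: in ranks $2$ and $3$ you combine the column-by-column relation from Lemma~\ref{opropo} with the cross-product identity $(\tilde D u)\times(\tilde D v)=\det(\tilde D)\,\tilde D(u\times v)$ to show that \emph{all} columns of $O_L$ and $O_R$ share the same support pattern, so the entire $\sPi$-reductions coincide; in rank $1$ you use the elementary fact that a nonzero rank-one matrix can be diagonal only if both factors are multiples of the same standard basis vector. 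Your argument is slightly more global (it gives $\pi_L=\pi_R$ outright when $\beta$ has rank $\geq 2$, rather than checking a single column at a time) and sidesteps the block-form analysis in the paper's middle case by means of the cross-product trick; the paper's case split has the advantage of being tied directly to the column under consideration and not needing the cross-product identity at all. Both routes rest on the same prior lemmas and arrive at the same reduction to matching supports.
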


\begin{proof}
If $\lam{}{m} \neq 0$ then by lemma \ref{opropo}  $[O_L]_{nm} =\text{sign}(\lam{}{m})\text{sign}(\lamp{}{n}) [O_R]_{nm}$. It follows that if $[\sPi_L]_{nm}\neq 0$ and $[\sPi_R]_{nm}\neq 0$ then $[\sPi_L]_{nm} =\text{sign}(\lam{}{m})\text{sign}(\lamp{}{n}) [\sPi_R]_{nm}$. Furthermore, if $[\sPi_L]_{nm}= 0$ and $[\sPi_R]_{nm}= 0$ then $[\sPi_L]_{nm} =\text{sign}(\lam{}{m})\text{sign}(\lamp{}{n}) [\sPi_R]_{nm}$. So it suffices to show that if $\lam{}{m} \neq 0$ then $[\sPi_L]_{nm}\neq 0$ if and only if $[\sPi_R]_{nm}\neq 0$.

Given a $\lam{}{m} \neq 0$, we may consider three cases, depending on the number of non-zero entries in the $m$th column of $O_L$. Note that every column of an $SO(3)$ matrix has at least one non-zero element, since otherwise it would have determinant 0.

\textbf{Case 1:} The $m$th column of $O_L$ has one non-zero entry. So there must exist a single $n \in \{1,2,3\}$ such that $[O_L]_{nm}\neq 0$ and, by lemma \ref{opropo}, $[O_R]_{nm} \neq 0$. Furthermore, if a column or row of $O_L$ has a single non-zero entry, then $\sPi(O_L)$ must have a non-zero entry at that same position. This can be seen by noting that every monomial in the determinant polynomial of $O_L$ must contain the non-zero entry $[O_L]_{nm}$, and so any matrix returned by $\sPi(O_u)$ must have a non-zero entry at that same position. Therefore $[\sPi_L]_{nm} \neq 0$ and $[\sPi_R]_{nm} \neq 0$.

\textbf{Case 2:} The $m$th column of $O_L$ has two non-zero entries. So there must exist $n_1, n_2 \in \{1,2,3\}$ such that $[O_L]_{n_1 m} \neq 0$, $[O_L]_{n_2 m}\neq 0$. By lemma \ref{opropo}, $[O_R]_{n_1 m} \neq 0$, $[O_R]_{n_2 m}\neq 0$. Furthermore, there exists $n_3\in \{1,2,3\}$ such that $[O_L]_{n_3 m} =0$ and, by lemma \ref{opropo}, $[O_R]_{n_3 m}= 0$. In other words $O_L$ and $O_R$ have the same block form, one $2\times 2$ block, and one $1 \times 1$ block. All elements in the blocks must be non-zero. This implies $O_L$ and $O_R$ have the same non-zero monomials in their determinant polynomial, and thus have the same smallest non-zero monomial. Therefore $[\sPi_L]_{nm} \neq 0$ if and only if $[\sPi_R]_{nm} \neq 0$.

\textbf{Case 3:} The $m$th column of $O_L$ has three non-zero entries. By lemma \ref{lamneqlamp} $\lam{}{n} \neq 0$ for all $n$. Therefore by lemma \ref{opropo} $[O_L]_{nm} = \text{sign}(\lam{}{m})\text{sign}(\lamp{}{n}) [O_R]_{nm}$ for all $n$ and $m$. This implies that $O_L$ and $O_R$ will always have the same non-zero monomials in their determinant polynomials. Therefore $[\sPi_L]_{nm} \neq 0$ if and only if $[\sPi_R]_{nm} \neq 0$.
\end{proof}

\begin{lemma} \label{Dreduction}
If $O_L \beta O_R^T =\tilde{\beta}$ then
$\sPi(O_L) \beta \sPi(O_R)^T= \tilde{\beta}$
\end{lemma}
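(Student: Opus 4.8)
The plan is to compute $\sPi(O_L)\beta\sPi(O_R)^T$ entrywise and match it against $\tilde\beta$. Writing $\sPi_L=\sPi(O_L)$, $\sPi_R=\sPi(O_R)$ and using that $\beta=\sum_m \lam{}{m}\hat e_m\hat e_m^T$ is diagonal, one has $[\sPi_L\beta\sPi_R^T]_{nk}=\sum_m [\sPi_L]_{nm}\,\lam{}{m}\,[\sPi_R]_{km}$. The first observation to exploit is that $\sPi_L$ and $\sPi_R$ are signed permutation matrices, so each row and each column of each has exactly one nonzero entry, equal to $\pm1$; hence this sum has at most one nonzero term, and the whole task reduces to locating, column by column, where those nonzero entries sit.

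The key structural input is Lemma~\ref{DproptoD}: for every $m$ with $\lam{}{m}\neq0$ we have $[\sPi_L]_{nm}=\sign(\lam{}{m})\sign(\lamp{}{n})[\sPi_R]_{nm}$. I would combine this with Lemma~\ref{lamneqlamp} used contrapositively: since a nonzero entry of $\sPi_L$ in position $(n,m)$ forces $[O_L]_{nm}\neq0$, it forces $|\lam{}{m}|=|\lamp{}{n}|$, so in particular $\sign(\lamp{}{n})=\pm1$ there. Together these give the \emph{column-coincidence} statement: for each column $m$ with $\lam{}{m}\neq0$, the unique nonzero entry of that column in $\sPi_L$ and the unique nonzero entry in $\sPi_R$ lie in the same row $n$, and moreover $\lamp{}{n}\neq0$ with $|\lamp{}{n}|=|\lam{}{m}|$. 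Let $N_1$ be the set of rows occupied by the nonzero entries of the columns $\{m:\lam{}{m}\neq0\}$ in $\sPi_L$ (equivalently, by column-coincidence, in $\sPi_R$); because $\sPi_L$ is a permutation matrix, $|N_1|=\operatorname{rank}(\beta)$.

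With that in place I would finish by cases. For $n\neq k$: a surviving term in $[\sPi_L\beta\sPi_R^T]_{nk}$ would require $\lam{}{m}\neq0$ together with $[\sPi_L]_{nm}\neq0$ and $[\sPi_R]_{km}\neq0$, but column-coincidence forces $n=k$, so all off-diagonal entries vanish. For a diagonal entry with $n\in N_1$: the single surviving term is $[\sPi_L]_{nm}\,\lam{}{m}\,[\sPi_R]_{nm}$ for the relevant $m$, which by Lemma~\ref{DproptoD} equals $[\sPi_L]_{nm}^2\,\lam{}{m}\,\sign(\lam{}{m})\sign(\lamp{}{n})=|\lam{}{m}|\sign(\lamp{}{n})=|\lamp{}{n}|\sign(\lamp{}{n})=\lamp{}{n}$, using $|\lam{}{m}|=|\lamp{}{n}|$. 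For a diagonal entry with $n\notin N_1$: the relevant column of $\sPi_L$ has $\lam{}{m}=0$, so the entry is $0$, and it remains to check $\lamp{}{n}=0$. Here I would argue $N_1\subseteq\{n:\lamp{}{n}\neq0\}$ (immediate from the column-coincidence step) and then note that $O_L\beta O_R^T=\tilde\beta$ forces $\operatorname{rank}(\tilde\beta)=\operatorname{rank}(\beta)=|N_1|$, i.e.\ $|\{n:\lamp{}{n}\neq0\}|=|N_1|$, so the inclusion is an equality and $n\notin N_1\Rightarrow\lamp{}{n}=0$. Assembling the three cases gives $\sPi(O_L)\beta\sPi(O_R)^T=\tilde\beta$.

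The main obstacle is precisely the last case: matching the rows of $\sPi_L,\sPi_R$ that "point at" the zero eigenvalues of $\beta$ with the rows where $\tilde\beta$ vanishes. This is the only place where the rank-deficient geometry bites, and it is closed by the rank equality $\operatorname{rank}\beta=\operatorname{rank}\tilde\beta$ combined with the one-sided inclusion $N_1\subseteq\{n:\lamp{}{n}\neq0\}$; every other step is a direct consequence of Lemmas~\ref{lamneqlamp} and~\ref{DproptoD} and the fact that $\sPi$-reductions are signed permutation matrices.
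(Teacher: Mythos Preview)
Your proof is correct and follows essentially the same approach as the paper: expand $\sPi_L\beta\sPi_R^T$ entrywise, use Lemma~\ref{DproptoD} for column-coincidence, and use Lemma~\ref{lamneqlamp} to replace $|\lam{}{m}|$ by $|\lamp{}{n}|$.

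The one notable difference is your treatment of the case $n\notin N_1$, which you flag as the main obstacle and resolve via a rank-counting argument ($\operatorname{rank}\tilde\beta=\operatorname{rank}\beta$ together with $N_1\subseteq\{n:\lamp{}{n}\neq0\}$). The paper avoids this detour entirely: Lemma~\ref{lamneqlamp} gives $|\lam{}{m}|=|\lamp{}{n}|$ whenever $[O_L]_{nm}\neq0$, with no hypothesis that $\lam{}{m}\neq0$. Since $[\sPi_L]_{nm}\neq0$ forces $[O_L]_{nm}\neq0$, the substitution $|\lam{}{m}|\to|\lamp{}{n}|$ is valid in every surviving term of $\sum_{n,m}\sign(\lamp{}{n})|\lam{}{m}|([\sPi_L]_{nm})^2\,\hat e_n\hat e_n^T$, including those with $\lam{}{m}=0$, which then automatically yield $\lamp{}{n}=0$. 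After that the row-normalization $\sum_m([\sPi_L]_{nm})^2=1$ finishes the computation uniformly in $n$, so no separate case analysis or rank bookkeeping is needed.
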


\begin{proof}
Let $\sPi_x = \sPi(O_x)$.

$$\sPi_L \beta \sPi_R^T=\sum_{nn'}\sum_m \lam{}{m} [\sPi_L]_{nm} [\sPi_R]_{n'm} \hat{e}_{n}\hat{e}_{n'}^T$$

By lemma \ref{DproptoD}, if $\lam{}{m}\neq 0$ and $[\sPi_L]_{nm} \neq 0$ then $[\sPi_R]_{nm} \neq 0$. Since for any $m$ $[\sPi_R]_{nm}$ may only be nonzero for one $n$ it follows that if $\lam{}{m}\neq 0$ and $[\sPi_L]_{nm} \neq 0$ then $[\sPi_R]_{n'm}=[\sPi_R]_{nm} \delta_{nn'}=S(\lam{}{m})S(\lamp{}{n})[\sPi_L]_{nm} \delta_{nn'}$. Therefore

$$\sPi_L \beta \sPi_R^T=\sum_{n}\sum_m {\rm sign}(\lam{}{m}){\rm sign}(\lamp{}{n})\lam{}{m} ([\sPi_L]_{nm})^2  \hat{e}_n\hat{e}_{n}^T$$

$$\sPi_L \beta \sPi_R^T=\sum_{n}\sum_m {\rm sign}(\lamp{}{n})\vert \lam{}{m} \vert ([\sPi_L]_{nm})^2 \hat{e}_n\hat{e}_{n}^T$$

If $[\sPi_L]_{nm} \neq 0$ then $[O_L]_{nm}\neq 0$ 
and so $\vert \lam{}{m}\vert= \vert \lamp{}{n}\vert$. Therefore

$$[\sPi_L] \beta \sPi_R^T=\sum_{n}\sum_m {\rm sign}(\lamp{}{n})\vert \lamp{}{n}\vert ([\sPi_L]_{nm})^2 \hat{e}_n\hat{e}_{n}^T$$

$$\sPi_L \beta \sPi_R^T=\sum_{n} \lamp{}{n} \sum_m ([\sPi_L]_{nm})^2  \hat{e}_n\hat{e}_{n}^T$$

Since every column of $\sPi_L$ has only one element which is proportional to 1, $\sum_m ([\sPi_L]_{nm})^2=1$ for all $n$. Therefore

$$\sPi_L \beta \sPi_R^T=\sum_{n} \lamp{}{n} \hat{e}_n\hat{e}_{n}^T= O_L \beta O_R^T$$
\end{proof}

\section{Proof of Theorem \ref{thrm:NPcomplete}}\label{realnessIsNPComplete}

Clearly, the realness-under-Clifford rotations problem is in NP, since a prover can give the verifier the single-qubit Clifford rotations which make the terms in $H$ real. To prove the problem to be NP-hard, we show that if one can efficiently solve the realness-under-Cliffords problem, then one can also efficiently solve an NP-complete subclass of the exact cover problem, called restricted exact cover by three-sets.

 \begin{definition}
 \textbf{Restricted Exact Cover by Three-Sets (RXC3)}: Given a finite set of $3N$ elements $E=\{e_1, e_2... e_{3N} \}$, and a collection $R=\{S_1,S_2,\ldots\}$ of 3-element subsets of $E$, i.e. $|S_i|=3$, with the restriction that every element in $E$ appears in \textbf{exactly} three sets $S_i \in R$. Find a subcollection $R' \subset R$ such that every element in $E$ occurs in exactly one member of $R'$.
 \end{definition}
 
 It has been proven in \cite{Gonzalez1985} that RXC3 is NP-complete.
 
 Consider an instance of an RXC3 problem $A=(E, R)$. We can construct a corresponding Hamiltonian $H_A$ as follows. Let each element in $E$ correspond to a qubit. We can construct $H_A$ iteratively, starting with $H_A=0$. For every $e_i \in E$, define the pool of available Pauli operators $P_i =(X_i, Y_i, Z_i)$, 
where $X_i$ is a Pauli-$X$ on the $i$th qubit, and identity on all others. For every subset $(e_i, e_j, e_k) \in R$, take Paulis $\sigma_i$, $\sigma_j$, and $\sigma_k$ from the respective pools $P_i$, $P_j$, and $P_k$, reducing the pool sizes by one. 
Then let $H_A \rightarrow H_A + \sigma_i \sigma_j +\sigma_j \sigma_k + \sigma_i  \sigma_k$. Because of the restriction in RXC3 that every element $e_i \in E$ appears in exactly 3 subsets in $R$, we can be assured that our pools $P_i$ will always be able to supply a Pauli operator throughout this procedure, and will be exhausted by the end. Unless a particular ordering is imposed on $R$ and $P_i$, the definition of the above procedure is ambiguous. Any choice of ordering fixes the mapping. Let us suppose we have fixed a consistent ordering procedure for $R$ and every $P_i$.

For the sake of clarity, we include an example of an RX3C problem, and its corresponding Hamiltonian, with the pools ordered as $\{X,Y,Z\}$.

\noindent\textbf{RXC3 form}: 
\begin{equation}\begin{array}{cc} \label{RXC3Form}
E=&\{e_1,e_2,e_3,e_4,e_5,e_6\}\\ 
R=&(e_1,e_2,e_3),\\&(e_3,e_4,e_5),\\&(e_2,e_3,e_5),\\&(e_1,e_4,e_6),\\&(e_2,e_5,e_6),\\&(e_1,e_4,e_6)
\end{array}
\end{equation}
\noindent\textbf{Hamiltonian form}:
\begin{equation}\label{HamForm}\begin{array}{cccc}
H_A=&X_1 X_2 &+ X_2 X_3&+X_1 X_3\\
&+Y_3 X_4  &+X_4 X_5 &+Y_3 X_5\\
&+Y_2Z_3 &+Z_3 Y_5  &+Y_2  Y_5\\
&+Y_1  Y_4 &+ Y_4  X_6  &+ Y_1  X_6\\
&+Z_2 Z_5  &+Z_5 Y_6  &+Z_2  Y_6\\
&+Z_1 Z_4 &+ Z_4 Z_6 &+ Z_1 Z_6
\end{array}
\end{equation}

Given an RXC3 problem $A$, we can use the above procedure to produce a Hamiltonian $H_A$. We will now show that if we can efficiently determine whether or not $H_A$ can be made real under Clifford operations, then we can efficiently determine the answer to $A$.

Suppose $H_A$ can be made real under Clifford operations, then there exists some permutations on the qubits such that the only terms in the new Hamiltonian $H'_A$ which contain $Y$ operators are of the form $Y_i Y_j$. Recall that $H_A$ is composed exclusively of triplets of terms of the form $\sigma_i \sigma_j +\sigma_j \sigma_k + \sigma_i  \sigma_k$, with $\sigma_i$, $\sigma_j$, and $\sigma_k$ not appearing in any other terms. Therefore it must be the case that any of the terms in $H'_A$ of the form $Y_i Y_j$ can be grouped into triplets of the form $T_{ijk}=Y_i Y_j +Y_j Y_k + Y_i  Y_k$, with $Y_i$, $Y_j$, and $Y_k$ not appearing in any other terms. Every $T_{ijk}$ corresponds exactly with a $3$-element subset $(e_i, e_j, e_k) \in R$. Since $Y_i$, $Y_j$, and $Y_k$ do not appear in any other terms, it follows that the set of $3$-element subsets in $R$ associated with all such triplets $T_{ijk}$ form the exact covering $R'$. So if $H_A$ can be made real under single qubit Clifford rotations, then $A$ has an exact covering.

Suppose $A$ has an exact covering $R'$. For every $3$-element subset $(e_i, e_j, e_k) \in R'$ there exists a term $\sigma_i \sigma_j +\sigma_j \sigma_k + \sigma_i  \sigma_k$ in $H_A$, with $\sigma_i$, $\sigma_j$, and $\sigma_k$ not appearing in any other terms. Therefore we may choose to apply a sequence of permutations on the $i$, $j$ and $k$ qubits in $H_A$ so that $\sigma_i \rightarrow Y_i$, $\sigma_j \rightarrow Y_j$ and $\sigma_k \rightarrow Y_k$ with the guarantee that $Y_i$, $Y_j$, and $Y_k$ will then not appear in any other terms in our new Hamiltonian. We may continue in this fashion for every $3$-element subset in $R'$ without disturbing any of the previous qubits we have acted upon, and ultimately acting on every qubit. Therefore at the end of the procedure we will have produced a Hamiltonian $H'_A$ for which the only terms containing $Y$ operators are those of the form $Y_i Y_j$, and therefore is real. So if $A$ has an exact covering then $H_A$ can be made real by single qubit Clifford rotations.

\end{document}